\PassOptionsToPackage{unicode}{hyperref}
\PassOptionsToPackage{hyphens}{url}
\PassOptionsToPackage{dvipsnames,svgnames,x11names}{xcolor}
\documentclass[
  11pt,letterpaper]{article}
\usepackage{xcolor}
\usepackage{amsmath,amssymb}
\usepackage{amsthm}
\usepackage{setspace}
\newtheorem{theorem}{Theorem}[section]

\newtheorem{proposition}{Proposition}[section]
\newtheorem{corollary}{Corollary}[theorem]
\newtheorem{lemma}[theorem]{Lemma}
\newtheorem{assumption}{Assumption}[section]
\theoremstyle{remark}
\newtheorem{remark}{Remark}
\DeclareMathOperator*{\argmin}{arg\,min}

\usepackage{algorithm}
\usepackage{algpseudocode}

\usepackage{iftex}
\ifPDFTeX
  \usepackage[T1]{fontenc}
  \usepackage[utf8]{inputenc}
  \usepackage{textcomp} 
\else 
  \usepackage{unicode-math} 
  \defaultfontfeatures{Scale=MatchLowercase}
  \defaultfontfeatures[\rmfamily]{Ligatures=TeX,Scale=1}
\fi
\usepackage{lmodern}
\ifPDFTeX\else
\fi
\IfFileExists{upquote.sty}{\usepackage{upquote}}{}
\IfFileExists{microtype.sty}{
  \usepackage[]{microtype}
  \UseMicrotypeSet[protrusion]{basicmath} 
}{}
\makeatletter
\@ifundefined{KOMAClassName}{
  \IfFileExists{parskip.sty}{%
    \usepackage{parskip}
  }{
    \setlength{\parindent}{0pt}
    \setlength{\parskip}{6pt plus 2pt minus 1pt}}
}{
  \KOMAoptions{parskip=half}}
\makeatother
\makeatletter
\ifx\paragraph\undefined\else
  \let\oldparagraph\paragraph
  \renewcommand{\paragraph}{
    \@ifstar
      \xxxParagraphStar
      \xxxParagraphNoStar
  }
  \newcommand{\xxxParagraphStar}[1]{\oldparagraph*{#1}\mbox{}}
  \newcommand{\xxxParagraphNoStar}[1]{\oldparagraph{#1}\mbox{}}
\fi
\ifx\subparagraph\undefined\else
  \let\oldsubparagraph\subparagraph
  \renewcommand{\subparagraph}{
    \@ifstar
      \xxxSubParagraphStar
      \xxxSubParagraphNoStar
  }
  \newcommand{\xxxSubParagraphStar}[1]{\oldsubparagraph*{#1}\mbox{}}
  \newcommand{\xxxSubParagraphNoStar}[1]{\oldsubparagraph{#1}\mbox{}}
\fi
\makeatother

\usepackage{longtable,booktabs,array,multirow}
\usepackage{calc} 
\usepackage{etoolbox}
\makeatletter
\patchcmd\longtable{\par}{\if@noskipsec\mbox{}\fi\par}{}{}
\makeatother
\IfFileExists{footnotehyper.sty}{\usepackage{footnotehyper}}{\usepackage{footnote}}
\makesavenoteenv{longtable}
\usepackage{graphicx}
\usepackage{placeins}
\usepackage{tikz}
\usetikzlibrary{arrows.meta, positioning, shapes.geometric, calc}
\makeatletter
\newsavebox\pandoc@box
\newcommand*\pandocbounded[1]{
  \sbox\pandoc@box{#1}%
  \Gscale@div\@tempa{\textheight}{\dimexpr\ht\pandoc@box+\dp\pandoc@box\relax}%
  \Gscale@div\@tempb{\linewidth}{\wd\pandoc@box}%
  \ifdim\@tempb\p@<\@tempa\p@\let\@tempa\@tempb\fi
  \ifdim\@tempa\p@<\p@\scalebox{\@tempa}{\usebox\pandoc@box}%
  \else\usebox{\pandoc@box}%
  \fi%
}
\def\fps@figure{htbp}
\makeatother

\usepackage[authoryear,round]{natbib}
\usepackage[margin=1in]{geometry}
\makeatletter
\@ifpackageloaded{caption}{}{\usepackage{caption}}
\AtBeginDocument{%
\ifdefined\contentsname
  \renewcommand*\contentsname{Table of contents}
\else
  \newcommand\contentsname{Table of contents}
\fi
\ifdefined\listfigurename
  \renewcommand*\listfigurename{List of Figures}
\else
  \newcommand\listfigurename{List of Figures}
\fi
\ifdefined\listtablename
  \renewcommand*\listtablename{List of Tables}
\else
  \newcommand\listtablename{List of Tables}
\fi
\ifdefined\figurename
  \renewcommand*\figurename{Figure}
\else
  \newcommand\figurename{Figure}
\fi
\ifdefined\tablename
  \renewcommand*\tablename{Table}
\else
  \newcommand\tablename{Table}
\fi
}
\@ifpackageloaded{float}{}{\usepackage{float}}
\floatstyle{ruled}
\@ifundefined{c@chapter}{\newfloat{codelisting}{h}{lop}}{\newfloat{codelisting}{h}{lop}[chapter]}
\floatname{codelisting}{Listing}

\makeatother
\makeatletter
\@ifpackageloaded{caption}{}{\usepackage{caption}}
\@ifpackageloaded{subcaption}{}{\usepackage{subcaption}}
\makeatother
\usepackage{bookmark}
\IfFileExists{xurl.sty}{\usepackage{xurl}}{} 
\hypersetup{
  pdftitle={Adaptive Probe Design for Auditing AI Agents},
  pdfauthor={Zhenghan Li, Xuan Bi, Jie Ding},
  pdfkeywords={AI safety, AI agent evaluation, offline policy evaluation, importance sampling, reinforcement learning},
  colorlinks=true,
  linkcolor={black},
  filecolor={black},
  citecolor={black},
  urlcolor={black},
  pdfcreator={LaTeX}}

\newcommand{\Var}{\mathrm{Var}}

\begin{document}

\def\spacingset#1{\renewcommand{\baselinestretch}%
{#1}\small\normalsize} \spacingset{1}


\date{}
\title{\bf Adaptive Probe Design for Auditing AI Agents}
\author{
Zhenghan Li\textsuperscript{1} \qquad
Xuan Bi\textsuperscript{2} \qquad
Jie Ding\textsuperscript{1}\\[0.5em]
\small \textsuperscript{1}School of Statistics, University of Minnesota\\
\small \textsuperscript{2}Carlson School of Management, University of Minnesota
}
\maketitle

\begin{abstract}
How can we audit an AI agent before it is safe and practical to deploy? A central task is to estimate its performance under the intended deployment policy (the target policy). However, historical observations may provide insufficient coverage of relevant states and actions, limiting evaluation reliability. We propose CAD (Flow-Constrained Audit Design), a framework that combines historical logs with a limited budget of targeted audit trials (probes). The key idea is to design probes that minimize the mismatch between the combined state–action visitation distribution (occupancy) of the logs and probes and that of the target policy. We formulate this design as a constrained convex optimization problem, directing probes toward regions important to the target policy but underrepresented in historical logs. We develop an adaptive algorithm that iteratively estimates environment dynamics and occupancies to update the probe policy. We link occupancy mismatch to bounds on mean squared error and asymptotic variance, and establish error bounds for CAD's adaptive estimator in finite and continuous state spaces. Experiments in Gridworld, LunarLander, and autonomous driving using real-world logs from the Waymo Open Motion Dataset show improved estimation accuracy over alternative data collection and probe designs.
\end{abstract}

\noindent%
{\it Keywords:} AI safety, AI agent evaluation, offline policy evaluation, importance sampling, reinforcement learning.

\medskip
\spacingset{1}

\section{Introduction}

As autonomous agents powered by large language models and reinforcement
learning techniques are being deployed at scale,
regulators, developers, and the public frequently encounter the same question:
how do we know whether these AI agents actually behave as intended?
Answering this question is critical in high-stakes domains such as autonomous driving, robotic surgery, and financial decision-making, where a gap between a system's intended and actual behavior could lead to consequential damages~\citep{kiran2021deep,liu2024survey}.
This has motivated the urgent demand for \emph{AI auditing}, which we define here as quantifying how well an AI agent would perform under its intended deployment conditions.
In light of this urgency, government regulations such as the EU AI Act require conformity assessments for high-risk AI systems before they are placed on the market \citep{euaiact2024},
and the NIST AI Risk Management Framework requires ongoing performance
monitoring of AI systems after deployment \citep{nist2023ai}.

From a technical perspective, the problem of AI auditing is to quantitatively evaluate an AI system's outputs in a variety of environments.
However, the machinery for AI auditing remains limited and largely underdeveloped \citep{raji2020closing, euaiact2024, nist2023ai}.
A standard approach to evaluating an AI system is to first deploy it in practice, and then observe its decision outcomes under a variety of practical environments. These outcomes are then aggregated into a measure of system performance.
This approach faces two practical challenges.
First, it is often prohibitively expensive to deploy a new, unverified system in real applications and collect a sufficient amount of data needed for the auditing. For example, evaluating an updated autonomous driving system across diverse traffic and terrain conditions can require extensive and costly on-road testing \citep{kiran2021deep}.
   The other challenge is how to sample auditing environments when the system's operating conditions are not arbitrary isolated states, but states generated along trajectories constrained by physical dynamics. Therefore, an auditing procedure must account not only for which environments are tested, but also for whether those environments are reachable and representative under the system's actual deployment dynamics.

A large body of work addresses these challenges through offline
policy evaluation: evaluating the quality of a target policy based on the historical observations (also known as operational logs)
from a similar AI system, such as its older version or human operation records \citep{precup2000eligibility,dudik2014doubly,uehara2020minimax,
yin2021near}.    The policy that generates these logs is called the behavior policy.    Statistical inference for policy values in infinite-horizon settings has also been developed~\citep{shi2022value}.
However, relying solely on operational logs has a fundamental limitation: when the gap in data distributions between the behavior policy and the
target policy is large, the evaluation for auditing could be arbitrarily uncertain. As a result,
seemingly comprehensive operational logs may not be adequate for accurate auditing. We refer to regions frequently visited by the target policy but poorly covered by the behavior policy as \emph{audit blind spots}. See Figure~\ref{fig:intro_concept} for an illustration.

\begin{figure}[!t]
\centering
\includegraphics[width=\textwidth]{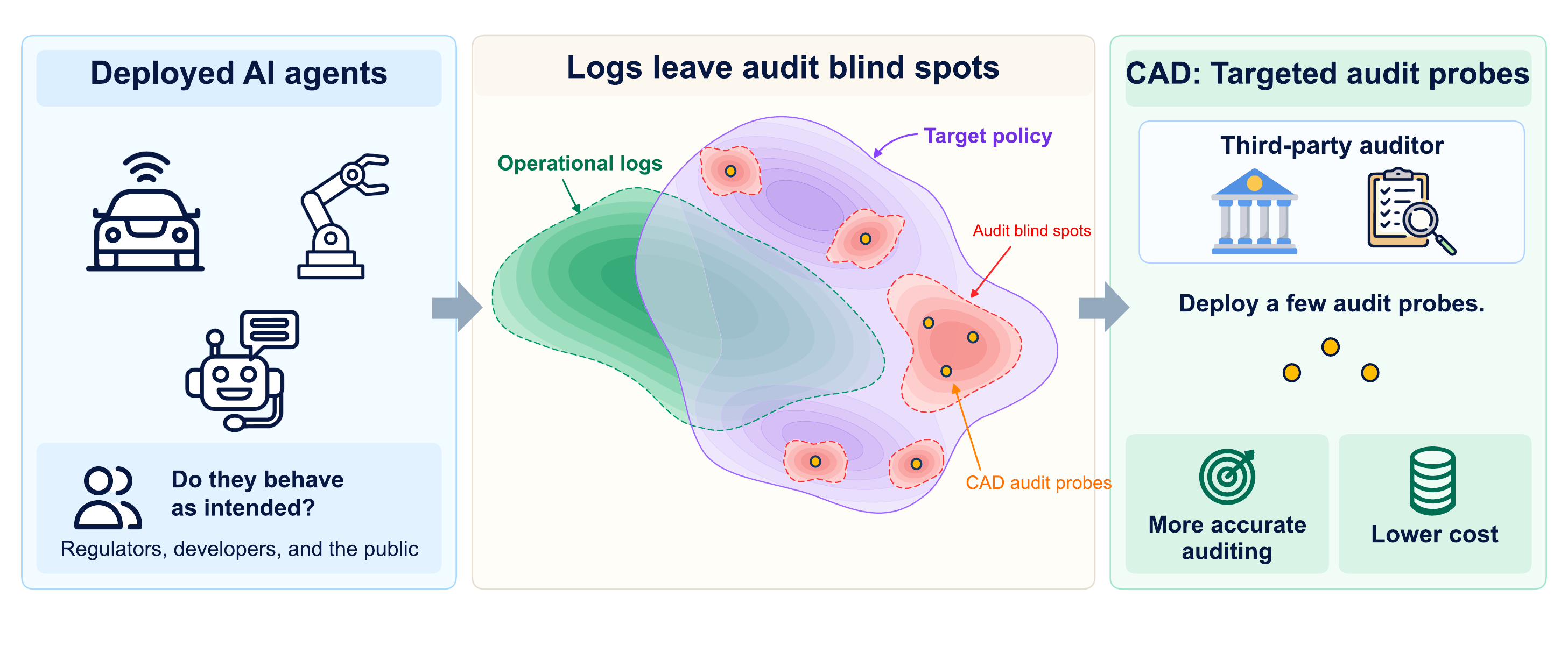}
\caption{Operational logs leave audit blind spots in the target policy's needs. CAD probes them selectively, certifying the agent at lower uncertainty and cost than logs-only auditing or direct deployment.}
\label{fig:intro_concept}
\end{figure}

To address this challenge, we propose \textbf{CAD} (Flow-\textbf{C}onstrained \textbf{A}udit \textbf{D}esign), a framework that augments fixed operational logs with targeted audit trials (probes). CAD introduces an objective that measures the mismatch between the target policy's state--action visitation distribution and that of the combined logs and probes. Minimizing this objective through a convex optimization problem subject to the environment's flow constraints yields a probe policy tailored to both the existing data coverage and the target policy.  
This allows the auditor to estimate the target policy's performance by combining existing operational logs with a limited number of targeted probes.   

Related work evaluates long-term treatment effects in sequential A/B tests using reinforcement learning~\citep{shi2023dynamic} and designs treatment allocations for precise policy comparisons in partially observable experiments~\citep{sun2026armadesign}. Here, we design additional probe trajectories to complement fixed operational logs when evaluating a specified target policy.

The proposed method has several advantages. First, from a practical perspective, CAD can improve policy-value estimation with a limited audit budget by directing probes toward state--action pairs frequently visited by the target policy but poorly covered by the operational logs. If the behavior policy never visits some of these pairs, collecting more trajectories under that same policy cannot fill the coverage gaps.  
Second, from a theoretical perspective, we relate the design objective to bounds on policy-value estimation error and extend the analysis to CAD's adaptive data collection.   Furthermore, we extend CAD to continuous state spaces using soft state aggregation and quantify the additional estimation error introduced by this aggregation.  
Third, from an empirical perspective, we evaluate CAD through simulation studies in Gridworld and LunarLander and a case study of autonomous driving using real-world logs from the Waymo Open Motion Dataset. Across these settings, CAD improves policy-value estimation relative to offline-only baselines and alternative probe designs.

The remainder of the paper is organized as follows. Section~\ref{sec:background} introduces the background on policy evaluation. Section~\ref{sec:problem} formulates the probe-design problem, and Section~\ref{sec:algorithm} presents CAD for finite and continuous state spaces. Section~\ref{sec:why_psi} establishes the theoretical guarantees. Sections~\ref{sec:simulation} and~\ref{sec:driving} report the simulation studies and the autonomous-driving case study, respectively. Section~\ref{sec:discussion} concludes the paper. Additional results and proofs are provided in the appendices.  





\section{Background}
\label{sec:background}

This section introduces Markov decision processes, policy values, and occupancy measures, then describes marginal importance sampling for policy-value estimation.   
\subsection{Markov Decision Processes and Policy Evaluation}
   We consider a \emph{discounted Markov decision process} (MDP) $\mathcal{M} = (\mathcal{S},\,\mathcal{A},\,P,\,r,\,\gamma,\,d_0)$,
where $\mathcal{S}$ is the set of possible states describing the agent's current situation, and $\mathcal{A}$ is the set of actions it can take. The \emph{transition kernel} $P(\cdot\mid s,a)$ is the probability distribution of the next state after action $a$ is taken in state $s$. The Markov assumption means that, given the current state and action, this distribution does not depend on the earlier history.
A \emph{reward} is a numerical score assigned to the outcome of an action and a state, with larger scores indicating better outcomes. The mean reward function $r:\mathcal{S}\times\mathcal{A}\to\mathbb{R}$ gives the expected immediate reward for taking action $a$ in state $s$.
The \emph{discount factor} $\gamma\in(0,1)$ determines the weight assigned to future rewards, where a reward received at step $t$ is multiplied by $\gamma^t$. The \emph{initial-state distribution} $d_0$ describes the starting conditions, with $d_0(s)$ giving the probability that the agent starts in state $s$ when interacting with the environment.
We consider both finite and continuous state spaces, with a finite action space. We first present the finite-state formulation and introduce the continuous-state extension later.          
A policy $\pi$ maps each state $s \in \mathcal{S}$ to a probability distribution $\pi(\cdot \mid s)$ over the action space $\mathcal{A}$, where $\pi(a \mid s)$ denotes the probability of selecting action $a$ when in state $s$.
Under policy $\pi$, the agent generates a trajectory
$(s_0,\,a_0,\,R_0,\;s_1,\,a_1,\,R_1,\;\ldots)
$
by drawing the initial state $s_0$ from $d_0$, then at each step $t\ge 0$ selecting action $a_t$ according to $\pi(\cdot\mid s_t)$, receiving reward $R_t$, and drawing the next state $s_{t+1}$ from $P(\cdot\mid s_t,a_t)$.
Thus, $\mathbb{E}[R_t\mid s_t=s,\,a_t=a]=r(s,a)$.
We assume $|R_t| \le R_{\max}$ almost surely, so that $\|r\|_\infty \le R_{\max}$.

Let $\mathcal{D}_{\mathrm{off}}$ denote the operational logs, consisting of $n_{\mathrm{off}}$ trajectories collected under the behavior policy $\pi_b$.    The auditor receives this dataset before the audit and cannot request additional operational logs. We let $\pi_e$ denote the target policy to be evaluated and assume that its action probabilities $\pi_e(\cdot\mid s)$ are known to the auditor.
For trajectory $i$, let $s_t^{(i)}$, $a_t^{(i)}$, and $r_t^{(i)}$ denote the observed state, action, and reward at step $t$, respectively.

For any policy $\pi$, let $(S_0, A_0, S_1, A_1, \ldots)$ denote the random trajectory generated by starting from $S_0$ drawn from $d_0$ and following $\pi$.
The discounted state-action \emph{occupancy} measure is the probability measure on $\mathcal{S}\times\mathcal{A}$ defined by
$
d^\pi(s,a)
=
(1-\gamma)\sum_{t=0}^{\infty}\gamma^t P(S_t=s,\,A_t=a),
$
which satisfies $\sum_{s \in \mathcal{S},\, a \in \mathcal{A}}d^\pi(s,a)=1$.
The occupancy measure encodes the relative frequency with which $\pi$ visits each state-action pair under discounting.
Using this occupancy measure, the expected cumulative discounted reward under policy $\pi$ can be written as
$
J(\pi)
=
\mathbb{E}_{(s,a)\sim d^\pi}\!\left[r(s,a)\right] / (1-\gamma).
$

In practice, we observe that each trajectory has at most $H$ steps, recording $(s_t,a_t,r_t,s_{t+1})$ for $t=0,\ldots,H-1$. If an episode ends before step $H$, we treat all subsequent steps as remaining in a terminal state $s_{\mathrm{term}}$ with zero reward. This convention lets us represent every trajectory over the same horizon without changing its observed return.
Over these $H$ steps, we define the \emph{$H$-step occupancy measure} as
$
d_H^\pi(s,a)
=
[(1-\gamma)/(1-\gamma^H)]\sum_{t=0}^{H-1}\gamma^t P(S_t=s,\,A_t=a),
$
which has total mass one. The \emph{$H$-step value} is the expected discounted reward over these steps:
$
J_H(\pi)
=
\mathbb{E}\Bigl[\sum_{t=0}^{H-1}\gamma^t R_t\Bigr]
=
[(1-\gamma^H)/(1-\gamma)]\,
\mathbb{E}_{(s,a)\sim d_H^\pi}\!\left[r(s,a)\right].
$
Our estimation target is the $H$-step value $J_H(\pi_e)$, which serves as the audit performance measure. Its estimate summarizes the target agent's expected discounted reward over the audit horizon.
\subsection{Importance Sampling}

We use importance sampling to estimate the target policy's expected return from trajectories collected under other policies, correcting for differences between the data-collection and target distributions. Here, we consider only marginal importance sampling (MIS), which reweights observed rewards using ratios of state--action visitation probabilities~\citep{liu2018breaking}. We first describe it for the infinite-horizon value $J(\pi_e)$.     
Let $q(s,a)$ denote the expected discounted frequency of state--action pair $(s,a)$ under the data-collection policies, normalized to sum to one. We reweight observed rewards to account for differences between this sampling occupancy and the target occupancy $d^{\pi_e}$.       Assume that $q(s,a)>0$ whenever $d^{\pi_e}(s,a)>0$, so that every pair the target policy can visit has positive probability under $q$. Define the importance weight where $q(s,a)>0$ as
$
w(s,a) \;:=\; d^{\pi_e}(s,a) / q(s,a).
$
Set $w=0$ where $q=0$, since the assumption implies $d^{\pi_e}=0$ there as well. Using $d^{\pi_e}(s,a)=q(s,a)w(s,a)$ in the definition of $J(\pi_e)$ gives
$
J(\pi_e)
=
\mathbb{E}_{(s,a)\sim q}\!\left[w(s,a)\,r(s,a)\right] / (1 - \gamma).
$
Given an estimate $\widehat w$ of $w$, the MIS estimator \citep{liu2018breaking} of $J(\pi_e)$ for infinite-horizon trajectories is
\begin{equation} \label{eq:MIS}
\widehat{J}(\widehat{w})
\;:=\;
\frac{1}{n}\sum_{i=1}^{n}\sum_{t=0}^{\infty}\gamma^t\,
\widehat{w}\!\left(s_t^{(i)},a_t^{(i)}\right)r_t^{(i)},
\end{equation}
where the outer sum runs over $n$ trajectories.

In the off-policy evaluation setting with operational logs alone \citep{uehara2020minimax}, $q=d^{\pi_b}$. If the target policy frequently visits state-action pairs that the behavior policy rarely visits, large importance weights can make estimation imprecise at the available sample size. If $d^{\pi_b}(s,a)=0$, that pair is never observed in trajectories generated by $\pi_b$. When $d^{\pi_e}(s,a)>0$ at the same pair, the support assumption is violated, and the contribution of that pair cannot be identified from the logs without additional assumptions.    Consequently, increasing the number of trajectories from $\pi_b$ alone does not guarantee that the MIS estimator converges to $J(\pi_e)$.

When trajectories are observed for $H$ steps, we instead estimate $J_H(\pi_e)$. Let $q_H$ be the $H$-step sampling occupancy and define $w_H=d_H^{\pi_e}/q_H$. With $\widehat w$ estimating $w_H$, the MIS estimator becomes
\begin{equation}\label{eq:MIS_H}
\widehat J(\widehat w)
:=\frac{1}{n}\sum_{i=1}^{n}\sum_{t=0}^{H-1}\gamma^t\,
\widehat w\!\left(s_t^{(i)},a_t^{(i)}\right)r_t^{(i)}.
\end{equation}

\section{Problem Formulation and Design Principle}
\label{sec:problem}

To address these limitations of auditing from operational logs alone, we allow the auditor to augment the fixed dataset $\mathcal{D}_{\mathrm{off}}$ with a limited number of probe trajectories collected under a probe policy $\mu$. Our task is to choose $\mu$ to improve coverage of audit blind spots, namely pairs frequently visited by $\pi_e$ but rarely observed in the logs.      In this section, we present the design under oracle conditions, assuming that the transition kernel $P$ and the occupancies $d^{\pi_e}$ and $d^{\pi_b}$ are known. For a given audit budget $B$, we choose a feasible probe occupancy to improve how well the combined logs and probes represent the target policy's state--action visitation distribution. We recover a probe policy from the selected occupancy, execute it to collect $B$ trajectories, and combine these trajectories with the operational logs to estimate the target policy's value using MIS. Figure~\ref{fig:audit_flow} summarizes this procedure.

\subsection{Effective Sampling Occupancy of Logs and Probes}
\label{sec:data}

We first consider the infinite-horizon setting without a terminal state.   
For a fixed probe policy $\mu$, let $\mathcal{D}_{\mathrm{probe}}$ denote the probe data that contain $B$ trajectories collected under $\mu$. The trajectories collected under $\mu$ use the same initial-state distribution $d_0$, transition kernel $P$, and reward mechanism as the operational logs. Together with the $n_{\mathrm{off}}>0$ trajectories from the operational logs, the combined dataset contains $n=n_{\mathrm{off}}+B$ trajectories.

We define the proportion of probe data in the combined dataset
$
\alpha = B / (n_{\mathrm{off}}+B) \in [0,1).
$
The \emph{effective sampling occupancy} of the combined dataset is the mixture
$$
q_\alpha^\mu(s,a) := (1-\alpha) d^{\pi_b}(s,a) + \alpha d^{\mu}(s,a).
$$
  When $q_\alpha^\mu$ covers the target support, the MIS estimator~\eqref{eq:MIS} uses all $n$ trajectories with ratio $w=d^{\pi_e}/q_\alpha^\mu$.

\begin{figure}[htbp]
\centering
\begingroup
\renewcommand{\baselinestretch}{1}\selectfont
\begin{tikzpicture}[
  font=\fontsize{10}{12}\selectfont,
  every node/.style={align=center},
  box/.style={rounded corners=3pt, minimum width=20mm, minimum height=17mm, inner sep=3pt},
  policy/.style={box, draw=blue!45!black, fill=blue!7},
  data/.style={box, draw=black!55, fill=gray!8},
  probe/.style={box, draw=green!40!black, fill=green!10},
  result/.style={box, draw=orange!75!black, thick, fill=orange!15},
  arr/.style={-{Stealth[length=2.2mm]}, semithick, draw=black!75},
  dashedarr/.style={arr, dashed},
  annotation/.style={font=\fontsize{9}{11}\selectfont, fill=white, inner sep=2pt}
]
\node[policy] (pe) at (8.4, 2.75) {{\normalsize $\pi_e$}\\[3pt]target policy};
\node[policy] (pb) at (0, 0) {{\normalsize $\pi_b$}\\[3pt]behavior\\policy};
\node[data] (doff) at (2.8, 0) {{\normalsize $\mathcal{D}_{\mathrm{off}}$}\\[3pt]operational\\logs};
\node[probe] (mu) at (5.6, 0) {{\normalsize $\mu$}\\[3pt]probe\\policy};
\node[data] (dprobe) at (8.4, 0) {{\normalsize $\mathcal{D}_{\mathrm{probe}}$}\\[3pt]probe\\trajectories};
\node[data] (mix) at (11.2, 0) {{\normalsize $q_\alpha^\mu$}\\[3pt]effective\\occupancy};
\node[result] (jhat) at (14, 0) {{\normalsize $\widehat{J}(\pi_e)$}\\[3pt]policy-value\\estimate};

\draw[arr] (pb) -- (doff);
\draw[arr] (doff) -- (mu);
\draw[arr] (mu) -- (dprobe);
\draw[arr] (dprobe) -- (mix);
\draw[arr] (mix) -- (jhat);
\node[annotation] at (1.4, 1.05) {interact};
\node[annotation] at (4.2, 1.05) {design};
\node[annotation] at (7.0, 1.05) {interact};

\draw[dashedarr] (pe.south) -- (8.4, 1.65) -- (5.6, 1.65) -- (mu.north);
\draw[dashedarr] (pe.south) -- (8.4, 1.65) -- (14, 1.65) -- (jhat.north);
\node[annotation, above] at (5.6, 1.7) {used to design};
\node[annotation, above] at (13.5, 1.7) {used to estimate};

\draw[arr] (doff.south) -- (2.8, -1.45) -- (11.2, -1.45) -- (mix.south);
\node[annotation, below] at (7, -1.5) {contributes to mixture};
\end{tikzpicture}
\endgroup
\caption{Audit problem data flow. The auditor receives fixed operational logs $\mathcal{D}_{\mathrm{off}}$ previously collected under behavior policy $\pi_b$.    For a target policy $\pi_e$, the auditor uses these logs to design a probe policy $\mu$ and executes it in the environment to collect probes $\mathcal{D}_{\mathrm{probe}}$.    The mixture of $\mathcal{D}_{\mathrm{off}}$ and $\mathcal{D}_{\mathrm{probe}}$ has effective sampling occupancy $q_\alpha^\mu$, and combined with $\pi_e$ yields the policy-value estimate $\widehat{J}(\pi_e)$.    }
\label{fig:audit_flow}
\end{figure}

\subsection{Occupancy Mismatch as the Probe Design Criterion}

To design probes that complement the fixed operational logs, we propose the following criterion for measuring the mismatch between the combined sampling occupancy $q$ and the target occupancy:
\[
\Psi(q)
\;=\;
\sum_{s \in \mathcal{S},\, a \in \mathcal{A}}\frac{d^{\pi_e}(s,a)^2}{q(s,a)}
\;=\;
\sum_{s \in \mathcal{S},\, a \in \mathcal{A}}\frac{(d^{\pi_e}(s,a)-q(s,a))^2}{q(s,a)} + 1
\;=\;
\chi^2(d^{\pi_e}\,\|\,q) + 1,
\]

where $\chi^2$ denotes the chi-squared divergence. Here $\Psi(q)$ measures the mismatch between the sampling occupancy $q$ and the target $d^{\pi_e}$. The criterion is minimized at $\Psi(q)=1$ when the sampling occupancy exactly matches the target occupancy. It strongly penalizes insufficient coverage of target-relevant state--action pairs: if $d^{\pi_e}(s,a)>0$, the corresponding term grows without bound as $q(s,a)$ approaches zero.
In Section~\ref{sec:var_just}, we derive an upper bound on the MIS estimator's mean squared error (MSE) that is proportional to $\Psi(q)/n$.

\subsection{Flow-Constrained Probe Design}

Probe design must also respect the occupancies attainable under the environment dynamics.    A probe cannot allocate occupancy arbitrarily: its trajectories must follow the transition kernel $P$ from $d_0$. We express this restriction through the \emph{Bellman flow-feasible set} \citep{puterman2014markov}:
\begin{equation}\label{eq:flow_constraint}
\mathcal{D}_{\mathrm{flow}}
=
\bigl\{
x\ge 0 :
\sum_{a \in \mathcal{A}}\!x(s,a)
=
(1-\gamma)d_0(s)
+\gamma\!\sum_{s' \in \mathcal{S},a' \in \mathcal{A}}\!x(s',a')P(s\mid s',a'),\;
\forall s\in\mathcal{S}
\bigr\},
\end{equation}
where $x : \mathcal{S}\times\mathcal{A}\to\mathbb{R}_{\ge 0}$ is a candidate occupancy measure over state-action pairs.
\begin{remark}[Flow-feasibility characterization]
\label{prop:flow-feasibility}
The flow constraints express a balance at each state: its total discounted occupancy equals the contribution from the initial-state distribution plus the discounted arrivals from other state--action pairs. They therefore connect the allocation of probe visits to the environment dynamics. More precisely, a vector $x \ge 0$ over $\mathcal{S}\times\mathcal{A}$ is the occupancy measure of some stationary Markov policy if and only if $x \in \mathcal{D}_{\mathrm{flow}}$ \citep[Chapter~6]{puterman2014markov}. This characterization provides both a restriction and a guarantee for probe design. Every stationary probe policy must induce an occupancy in $\mathcal{D}_{\mathrm{flow}}$, and every occupancy in this set can be realized by such a policy. Since the design criterion depends on the probe policy through its occupancy, we can first optimize the visitation pattern over this feasible set and then recover a policy that realizes it.
\end{remark}

We thus formulate the probe design problem as the following optimization:
\begin{equation}\label{eq:probe_design}
x^* \in \argmin_{x\,\in\,\mathcal{D}_{\mathrm{flow}}}
\Psi\!\left((1-\alpha)d^{\pi_b}+\alpha x\right).
\end{equation}
Here, $x^*$ is an optimal probe occupancy. The design problem is convex and admits a global minimizer for $B>0$.
Once the optimal occupancy $x^*$ is obtained, the corresponding probe policy is recovered analytically via
$$
\mu^*(a\mid s) = x^*(s,a) / \sum_{a' \in \mathcal{A}} x^*(s,a'),
$$
which is well-defined whenever $\sum_{a \in \mathcal A} x^*(s,a)>0$.
The auditor executes $\mu^*$ in the environment to collect probe trajectories. The auditor then forms the combined dataset of size $n = n_{\mathrm{off}} + B$, and applies the MIS estimator \eqref{eq:MIS}
with effective sampling occupancy $q_\alpha^{\mu^*}$ to obtain an estimate of $J(\pi_e)$.
The minimum mismatch achievable by this design is limited by the fixed contribution of the operational logs. For a given audit budget, the combined occupancy satisfies $q(s,a)\ge(1-\alpha)d^{\pi_b}(s,a)$, so exact matching requires the target occupancy to be at least this large at every state--action pair. Appendix~\ref{app:design_limits} shows that this condition is also sufficient and provides bounds on the smallest achievable mismatch.   Appendices~\ref{app:kkt} and~\ref{app:probing_gain} further characterize the optimal probe design and the reduction in mismatch achievable with a small audit budget, respectively.    

\section{Practical Algorithm}
\label{sec:algorithm}

We now implement the oracle design of Section~\ref{sec:problem} when $P$, $d^{\pi_e}$, and $d^{\pi_b}$ are unknown.
   In practice, probe design depends on estimates of these quantities, while each new probe provides observations that can improve those estimates. To update the design as these estimates change, we propose CAD, an iterative procedure that repeatedly estimates the unknown quantities, solves the empirical design problem, collects new probes, and refines the design.

\begingroup
The first two subsections consider finite state spaces, and the third extends CAD to continuous state spaces.
Each trajectory is represented by $H$ transitions $(s_t,a_t,r_t,s_{t+1})$, for $t=0,\ldots,H-1$. 
The algorithm proceeds in $K$ rounds, where $1\le K\le B$. We allocate $B_k=\lfloor B/K\rfloor$ trajectories for $k=0,\ldots,K-2$ and $B_{K-1}=B-(K-1)\lfloor B/K\rfloor$ to the final round, so that $\sum_kB_k=B$.
We initialize $\mathcal D_0=\mathcal D_{\mathrm{off}}$ and estimate $d_H^{\pi_b}$ once from the operational logs, before collecting any probes, by normalizing their discounted visitation counts:
\begin{equation}
\label{eq:empirical_behavior_occupancy}
\widehat d^{\pi_b}(s,a)
=\frac{1-\gamma}{n_{\mathrm{off}}(1-\gamma^H)}
\sum_{i=1}^{n_{\mathrm{off}}}\sum_{t=0}^{H-1}
\gamma^t\,\mathbf{1}[s_t^{(i)}=s,\,a_t^{(i)}=a].
\end{equation}
We set $\hat A_0=n_{\mathrm{off}}\widehat d^{\pi_b}$ as the initial accumulated occupancy.
\endgroup

\subsection{Iterative Probe Design and Data Collection}
\label{sec:opt_perspective}

\begingroup
At the start of round $k=0,\ldots,K-1$, $\mathcal D_k$ contains the operational logs and all previously collected probe trajectories.
To estimate $P$, let $N_k(s,a,s')$ count transitions from state $s$ to state $s'$ under action $a$ in $\mathcal D_k$. For state--action pairs with positive total counts, the empirical transition estimate is
\begin{equation}
\label{eq:empirical_transition}
\widehat P_k(s'\mid s,a)
=\frac{N_k(s,a,s')}{\sum_{u\in\mathcal S}N_k(s,a,u)}.
\end{equation}
For a state--action pair with no observed transition, we set $\widehat P_k(s'\mid s,a)=1/|\mathcal S|$ for every $s'\in\mathcal S$ as a default estimate in the absence of prior information.  If a designated terminal state is present, its rows are instead fixed at $\widehat P_k(s_{\mathrm{term}}\mid s_{\mathrm{term}},a)=1$ for every action.
Given $\widehat P_k$ and the known target policy $\pi_e$, we compute the $H$-step target occupancy $\widehat d_{H,k}^{\pi_e}$ by propagating the state distribution forward under $\widehat P_k$ \citep{puterman2014markov}. Specifically, set $\hat p_{0,k}=d_0$ and, for $t=0,\ldots,H-2$,
$
\hat p_{t+1,k}(s')=\sum_{s\in\mathcal S,\,a\in\mathcal A}\hat p_{t,k}(s)\,\pi_e(a\mid s)\,\widehat P_k(s'\mid s,a).
$
Then
\begin{equation}
\label{eq:target_occupancy_H}
\widehat d_{H,k}^{\pi_e}(s,a)
=\frac{1-\gamma}{1-\gamma^H}\sum_{t=0}^{H-1}\gamma^t\,\hat p_{t,k}(s)\,\pi_e(a\mid s),
\qquad (s,a)\in\mathcal S\times\mathcal A.
\end{equation}
\endgroup

Using these estimates, we choose a probe to supplement the data already collected.
If a designated terminal state is present, we use the zero-reward terminal state $s_{\mathrm{term}}$ introduced in Section~\ref{sec:background} and write $\mathcal S^{-}=\mathcal S\setminus\{s_{\mathrm{term}}\}$; otherwise, $\mathcal S^{-}=\mathcal S$.
The empirical design objective $\Psi$ sums over $\mathcal S^{-}$ because the terminal state contributes no reward, while the flow constraints retain all states.
Let $\hat A_k$ denote the accumulated empirical occupancy of the data collected before round $k$.
Using $\hat A_k$, we adapt the probe-design problem~\eqref{eq:probe_design} to round $k$ as
\begingroup
\begin{equation}
\label{eq:emp_obj_iter}
\hat{x}_k^*
\;\in\;
\argmin_{x\,\in\,\hat{\mathcal{D}}_{\mathrm{flow},k}}
\sum_{s \in \mathcal{S}^{-},\, a \in \mathcal{A}}
\frac{\widehat{d}_{H,k}^{\pi_e}(s,a)^2}{\hat{A}_k(s,a) + B_k\,x(s,a) + \varepsilon},
\end{equation}
where $\hat{\mathcal{D}}_{\mathrm{flow},k}$ is obtained from~\eqref{eq:flow_constraint} by replacing $P$ with $\widehat P_k$, and $\varepsilon>0$ is a small regularizer that keeps the denominator positive.
\endgroup
The objective is convex in $x$ and the Bellman flow constraints are linear, so problem~\eqref{eq:emp_obj_iter} is a convex program and admits a global minimizer, which can be computed with a standard convex solver.
The corresponding probe policy on $\mathcal S^{-}$ is recovered by row-sum normalization, with a uniform action distribution when the design assigns zero mass to a state. At $s_{\mathrm{term}}$, actions have no effect, and we use the same arbitrary fixed action for all policies. Specifically, the probe policy is given by
\begingroup
\begin{equation}\label{eq:probe_policy_recovery}
\mu_k^*(a\mid s)=
\begin{cases}
\displaystyle\frac{\hat x_k^*(s,a)}{\sum_{a'\in\mathcal A}\hat x_k^*(s,a')},
&\displaystyle\sum_{a'\in\mathcal A}\hat x_k^*(s,a')>0,\\[6pt]
1/|\mathcal A|, &\text{otherwise},
\end{cases}
\qquad s\in\mathcal S^{-}.
\end{equation}
\endgroup

\begingroup
The auditor then executes $\mu_k^*$ for $B_k$ trajectories to form $\mathcal D_{\mathrm{on}}^{(k)}$ and updates $\mathcal D_{k+1}=\mathcal D_k\cup\mathcal D_{\mathrm{on}}^{(k)}$. After collecting these trajectories, we update the accumulated occupancy as
\begin{equation}
\label{eq:accumulated_occupancy_update}
\hat A_{k+1}(s,a)=\hat A_k(s,a)+B_k\,\widehat d^{\mu_k^*}(s,a),
\end{equation}
where $\widehat d^{\mu_k^*}$ is the normalized discounted empirical occupancy estimated from trajectories collected under probe policy $\mu_k^*$, computed as in~\eqref{eq:empirical_behavior_occupancy}.
Through these updates, the accumulated denominator $\hat{A}_{k+1}$ encodes how much evidence has already been gathered at each state-action pair and helps direct new probes toward the remaining audit blind spots.
\endgroup

\subsection{Final Policy-Value Estimation}
\label{sec:procedure}

\begingroup
After all $K$ rounds, we estimate $\widehat P_K$ from all data $\mathcal D_K$ and compute $\widehat d_{H,K}^{\pi_e}$ by~\eqref{eq:target_occupancy_H}. The estimated final sampling occupancy is $\widehat q_K(s,a) = \hat{A}_K(s,a)/(n_{\mathrm{off}} + B)$.
\endgroup
We apply the $H$-step MIS estimator~\eqref{eq:MIS_H} to the combined data with regularized marginal importance weights:
\begin{equation}
\label{eq:mis_final}
\widehat{w}_K (s,a) \;=\; \frac{\widehat{d}^{\pi_e}_{H,K}(s,a)}{\widehat q_K(s,a) + \lambda},
\qquad
\widehat{J}(\widehat{w}_K)
\;=\; \frac{1}{n}
\sum_{i=1}^{n}\sum_{t=0}^{H-1}
\gamma^t\,\widehat{w}_K (s_{t}^{(i)},a_{t}^{(i)})\,r_{t}^{(i)},
\end{equation}
\begingroup
where $\lambda > 0$ is a small regularizer, $n = n_{\mathrm{off}} + B$, and the outer sum runs over all collected trajectories, both offline and online.
\endgroup
Algorithm~\ref{alg:probe_design} summarizes the procedure. 

\begin{algorithm}[!htbp]
\small
\caption{CAD on a finite state space}
   \label{alg:probe_design}
\begin{algorithmic}[1]
\Require Operational logs $\mathcal{D}_{\mathrm{off}}$ ($n_{\mathrm{off}}$ trajectories), target policy $\pi_e$, initial distribution $d_0$, discount $\gamma$, horizon $H$, total audit budget $B$, rounds $1\le K\le B$, regularizers $\varepsilon, \lambda>0$.
\Ensure Probe policies $\{\mu_k^*\}_{k=0}^{K-1}$ and policy-value estimate $\widehat{J}$.

\State \textbf{Initialize.} Set $\mathcal{D}_0 \leftarrow \mathcal{D}_{\mathrm{off}}$; compute the discounted empirical occupancy $\widehat{d}^{\pi_b}$ from $\mathcal{D}_{\mathrm{off}}$; set $\hat{A}_0 \leftarrow n_{\mathrm{off}}\,\widehat{d}^{\pi_b}$; set $B_k=\lfloor B/K\rfloor$ for $k<K-1$ and $B_{K-1}=B-(K-1)\lfloor B/K\rfloor$.

\For{$k = 0, 1, \ldots, K-1$}
    \State \textbf{Estimate dynamics.} Compute $\widehat{P}_k$ from the trajectory counts in $\mathcal{D}_k$ using~\eqref{eq:empirical_transition}.
    \State \textbf{Estimate target occupancy.} Compute the $H$-step occupancy $\widehat{d}^{\pi_e}_{H,k}$ induced by $\pi_e$ under $\widehat{P}_k$ via the forward recursion~\eqref{eq:target_occupancy_H}.
    \State \textbf{Probe design.} Solve~\eqref{eq:emp_obj_iter} for $\hat{x}^*_k$; recover $\mu_k^*$ by~\eqref{eq:probe_policy_recovery}.
       \State \textbf{Collect probes.} Execute $\mu_k^*$ for $B_k$ trajectories to form $\mathcal{D}^{(k)}_{\mathrm{on}}$; update $\mathcal{D}_{k+1} \leftarrow \mathcal{D}_k \cup \mathcal{D}^{(k)}_{\mathrm{on}}$.
    \State \textbf{Update denominator.} Estimate the discounted empirical occupancy $\widehat{d}^{\mu_k^*}$ from $\mathcal{D}^{(k)}_{\mathrm{on}}$; set $\hat{A}_{k+1} \leftarrow \hat{A}_k + B_k\,\widehat{d}^{\mu_k^*}$.
   \EndFor

\State \textbf{Final estimation.} Estimate $\widehat{P}_K$ and $\widehat{d}^{\pi_e}_{H,K}$ from all data $\mathcal{D}_K = \mathcal{D}_{\mathrm{off}} \cup \bigcup_{k=0}^{K-1}\mathcal{D}^{(k)}_{\mathrm{on}}$.
\State Compute $\widehat q_K = \hat{A}_K/(n_{\mathrm{off}} + B)$ and $\widehat{w}_K$ via \eqref{eq:mis_final}.
\State Return $\widehat{J}(\widehat{w}_K)$ via \eqref{eq:mis_final} over all $n = n_{\mathrm{off}} + B$ trajectories.

\end{algorithmic}
\end{algorithm}

\subsection{Extension of CAD to Continuous State Spaces}
\label{sec:continuous_algorithm}

For a continuous state space, we use soft state aggregation \citep{bezdek1981pattern,li2006towards} to assign each state to a finite set of cells and apply CAD to these cells. We then construct a probe policy on the original state space by combining the cell-level action probabilities. In the finite-state setting, we compute the target occupancy by propagating state probabilities through the estimated transition matrix. For continuous states, we approximate this computation using trajectories simulated under the target policy $\pi_e$ in fitted transition models.

We fit $M$ transition models to the available environment observations $\mathcal D$ and generate $N_{\mathrm{sim}}$ trajectories from each, averaging across model fits and simulated trajectories to stabilize the estimate. Each model predicts the next state from the current state and action.      The simulated trajectories start from states sampled from the offline initial states and stop at termination or step $H$. We pool the resulting $M N_{\mathrm{sim}}$ trajectories into a dataset denoted by $\textsc{Sim}(\pi_e,\mathcal D)$.

We construct $K_\varphi$ cells for non-terminal states from $\mathcal D_{\mathrm{off}}\cup\mathcal S_{\mathrm{init}}$, where $\mathcal S_{\mathrm{init}}=\textsc{Sim}(\pi_e,\mathcal D_{\mathrm{off}})$, and add a separate cell $0$ for the terminal state. Let $\mathcal J=\{0,1,\ldots,K_\varphi\}$ index all cells. For a non-terminal state $s$, let $\varphi_j(s)\ge0$ denote its weight in cell $j$, with $\sum_{j=1}^{K_\varphi}\varphi_j(s)=1$. The weights remain fixed across rounds. Appendix~\ref{app:soft_state_aggregation} gives the clustering and weight construction.

We use the superscript $\varphi$ to identify occupancies on the cells. For a dataset $\mathcal D$ of $|\mathcal D|$ trajectories, we compute empirical occupancies as in~\eqref{eq:empirical_behavior_occupancy}, replacing state indicators by cell weights:
\begin{equation}\label{eq:continuous_emp_occ}
\widehat d_{H,\mathcal D}^{\varphi}(j,a)
=\frac{1-\gamma}{|\mathcal D|(1-\gamma^H)}\sum_{i\in\mathcal D}\sum_{t=0}^{H-1}
\gamma^t\varphi_j(s_t^{(i)})\mathbf 1\{a_t^{(i)}=a\}.
\end{equation}
At round $k$, we generate $\mathcal S_k=\textsc{Sim}(\pi_e,\mathcal D_k)$ and estimate the target occupancy by $\widehat d_{H,k}^{\pi_e,\varphi}=\widehat d_{H,\mathcal S_k}^{\varphi}$.

To estimate $\widehat P_k$, we count transitions between cells in the real and simulated trajectories $\mathcal D_k\cup\mathcal S_k$, weighting each transition by the cell weights of its current and next states. We estimate $\widehat d_{0,k}$ by averaging the cell weights of the initial states in the real trajectories $\mathcal D_k$. Appendix~\ref{app:cell_transitions} gives the details.

Using these estimates, we solve the cell-level design problem
\begin{equation}\label{eq:continuous_design}
\begin{aligned}
\hat x_k\in\argmin_{x\ge0}\quad
&\sum_{j=1}^{K_\varphi}\sum_{a\in\mathcal A}
\frac{\widehat d_{H,k}^{\pi_e,\varphi}(j,a)^2}{\hat A_k(j,a)+(B_k/(1-\gamma^H))x(j,a)+\varepsilon}\\
\text{subject to}\quad
&\sum_{a\in\mathcal A}x(j,a)=(1-\gamma)\widehat d_{0,k}(j)
+\gamma\sum_{\ell\in\mathcal J}\sum_{a\in\mathcal A}x(\ell,a)\widehat P_k(j\mid\ell,a),\quad j\in\mathcal J,
\end{aligned}
\end{equation}
We initialize $\hat A_0=n_{\mathrm{off}}\widehat d_{H,\mathcal D_{\mathrm{off}}}^{\varphi}$ from the offline logs and update $\hat A_{k+1}=\hat A_k+B_k\widehat d_{H,\mathcal D_{\mathrm{on}}^{(k)}}^{\varphi}$ after each round using the newly collected probe trajectories, with empirical occupancies computed by~\eqref{eq:continuous_emp_occ}.

We recover the cell-level policy $\mu_k(a\mid j)$ by~\eqref{eq:probe_policy_recovery}. At a continuous state $s$, the probe selects an action according to
\begin{equation}\label{eq:continuous_probe_exec}
\tilde\mu_k(a\mid s)=\sum_{j=1}^{K_\varphi}\varphi_j(s)\mu_k(a\mid j).
\end{equation}

After each round, we add the new probe trajectories to $\mathcal D_k$ and update the sampling occupancy from these real observations. After the final round, we refit the models on $\mathcal D_K$ and estimate $\widehat d_{H,K}^{\pi_e,\varphi}$ from new simulated target trajectories. With $n=n_{\mathrm{off}}+B$ and $\widehat q_K^{\varphi}=\hat A_K/n$, we average the cell-level importance weights at each observed state and apply~\eqref{eq:mis_final}:
\begin{equation}\label{eq:continuous_mis_final}
\begin{aligned}
\widehat w_K^{\varphi}(j,a)&=\frac{\widehat d_{H,K}^{\pi_e,\varphi}(j,a)}{\widehat q_K^{\varphi}(j,a)+\lambda},
\qquad j=1,\ldots,K_\varphi,\\
\widehat J(\widehat w_K^{\varphi})&=\frac1n\sum_{i\in\mathcal D_K}\sum_{t=0}^{H-1}\gamma^t r_t^{(i)}
\sum_{j=1}^{K_\varphi}\varphi_j(s_t^{(i)})\widehat w_K^{\varphi}(j,a_t^{(i)}).
\end{aligned}
\end{equation}
Algorithm~\ref{alg:cad_continuous} summarizes the procedure.    

\begin{algorithm}[!htbp]
\small
\caption{CAD on a continuous state space}
\label{alg:cad_continuous}
\begin{algorithmic}[1]
\Require Operational logs $\mathcal D_{\mathrm{off}}$ ($n_{\mathrm{off}}$ trajectories), target policy $\pi_e$, discount $\gamma$, horizon $H$, total audit budget $B$, rounds $1\le K\le B$, regularizers $\varepsilon,\lambda>0$, number of state cells $K_\varphi$, simulation parameters $M,N_{\mathrm{sim}}$.
\Ensure Policy-value estimate $\widehat J$.
   \State \textbf{Fix abstraction.} Generate $\mathcal S_{\mathrm{init}}\gets\textsc{Sim}(\pi_e,\mathcal D_{\mathrm{off}})$; construct and fix $\varphi$ from $\mathcal D_{\mathrm{off}}\cup\mathcal S_{\mathrm{init}}$ as in Appendix~\ref{app:soft_state_aggregation}.
\State \textbf{Initialize.} Set $\mathcal D_0\gets\mathcal D_{\mathrm{off}}$; compute $\widehat d_{H,\mathcal D_{\mathrm{off}}}^{\varphi}$ using~\eqref{eq:continuous_emp_occ}; set $\hat A_0\gets n_{\mathrm{off}}\widehat d_{H,\mathcal D_{\mathrm{off}}}^{\varphi}$; set $B_k=\lfloor B/K\rfloor$ for $k<K-1$ and $B_{K-1}=B-(K-1)\lfloor B/K\rfloor$.
\For{$k=0,\ldots,K-1$}
  \State \textbf{Estimate target occupancy.} Generate $\mathcal S_k\gets\textsc{Sim}(\pi_e,\mathcal D_k)$; compute $\widehat d_{H,k}^{\pi_e,\varphi}$ from $\mathcal S_k$ using~\eqref{eq:continuous_emp_occ}.
  \State \textbf{Estimate dynamics.} Compute $\widehat P_k$ from $\mathcal D_k\cup\mathcal S_k$ and $\widehat d_{0,k}$ from the first states of $\mathcal D_k$, as in Appendix~\ref{app:cell_transitions}.
  \State \textbf{Probe design.} Solve~\eqref{eq:continuous_design} for $\hat x_k$; recover $\mu_k$ by~\eqref{eq:probe_policy_recovery}.
  \State \textbf{Collect probes.} Execute~\eqref{eq:continuous_probe_exec} for $B_k$ real trajectories to form $\mathcal D_{\mathrm{on}}^{(k)}$; set $\mathcal D_{k+1}\gets\mathcal D_k\cup\mathcal D_{\mathrm{on}}^{(k)}$.
  \State \textbf{Update denominator.} Compute $\widehat d_{H,\mathcal D_{\mathrm{on}}^{(k)}}^{\varphi}$ using~\eqref{eq:continuous_emp_occ}; set $\hat A_{k+1}\gets\hat A_k+B_k\widehat d_{H,\mathcal D_{\mathrm{on}}^{(k)}}^{\varphi}$.
\EndFor
\State \textbf{Final estimation.} Generate $\mathcal S_K\gets\textsc{Sim}(\pi_e,\mathcal D_K)$; compute $\widehat d_{H,K}^{\pi_e,\varphi}$ using~\eqref{eq:continuous_emp_occ}.
\State Compute $\widehat q_K^{\varphi}=\hat A_K/(n_{\mathrm{off}}+B)$ and $\widehat w_K^{\varphi}$ via~\eqref{eq:continuous_mis_final}.
\State Return $\widehat J(\widehat w_K^{\varphi})$ via~\eqref{eq:continuous_mis_final} over all $n=n_{\mathrm{off}}+B$ real trajectories.
\end{algorithmic}
\end{algorithm}

\subsection{Tuning Parameters and Implementation of CAD}
\label{sec:hyperparams}

For CAD on a finite state space, we choose three tuning parameters: the number of rounds $K$ and the regularizers $\lambda$ and $\varepsilon$.    The number of rounds balances adaptivity and estimation precision. Too few rounds limit opportunities to refine the design, whereas too many reduce the per-round budget $\lfloor B/K\rfloor$, making occupancy estimates less reliable for subsequent design updates. In practice, we set $K=3$ to balance adaptivity and estimation precision.
The regularizers $\lambda$ and $\varepsilon$ stabilize the denominators in the importance weights and the design objective, respectively. A larger $\lambda$ reduces sensitivity to occupancy estimation error but introduces regularization bias in $\widehat{J}$. A larger $\varepsilon$ reduces the design objective's sensitivity to estimation error but also weakens its emphasis on poorly covered state--action pairs. Theorem~\ref{thm:final-mse} quantifies how regularization and occupancy estimation error enter the final mean squared error bound.
In practice, we use a base regularization level of $10^{-6}$ for $\lambda$ and $\varepsilon$ to stabilize the importance weights and the probe-design objective.

For continuous states, $K_\varphi$ controls the resolution of the state representation. More cells distinguish finer state differences but leave fewer observations for estimating each cell's transitions, so the choice should reflect the amount of available data. The model count $M$ and the number of simulated trajectories per model $N_{\mathrm{sim}}$ balance estimation stability and computational cost. In practice, we use $M=5$ transition models and a few hundred simulated trajectories per model.


\section{Theoretical Analysis}
\label{sec:why_psi}

Section~\ref{sec:problem} introduced $\Psi(q_\alpha^\mu)$ as an occupancy-mismatch criterion for designing probe policies.
This section relates the design criterion to statistical upper bounds and analyzes the estimation error of CAD in finite and continuous state spaces.
Section~\ref{sec:var_just} relates $\Psi(q_\alpha^\mu)$ to mean squared error and asymptotic variance upper bounds under a fixed probe policy.
Section~\ref{sec:adaptive_plugin_bound} specializes the same principle to the final policy-value estimator in~\eqref{eq:mis_final}, computed by Algorithm~\ref{alg:probe_design}.
   Section~\ref{sec:continuous_extension} extends the error analysis to continuous states through the cell-level representation. Proofs of the theoretical results are collected in Appendix~\ref{app:proofs}.

\subsection{Statistical Justification of the Design Criterion}
   \label{sec:var_just}

We first consider a fixed probe policy $\mu$ in the infinite-horizon setting of Section~\ref{sec:problem}. The data comprise $n_{\mathrm{off}}$ independent trajectories from $\pi_b$ and $B$ independent trajectories from $\mu$, allowing dependence between the two samples. Write $n=n_{\mathrm{off}}+B$ and assume that $q_\alpha^\mu$ is positive on $\mathrm{supp}(d^{\pi_e})$. Observed rewards satisfy $|R_t|\le R_{\max}$ almost surely, as in Section~\ref{sec:background}. The following result bounds the mean squared error (MSE) of the plug-in MIS estimator, allowing its weights to be estimated from the evaluation data.

\begin{theorem}[MSE bound for the plug-in MIS estimator]
\label{thm:mse_decomp}
Let $\widehat w=\hat d^{\pi_e}/\hat q$ and let $\widehat J(\widehat w)$ be the estimator in~\eqref{eq:MIS}, with zero weights outside $\mathrm{supp}(d^{\pi_e})$. Suppose that, almost surely, (i) $\mathrm{supp}(\hat d^{\pi_e})\subseteq\mathrm{supp}(d^{\pi_e})$ and $\hat q\ge c\,q_\alpha^\mu$ on the target support for a constant $c\in(0,1]$, and (ii) $|\hat d^{\pi_e}-d^{\pi_e}|\le\rho_p\,d^{\pi_e}$ and $|\hat q-q_\alpha^\mu|\le\rho_q\,q_\alpha^\mu$ pointwise on $\mathrm{supp}(d^{\pi_e})$, where $\rho_p,\rho_q\ge0$ are deterministic and $\rho_p^2+\rho_q^2\le C_0/n$ for a constant $C_0>0$. Then there is a constant $C>0$, depending only on $\gamma,c,C_0$, and $R_{\max}$, such that
\[
\mathbb E\!\left[\bigl(\widehat J(\widehat w)-J(\pi_e)\bigr)^2\right]
\le\frac{C\,\Psi(q_\alpha^\mu)}{n}.
\]
\end{theorem}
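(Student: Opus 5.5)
Decompose the error around the oracle-weight estimator. With $w=d^{\pi_e}/q_\alpha^\mu$,
\[
\widehat J(\widehat w)-J(\pi_e)=\underbrace{\bigl(\widehat J(w)-J(\pi_e)\bigr)}_{T_1}+\underbrace{\widehat J(\widehat w-w)}_{T_2},
\]
and use $(a+b)^2\le 2a^2+2b^2$, so it suffices to bound $\mathbb E[T_1^2]$ and $\mathbb E[T_2^2]$ separately by multiples of $\Psi(q_\alpha^\mu)/n$.

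\textbf{The oracle term $T_1$.} Write $\widehat J(w)=\frac1n\sum_i Z_i$ with $Z_i=\sum_t\gamma^t w(s_t^{(i)},a_t^{(i)})r_t^{(i)}$. The mixture identity gives
\[
\frac1n\sum_i\mathbb E Z_i=\frac{1}{1-\gamma}\sum_{s,a} q_\alpha^\mu w r=J(\pi_e),
\]
since $(1-\alpha)d^{\pi_b}+\alpha d^\mu=q_\alpha^\mu$. Hence $T_1$ has mean zero. Trajectories within each sample are independent, but the two samples may be dependent, so I bound $\mathbb E[T_1^2]\le 2\,\Var(\text{off part})+2\,\Var(\text{probe part})$. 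Each part's variance is at most
\[
n^{-2}\sum_{i\in\text{part}}\mathbb E Z_i^2 .
\]
By Cauchy--Schwarz over $t$ with weights $\gamma^t$,
\[
Z_i^2\le \frac{R_{\max}^2}{1-\gamma}\sum_t\gamma^t w(s_t,a_t)^2 ,
\]
so
\[
\mathbb E Z_i^2\le \frac{R_{\max}^2}{(1-\gamma)^2}\sum_{s,a} d^{\pi_i} w^2 .
\]
Summing over $i$ yields at most $\frac{R_{\max}^2}{(1-\gamma)^2}\,n\sum q_\alpha^\mu w^2=\frac{R_{\max}^2}{(1-\gamma)^2}\,n\,\Psi$. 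Therefore $\mathbb E T_1^2\lesssim \Psi/n$.

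\textbf{The weight-error term $T_2$.} On the target support, conditions (i)--(ii) give
\[
|\widehat w-w|=\frac{|\hat d\,q-d\,\hat q|}{\hat q\,q}\le \frac{|\hat d-d|\,q+d\,|q-\hat q|}{c\,q^2}\le \frac{\rho_p+\rho_q}{c}\,w .
\]
Hence, deterministically,
\[
|T_2|\le\frac{\rho_p+\rho_q}{c}\cdot\frac1n\sum_i\sum_t\gamma^t R_{\max}\,w(s_t^{(i)},a_t^{(i)}).
\]
Square this and apply Cauchy--Schwarz: $(\frac1n\sum_i Y_i)^2\le\frac1n\sum_i Y_i^2$, where $Y_i^2\le\frac{R_{\max}^2}{1-\gamma}\sum_t\gamma^t w^2$. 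Taking expectations gives
\[
\mathbb E\Bigl[\frac1n\sum_i Y_i^2\Bigr]\le\frac{R_{\max}^2}{(1-\gamma)^2}\,\Psi .
\]
Since $\rho_p,\rho_q$ are deterministic, $(\rho_p+\rho_q)^2\le 2C_0/n$, and so $\mathbb E T_2^2\le \frac{2C_0R_{\max}^2}{c^2(1-\gamma)^2}\,\Psi/n$.

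Combining the two bounds yields the claim with $C$ depending only on $\gamma,c,C_0,R_{\max}$.

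\textbf{Main obstacle.} The delicate point is $T_2$. The weights are estimated from the same data, so $\widehat w$ and the trajectories are dependent, and a variance argument cannot be used. I avoid this by the worst-case deterministic relative bound on $|\widehat w-w|$, which is exactly why conditions (i)--(ii) are stated almost surely with deterministic $\rho$'s. The step that needs care is that $\mathbb E[\frac1n\sum_i Y_i^2]$ reduces to $\Psi$ through the mixture occupancy. This holds because $\frac1n\sum_i d^{\pi_i}=q_\alpha^\mu$ exactly, by the definition $\alpha=B/n$.
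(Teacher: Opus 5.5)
Your proposal is correct and follows essentially the same route as the paper's proof. Both use the decomposition into an oracle term and a weight-error term. For the oracle term, both bound the variance by splitting into the two within-sample-independent groups, which costs a factor of 2; the paper cites Theorem~\ref{thm:variance-psi} for this, and you reprove it inline. For the weight-error term, both use the pathwise relative bound $|\widehat w-w|\le(\rho_p+\rho_q)w/c$ followed by Jensen across trajectories, and you arrive at the same constant $C=4R_{\max}^2(1+C_0/c^2)/(1-\gamma)^2$.
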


For fixed $n$, minimizing $\Psi$ minimizes this upper bound when the error controls hold uniformly over the candidate designs with common constants $c$ and $C_0$.

Under additional conditions, $\widehat J(\widehat w)$ is asymptotically normal as $n\to\infty$ with $\alpha\in(0,1)$ fixed.     The variance $\sigma^2$ of the limit of $\sqrt n(1-\gamma)\{\widehat J(\widehat w)-J(\pi_e)\}$ satisfies
$
\sigma^2\le R_{\max}^2\Psi(q_\alpha^\mu).
$
Appendices~\ref{app:clt_conditions} and~\ref{app:variance_char} give the central limit theorem and the variance upper bound. These results justify $\Psi$ as a criterion for controlling error bounds under fixed designs. The next subsection treats the adaptive procedure, including reuse of all collected trajectories in the final estimate.

\begingroup
\subsection{Error Bound for Adaptive Plug-in Policy Evaluation}
\label{sec:adaptive_plugin_bound}

In CAD, each probe policy is chosen using data from earlier rounds. We analyze the $H$-step estimator in~\eqref{eq:mis_final}, whose target is $J_H(\pi_e)$. The following bound shows how estimation error depends on data coverage and errors in estimating the occupancies.

Let $H$ be the observed horizon, let $B_0,\ldots,B_{K-1}$ be fixed audit budgets, and write $n=n_{\mathrm{off}}+\sum_{k=0}^{K-1}B_k$. Define $q_{H,K}=[n_{\mathrm{off}}d_H^{\pi_b}+\sum_{k=0}^{K-1}B_kd_H^{\mu_k}]/n$.
This is the realized mixture of $H$-step occupancies and can be random because the probes depend on previously collected data. The algorithm estimates it by $\widehat q_K=\hat A_K/n$. Write $\delta_{H,K}=\|\widehat d^{\pi_e}_{H,K}-d_H^{\pi_e}\|_1$ and $\tau_{H,K}=\|\widehat q_K-q_{H,K}\|_\infty$ for the target and sampling occupancy errors. For $\lambda>0$, define the regularized design criterion and the regularization bias as
\begingroup
\[
\Psi_{H,\lambda}(q):=\sum_{s,a}\frac{d_H^{\pi_e}(s,a)^2}{q(s,a)+\lambda},
\qquad
\mathrm{Bias}_{H,\lambda}(q):=-\frac{1-\gamma^H}{1-\gamma}\sum_{s,a}
\frac{\lambda\,d_H^{\pi_e}(s,a)}{q(s,a)+\lambda}\,r(s,a).
\]
\endgroup
\begin{theorem}[Error bound for the adaptive audit estimator]
\label{thm:final-mse}
Consider a finite MDP with rewards satisfying $|R_t|\le R_{\max}$. The offline trajectories are independent and generated by $\pi_b$. At round $k$, the probe policy $\mu_k$ is selected based on data collected before that round, then $B_k$ new trajectories are collected under $\mu_k$. All trajectories are observed for $H$ steps.

Suppose that, with probability one, $\widehat d^{\pi_e}_{H,K}(s,a)=0$ whenever $d_H^{\pi_e}(s,a)=0$. Consider the estimator in~\eqref{eq:mis_final} with $\lambda>0$. Write $m=|\mathrm{supp}(d_H^{\pi_e})|$. For any $\zeta\in(0,1)$, set $\ell=\log\{2\mathrm e\,m[1+\lceil\log_2 n\rceil]/\zeta\}$. Then
\[
\begin{aligned}
&\mathbb E\!\left[\bigl(\widehat J(\widehat w_K)-J_H(\pi_e)\bigr)^2\right]
\le R_{\max}^2\left(\frac{1-\gamma^H}{1-\gamma}\right)^2
\Biggl\{\frac{24m\ell}{n}\,\mathbb E[\Psi_{H,\lambda}(q_{H,K})]\\
&\qquad+\frac{96\ell^2}{n^2\lambda^2}
+\zeta(1+1/\lambda)^2+3\mathbb E\!\left[
\left(\frac{\delta_{H,K}}{\lambda}+\frac{\tau_{H,K}}{\lambda^2}\right)^2\right]
\Biggr\}
+3\mathbb E\bigl[\mathrm{Bias}_{H,\lambda}(q_{H,K})^2\bigr].
\end{aligned}
\]
All expectations are over the collected data and algorithmic randomization.
\end{theorem}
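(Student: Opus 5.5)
The plan is to split the error into three pieces by inserting two intermediate quantities. Let $w^\lambda=d_H^{\pi_e}/(q_{H,K}+\lambda)$ be the oracle regularized weight at the realized mixture. Write
\[
\widehat J(\widehat w_K)-J_H(\pi_e)
=\underbrace{\widehat J(\widehat w_K)-\widehat J(w^\lambda)}_{\text{plug-in error}}
+\underbrace{\widehat J(w^\lambda)-\bar J(w^\lambda)}_{\text{martingale term}}
+\underbrace{\bar J(w^\lambda)-J_H(\pi_e)}_{=\mathrm{Bias}_{H,\lambda}(q_{H,K})}.
\]
Here $\bar J(w)=\frac{1-\gamma^H}{1-\gamma}\sum_{s,a}q_{H,K}(s,a)w(s,a)r(s,a)$ is the conditional-mean analogue. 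Then $(a+b+c)^2\le 3(a^2+b^2+c^2)$ produces the factors $3$. The third term equals the regularization bias exactly, because $q_{H,K}w^\lambda-d_H^{\pi_e}=-\lambda d_H^{\pi_e}/(q_{H,K}+\lambda)$.

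\textbf{Plug-in error.} Each trajectory contributes at most $R_{\max}\frac{1-\gamma^H}{1-\gamma}\max_{s,a}|\widehat w_K-w^\lambda|$. On the target support, use
\[
\widehat w_K-w^\lambda=\frac{\widehat d-d}{\widehat q+\lambda}+d\Bigl(\frac{1}{\widehat q+\lambda}-\frac{1}{q+\lambda}\Bigr).
\]
The first piece is bounded by $\delta_{H,K}/\lambda$, since $|\widehat d-d|\le\|\widehat d-d\|_1$. The second piece is bounded by $\tau_{H,K}/\lambda^2$, since $d\le1$. Off the support both weights vanish, by the support hypothesis. This gives the $(\delta/\lambda+\tau/\lambda^2)^2$ term directly.

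\textbf{Martingale term.} This is the main obstacle, because $w^\lambda$ depends on $q_{H,K}$, which is random through the adaptively chosen $\mu_k$. The per-trajectory summands are therefore not independent, and $w^\lambda$ is not even predictable. My approach is to expand
\[
\widehat J(w^\lambda)-\bar J(w^\lambda)=\sum_{(s,a)\in\mathrm{supp}}w^\lambda(s,a)\,Z_{s,a},
\]
where $Z_{s,a}=\frac1n\sum_i\sum_t\gamma^t\bigl(\mathbf 1[s_t^{(i)}=s,a_t^{(i)}=a]r_t^{(i)}-\text{conditional mean}\bigr)$. Ordered by trajectory, each $Z_{s,a}$ is a martingale, since each $\mu_k$ is predictable. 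Its predictable quadratic variation is at most of order $R_{\max}^2\frac{1-\gamma^H}{1-\gamma}\,q_{H,K}(s,a)/n$, and its increments are bounded by $R_{\max}\frac{1-\gamma^H}{1-\gamma}/n$.

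I would apply Freedman's inequality with a peeling argument over the random variance level, using dyadic shells $2^{-j}$, $j\le\lceil\log_2 n\rceil$. A union bound over the $m$ support pairs and $1+\lceil\log_2 n\rceil$ shells then accounts for the $\ell$ in the statement. On an event of probability at least $1-\zeta$, this gives
\[
|Z_{s,a}|\lesssim\sqrt{q_{H,K}(s,a)\ell/n}+\ell/n \quad\text{simultaneously for all } (s,a).
\]
Multiply by $w^\lambda$ and apply Cauchy–Schwarz over the $m$ pairs:
\[
\Bigl(\sum_{s,a} w^\lambda\sqrt{q\ell/n}\Bigr)^2\le m\frac{\ell}{n}\sum_{s,a}\frac{d^2q}{(q+\lambda)^2}\le\frac{m\ell}{n}\Psi_{H,\lambda}(q_{H,K}).
\]
This yields the $24m\ell\,\Psi/n$ term. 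The residual satisfies $\sum_{s,a} w^\lambda\ell/n\le\ell/(n\lambda)$ because $\sum d\le1$, which yields the $96\ell^2/(n^2\lambda^2)$ term.

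On the complement event, I would use the crude bound $|w^\lambda|\le1/\lambda$ together with the trivial bound $|\bar J|\le\frac{1-\gamma^H}{1-\gamma}R_{\max}$. This gives the error bounded by $R_{\max}\frac{1-\gamma^H}{1-\gamma}(1+1/\lambda)$, contributing $\zeta(1+1/\lambda)^2$. Finally, I would take expectations and track the numerical constants $24$ and $96$ from Freedman's constants and the factor $3$.
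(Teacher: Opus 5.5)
Your proposal follows the paper's proof essentially step for step: the same three-way split through the oracle weight $w^\lambda=d_H^{\pi_e}/(q_{H,K}+\lambda)$, the same pathwise sup-norm bound $\delta_{H,K}/\lambda+\tau_{H,K}/\lambda^2$ for the plug-in error, and the same device of proving Freedman-plus-peeling concentration simultaneously over the $m$ target coordinates (Lemma~\ref{lem:adaptive_coordinate}) before applying the non-predictable weights, followed by Cauchy--Schwarz. One small correction concerns the failure event. There, bound the whole error directly, using $|\widehat J(\widehat w_K)|\le R_{\max}\frac{1-\gamma^H}{1-\gamma}/\lambda$ (from $\widehat w_K\le\widehat d^{\pi_e}_{H,K}/\lambda\le1/\lambda$) and $|J_H(\pi_e)|\le R_{\max}\frac{1-\gamma^H}{1-\gamma}$, rather than crudely bounding only the martingale term inside the three-term split, which would give $3\zeta(1+1/\lambda)^2$ instead of the stated $\zeta(1+1/\lambda)^2$.
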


   The bound depends on the achieved sampling occupancy through $\Psi_{H,\lambda}(q_{H,K})$, and on errors in estimating the target and sampling occupancies through $\delta_{H,K}$ and $\tau_{H,K}$. For a fixed $\zeta$, the factor $\ell$ grows slowly with the number of target state--action pairs $m$ and the sample size $n$. We can choose $\zeta\in(0,1)$ when evaluating the bound: a smaller $\zeta$ reduces the term $\zeta(1+1/\lambda)^2$ but increases the terms involving $\ell$. The following corollary gives sufficient conditions for $\widehat J(\widehat w_K)$ to converge in mean square to $J_H(\pi_e)$ as the sample size increases.

\begin{corollary}[Mean-square consistency]
\label{cor:adaptive_consistency}
Under the conditions of Theorem~\ref{thm:final-mse}, suppose that, as $n\to\infty$, $\lambda=\lambda_n$ satisfies $\lambda_n\to0$ and $n\lambda_n/\log n\to\infty$. Suppose that $\mathbb E[\delta_{H,K}^2]=o(\lambda_n^2)$ and $\mathbb E[\tau_{H,K}^2]=o(\lambda_n^4)$. Define $c_{H,K}=\min_{(s,a)\in\mathrm{supp}(d_H^{\pi_e})}q_{H,K}(s,a)$. If there exists a constant $c>0$ such that $\Pr(c_{H,K}\ge c)\to1$, then $\mathbb E[\{\widehat J(\widehat w_K)-J_H(\pi_e)\}^2]\to0$.
\end{corollary}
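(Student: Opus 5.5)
The approach is to plug the rate conditions into the bound of Theorem~\ref{thm:final-mse} and show that every term on its right-hand side tends to zero. The free parameter $\zeta$ will be chosen to depend on $n$; I will take $\zeta=\zeta_n=\lambda_n^2/n$. Everything else in the bound is either fixed or controlled by the hypotheses.

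With this choice, $\zeta_n(1+1/\lambda_n)^2\le 4/n\to0$ once $\lambda_n\le1$. The log factor becomes $\ell_n=\log\{2\mathrm e\,m(1+\lceil\log_2 n\rceil)n/\lambda_n^2\}$. Because $n\lambda_n\to\infty$, we have $1/\lambda_n^2\le n^2$ eventually, so $\ell_n=O(\log n)$; here $m$ is fixed since the MDP is finite. The next term satisfies
\[
\frac{96\,\ell_n^2}{n^2\lambda_n^2}=O\!\left((\log n/(n\lambda_n))^2\right)\to0
\]
by the assumption $n\lambda_n/\log n\to\infty$. The estimation-error term is bounded using $(a+b)^2\le2a^2+2b^2$:
\[
3\,\mathbb E\bigl[(\delta/\lambda_n+\tau/\lambda_n^2)^2\bigr]\le 6\,\mathbb E[\delta^2]/\lambda_n^2+6\,\mathbb E[\tau^2]/\lambda_n^4,
\]
and both pieces are $o(1)$ by hypothesis.

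The remaining two terms, $\frac{24m\ell_n}{n}\mathbb E[\Psi_{H,\lambda_n}(q_{H,K})]$ and $3\mathbb E[\mathrm{Bias}_{H,\lambda_n}(q_{H,K})^2]$, are where the coverage assumption enters; this is the main step. I split on the event $E_n=\{c_{H,K}\ge c\}$, using in each case a crude deterministic bound off $E_n$ and a sharp bound on $E_n$.

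For $\Psi$: on $E_n$, each denominator is at least $c$ on the target support, so $\Psi_{H,\lambda}\le\sum d_H^{\pi_e}(s,a)^2/c\le 1/c$. Off $E_n$, $\Psi_{H,\lambda}\le 1/\lambda_n$. Hence
\[
\frac{\ell_n}{n}\,\mathbb E[\Psi_{H,\lambda_n}(q_{H,K})]\le \frac{\ell_n}{nc}+\frac{\ell_n}{n\lambda_n}\to0,
\]
using $n\lambda_n/\log n\to\infty$ again. The probability of $E_n^c$ is not even needed here.

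For the bias: on $E_n$, $|\mathrm{Bias}|\le\frac{1-\gamma^H}{1-\gamma}R_{\max}\lambda_n/c\to0$. Everywhere, $|\mathrm{Bias}|\le\frac{1-\gamma^H}{1-\gamma}R_{\max}$ because $\lambda/(q+\lambda)\le1$. Therefore
\[
\mathbb E[\mathrm{Bias}^2]\le C\bigl(\lambda_n^2/c^2+\Pr(E_n^c)\bigr)\to0 .
\]
This is the only place where $\Pr(c_{H,K}\ge c)\to1$ is used.

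Summing the bounds, every term on the right-hand side of Theorem~\ref{thm:final-mse} vanishes, which gives mean-square consistency. The only delicate point is choosing $\zeta_n$ so that $\zeta_n/\lambda_n^2\to0$ while $\log(1/\zeta_n)$ stays $O(\log n)$. The choice $\lambda_n^2/n$ achieves both because $\lambda_n\ge \log n/n$ eventually.
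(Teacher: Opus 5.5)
Your proposal is correct and follows essentially the paper's argument: you plug the rate conditions into Theorem~\ref{thm:final-mse} with an $n$-dependent $\zeta$, and you split only the regularization-bias term on the coverage event $\{c_{H,K}\ge c\}$. The differences are minor: the paper takes $\zeta=n^{-3}$ rather than $\lambda_n^2/n$, and it bounds $\Psi_{H,\lambda_n}\le 1/\lambda_n$ directly, so your case split for the $\Psi$ term is harmless but unnecessary.
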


Since $\Psi_{H,\lambda_n}(q_{H,K})\le1/\lambda_n$, the condition $n\lambda_n/\log n\to\infty$ controls the sampling terms. If the collected data sufficiently cover the state--action pairs visited by the target policy, the effect of adding $\lambda_n$ to the denominator vanishes as $\lambda_n\to0$. It allows the offline logs alone to leave some target state--action pairs uncovered, provided that the combined sample achieves the stated coverage.

\endgroup

\subsection{Policy-Evaluation Error Bound for Continuous State Spaces}
\label{sec:continuous_extension}

\begingroup

For the continuous-state estimator in Section~\ref{sec:continuous_algorithm}, the superscript $\varphi$ distinguishes cell-level quantities from the state-level quantities in Section~\ref{sec:adaptive_plugin_bound}. For a policy $\pi$, define
\[
\begin{aligned}
d_H^{\pi,\varphi}(j,a)&=\int_{\mathcal S}\varphi_j(s)\,d_H^\pi(ds,a),\\
\bar r_H^{\pi,\varphi}(j,a)
&=\frac{\int_{\mathcal S}\varphi_j(s)r(s,a)\,d_H^\pi(ds,a)}{d_H^{\pi,\varphi}(j,a)},
\qquad d_H^{\pi,\varphi}(j,a)>0.
\end{aligned}
\]
For a state region $B\subseteq\mathcal S$, $d_H^\pi(B,a)$ denotes the normalized discounted visitation to $B$ with action $a$ over the first $H$ steps. We write $d_H^\pi(ds,a)$ when integrating over states. The quantity $d_H^{\pi,\varphi}(j,a)$ is the mass assigned to cell $j$ for action $a$, and $\bar r_H^{\pi,\varphi}(j,a)$ is the corresponding cell-weighted mean reward. Set $\bar r_H^{\pi,\varphi}(j,a)=0$ when $d_H^{\pi,\varphi}(j,a)=0$. The population sampling occupancy is
\[
q_{H,K}^{\varphi}(j,a)
=\frac{n_{\mathrm{off}}d_H^{\pi_b,\varphi}(j,a)
+\sum_{k=0}^{K-1}B_kd_H^{\tilde\mu_k,\varphi}(j,a)}{n},
\]
where $\tilde\mu_k$ is the policy executed by~\eqref{eq:continuous_probe_exec}. Define
\[
\begin{aligned}
\delta_{H,K}^{\varphi}
&=\sum_{j=1}^{K_\varphi}\sum_{a\in\mathcal A}
\left|\widehat d^{\pi_e,\varphi}_{H,K}(j,a)-d_H^{\pi_e,\varphi}(j,a)\right|,\\
\tau_{H,K}^{\varphi}
&=\max_{1\le j\le K_\varphi,\,a\in\mathcal A}
\left|\widehat q_K^{\varphi}(j,a)-q_{H,K}^{\varphi}(j,a)\right|.
\end{aligned}
\]
The estimates $\widehat d^{\pi_e,\varphi}_{H,K}$ and $\widehat q_K^{\varphi}$ are those in~\eqref{eq:continuous_mis_final}. Analogously to the definitions in Section~\ref{sec:adaptive_plugin_bound}, for a sampling occupancy $q$ on the cells, define
\[
\begin{aligned}
\Psi_{H,\lambda}^{\varphi}(q)
&:=\sum_{j=1}^{K_\varphi}\sum_{a\in\mathcal A}
\frac{d_H^{\pi_e,\varphi}(j,a)^2}{q(j,a)+\lambda},\\
\mathrm{Bias}_{H,\lambda}^{\varphi}(q)
&:=-\frac{1-\gamma^H}{1-\gamma}
\sum_{j=1}^{K_\varphi}\sum_{a\in\mathcal A}
\frac{\lambda\,d_H^{\pi_e,\varphi}(j,a)\bar r_H^{\pi_e,\varphi}(j,a)}{q(j,a)+\lambda}.
\end{aligned}
\]

\begin{corollary}[Continuous-state extension]
\label{cor:continuous_extension}
Consider a continuous state space, a finite action space, and a fixed aggregation map $\varphi$ specified before collecting the trajectories used for estimation. Suppose that the sampling and bounded-reward conditions of Theorem~\ref{thm:final-mse} hold, and that, with probability one, $\widehat d^{\pi_e,\varphi}_{H,K}(j,a)=0$ whenever $d_H^{\pi_e,\varphi}(j,a)=0$. Assume that, for some $\eta\ge0$, almost surely,
\[
\left|\bar r_H^{\pi,\varphi}(j,a)-\bar r_H^{\pi_e,\varphi}(j,a)\right|\le\eta,
\qquad
\pi\in\{\pi_b,\tilde\mu_0,\ldots,\tilde\mu_{K-1}\},
\]
whenever $d_H^{\pi_e,\varphi}(j,a)>0$ and $d_H^{\pi,\varphi}(j,a)>0$.

Let $m=|\{(j,a):1\le j\le K_\varphi,\ a\in\mathcal A,\ d_H^{\pi_e,\varphi}(j,a)>0\}|$ be the number of cell--action pairs visited by the target policy. For $m\ge1$, $\lambda>0$, and $\zeta\in(0,1)$, set $\ell=\log\{2\mathrm e\,m[1+\lceil\log_2 n\rceil]/\zeta\}$ as in Theorem~\ref{thm:final-mse}. The estimator in~\eqref{eq:continuous_mis_final} then satisfies
\[
\begin{aligned}
&\mathbb E\!\left[\bigl(\widehat J(\widehat w_K^{\varphi})-J_H(\pi_e)\bigr)^2\right]
\le R_{\max}^2\left(\frac{1-\gamma^H}{1-\gamma}\right)^2
\Biggl\{\frac{24m\ell}{n}\,\mathbb E[\Psi_{H,\lambda}^{\varphi}(q_{H,K}^{\varphi})]\\
&\qquad+\frac{96\ell^2}{n^2\lambda^2}
+\zeta(1+1/\lambda)^2
+3\mathbb E\!\left[\left(\frac{\delta_{H,K}^{\varphi}}{\lambda}
+\frac{\tau_{H,K}^{\varphi}}{\lambda^2}\right)^2\right]\Biggr\}\\
&\qquad+6\mathbb E\bigl[\mathrm{Bias}_{H,\lambda}^{\varphi}(q_{H,K}^{\varphi})^2\bigr]
+6\left(\frac{1-\gamma^H}{1-\gamma}\right)^2\eta^2.
\end{aligned}
\]

For $m\ge1$, let $c_{H,K}^{\varphi}=\min_{(j,a):\,d_H^{\pi_e,\varphi}(j,a)>0}q_{H,K}^{\varphi}(j,a)$. If $\lambda_n\to0$, $n\lambda_n/\log n\to\infty$, $\mathbb E[(\delta_{H,K}^{\varphi})^2]=o(\lambda_n^2)$, $\mathbb E[(\tau_{H,K}^{\varphi})^2]=o(\lambda_n^4)$, and $\Pr(c_{H,K}^{\varphi}\ge c)\to1$ for some $c>0$, then
\[
\limsup_{n\to\infty}\mathbb E\!\left[\{\widehat J(\widehat w_K^{\varphi})-J_H(\pi_e)\}^2\right]\le6\left(\frac{1-\gamma^H}{1-\gamma}\right)^2\eta^2.
\]
If the reward differences admit bounds $\eta=\eta_n\to0$ as $n\to\infty$, these conditions imply $\mathbb E[\{\widehat J(\widehat w_K^{\varphi})-J_H(\pi_e)\}^2]\to0$.
\end{corollary}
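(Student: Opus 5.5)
The plan is to reduce the continuous-state estimator to the finite-state analysis of Theorem~\ref{thm:final-mse} at the level of cells, and then charge the mismatch between cell-level and state-level rewards to $\eta$.

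\textbf{Cell-level reformulation.} Write the estimator in \eqref{eq:continuous_mis_final} as
\[
\widehat J(\widehat w_K^{\varphi})=\sum_{j,a}\widehat w_K^{\varphi}(j,a)\,\widehat G(j,a),
\qquad
\widehat G(j,a)=\frac1n\sum_{i}\sum_t\gamma^t r_t^{(i)}\varphi_j(s_t^{(i)})\mathbf 1\{a_t^{(i)}=a\}.
\]
Conditionally on the past, each round's trajectories have
\[
\mathbb E\widehat G(j,a)=\frac{1-\gamma^H}{1-\gamma}\,q_{H,K}^{\varphi}(j,a)\,\tilde r(j,a),
\]
where $\tilde r(j,a)$ is the occupancy-weighted average of $\bar r_H^{\pi,\varphi}(j,a)$ over $\pi\in\{\pi_b,\tilde\mu_0,\dots,\tilde\mu_{K-1}\}$. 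This average is a convex combination, so $|\tilde r-\bar r_H^{\pi_e,\varphi}|\le\eta$ on the target support. Because $\varphi$ is fixed before estimation, the cell weights act exactly like state indicators in the finite-state argument. The one change is that the indicator becomes the bounded weight $\varphi_j\in[0,1]$, and $\sum_j\varphi_j=1$ keeps the per-trajectory sums bounded just as before.

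\textbf{Decomposition.} I would write $\widehat J-J_H(\pi_e)=T_1+T_2+T_3$ with
\[
T_3=\frac{1-\gamma^H}{1-\gamma}\sum_{j,a}d_H^{\pi_e,\varphi}(j,a)\,\bigl\{\tilde r(j,a)-\bar r_H^{\pi_e,\varphi}(j,a)\bigr\}\frac{q_{H,K}^\varphi}{q_{H,K}^\varphi+\lambda}.
\]
Here I use $J_H(\pi_e)=\frac{1-\gamma^H}{1-\gamma}\sum_{j,a}d_H^{\pi_e,\varphi}\bar r_H^{\pi_e,\varphi}$, which follows from $\sum_j\varphi_j=1$. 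The term $T_2=\mathrm{Bias}^\varphi_{H,\lambda}(q^\varphi_{H,K})$ is the regularization bias evaluated with $\bar r^{\pi_e,\varphi}$. The term $T_1$ collects the martingale (sampling) fluctuation and the plug-in errors from $\delta^\varphi_{H,K}$ and $\tau^\varphi_{H,K}$. The ratio $q/(q+\lambda)$ is at most one, so $|T_3|\le\frac{1-\gamma^H}{1-\gamma}\eta\sum d_H^{\pi_e,\varphi}=\frac{1-\gamma^H}{1-\gamma}\eta$.

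\textbf{Bounding $T_1$.} I would repeat the proof of Theorem~\ref{thm:final-mse} verbatim on the finite index set of cell--action pairs. This gives a Freedman-type concentration over the $m$ supported pairs, a peeling over $\lceil\log_2 n\rceil$ scales (which produces $\ell$), and a failure-event contribution $\zeta(1+1/\lambda)^2$. The plug-in error is controlled by the Lipschitz property of $x/(y+\lambda)$, namely
\[
|\widehat w-w_\lambda|\le|\widehat d-d|/\lambda+d\,|\widehat q-q|/\lambda^2 .
\]
Together these yield the bracketed expression, with the factor $3$ from $(a+b+c)^2\le3(\cdot)$ inside that proof. I expect this step to be the main obstacle. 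The theorem's proof likely bounded the reward fluctuation using $|R_t|\le R_{\max}$ together with indicator variances, and I need to check that replacing indicators by weights only decreases the conditional variances (since $\varphi_j^2\le\varphi_j$) so that the same constants $24$ and $96$ survive.

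\textbf{Combining.} I would use $(T_1+T_2+T_3)^2\le 2T_1^2+2(T_2+T_3)^2$ and $(T_2+T_3)^2\le 2T_2^2+2T_3^2$, adjusting how the factor $3$ is split, which produces the stated constants $6$ on the bias and $\eta^2$ terms. For the limit statement, I substitute $\Psi^\varphi_{H,\lambda}\le1/\lambda_n$ and choose $\zeta=\zeta_n=o(\lambda_n^2)$, for example $\zeta_n=\lambda_n^3$, so that $\ell=O(\log n)$. The sampling terms are then $O(\log n/(n\lambda_n))+O(\log^2n/(n^2\lambda_n^2))\to0$, and the estimation terms vanish by the assumed rates. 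The bias satisfies $|\mathrm{Bias}^\varphi|\le\frac{1-\gamma^H}{1-\gamma}R_{\max}\lambda_n/c$ on the event $\{c^\varphi_{H,K}\ge c\}$ and is bounded by $\frac{1-\gamma^H}{1-\gamma}R_{\max}$ always, so dominated convergence gives $\mathbb E[\mathrm{Bias}^2]\to0$. This leaves the $6(\cdot)^2\eta^2$ term, and taking $\eta_n\to0$ gives consistency, exactly as in Corollary~\ref{cor:adaptive_consistency}.
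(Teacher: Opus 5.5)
Your proposal is correct and follows essentially the paper's route: the paper likewise reruns the coordinate-concentration and plug-in-weight steps of Theorem~\ref{thm:final-mse} with cell weights in place of indicators. It isolates an aggregation term $\Delta_{H,K}$ identical to your $T_3$, namely the predictable reward contribution minus its version with the target cell means, bounds it pathwise by $\frac{1-\gamma^H}{1-\gamma}\eta$, and argues consistency as in Corollary~\ref{cor:adaptive_consistency}; the paper takes $\zeta=n^{-3}$, and your $\zeta_n=\lambda_n^3$ is equally valid because $n\lambda_n/\log n\to\infty$ keeps $\ell=O(\log n)$. The one slip is in the constants: your displayed split $2T_1^2+2(T_2+T_3)^2$ puts coefficient $4$ on the sampling and plug-in pieces, so to recover exactly $24$, $96$, $3$, $6$, $6$ you should apply $(u+v+w)^2\le 3(u^2+v^2+w^2)$ with $u$ the martingale fluctuation, $v$ the plug-in error and $w=T_2+T_3$, and only then use $(T_2+T_3)^2\le 2T_2^2+2T_3^2$, which is what the paper does.
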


The fixed-aggregation condition holds for a prespecified map or one fitted using independent pilot data excluded from the value estimator. Algorithm~\ref{alg:cad_continuous} instead constructs the aggregation from the operational logs and reuses those logs for estimation. Corollary~\ref{cor:continuous_extension} does not account for this dependence; extending the bound to this data reuse requires further analysis.

The bound highlights a trade-off in the choice of state aggregation. Finer cells may reduce differences in average rewards, but can be harder to cover and estimate accurately with the available data. A useful aggregation therefore groups states with similar rewards for each action while retaining enough observations to estimate cell occupancies.

\endgroup

\section{Numerical Study}
\label{sec:simulation}

We evaluate the design framework in two simulated environments: Gridworld and LunarLander, the latter with a continuous state space. Both experiments measure end-to-end performance across a range of offline log sizes $n_{\mathrm{off}}$ and audit budgets $B$.
\begin{figure}[htbp]
  \centering
  \begin{subfigure}[b]{0.49\textwidth}
    \centering
    \includegraphics[width=\linewidth,height=5.6cm,keepaspectratio]{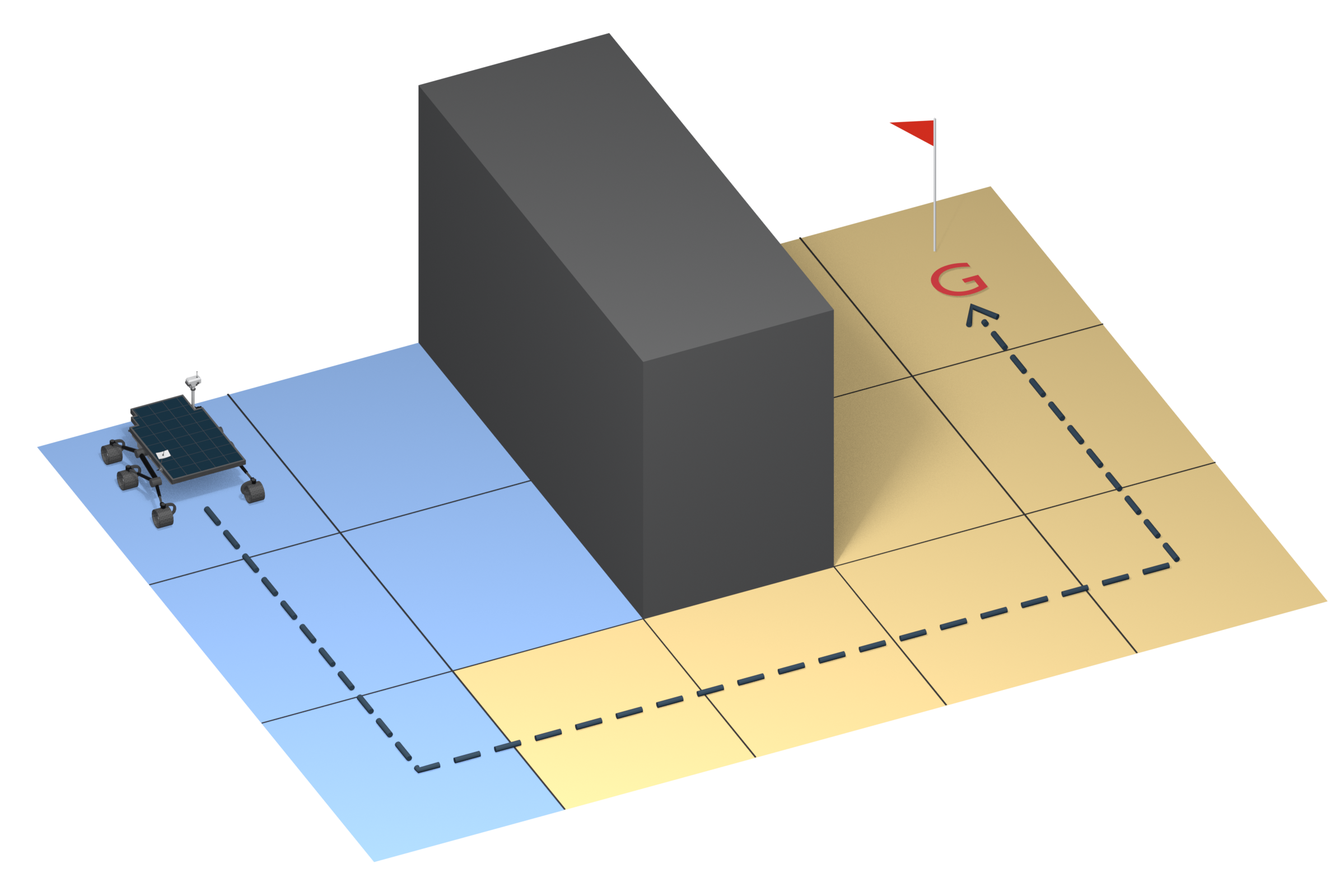}
    \caption{}
    \label{fig:setting_3d}
  \end{subfigure}
  \hspace{0.01\textwidth}%
  \begin{subfigure}[b]{0.49\textwidth}
    \centering
    \includegraphics[width=\linewidth,height=5.1cm,keepaspectratio]{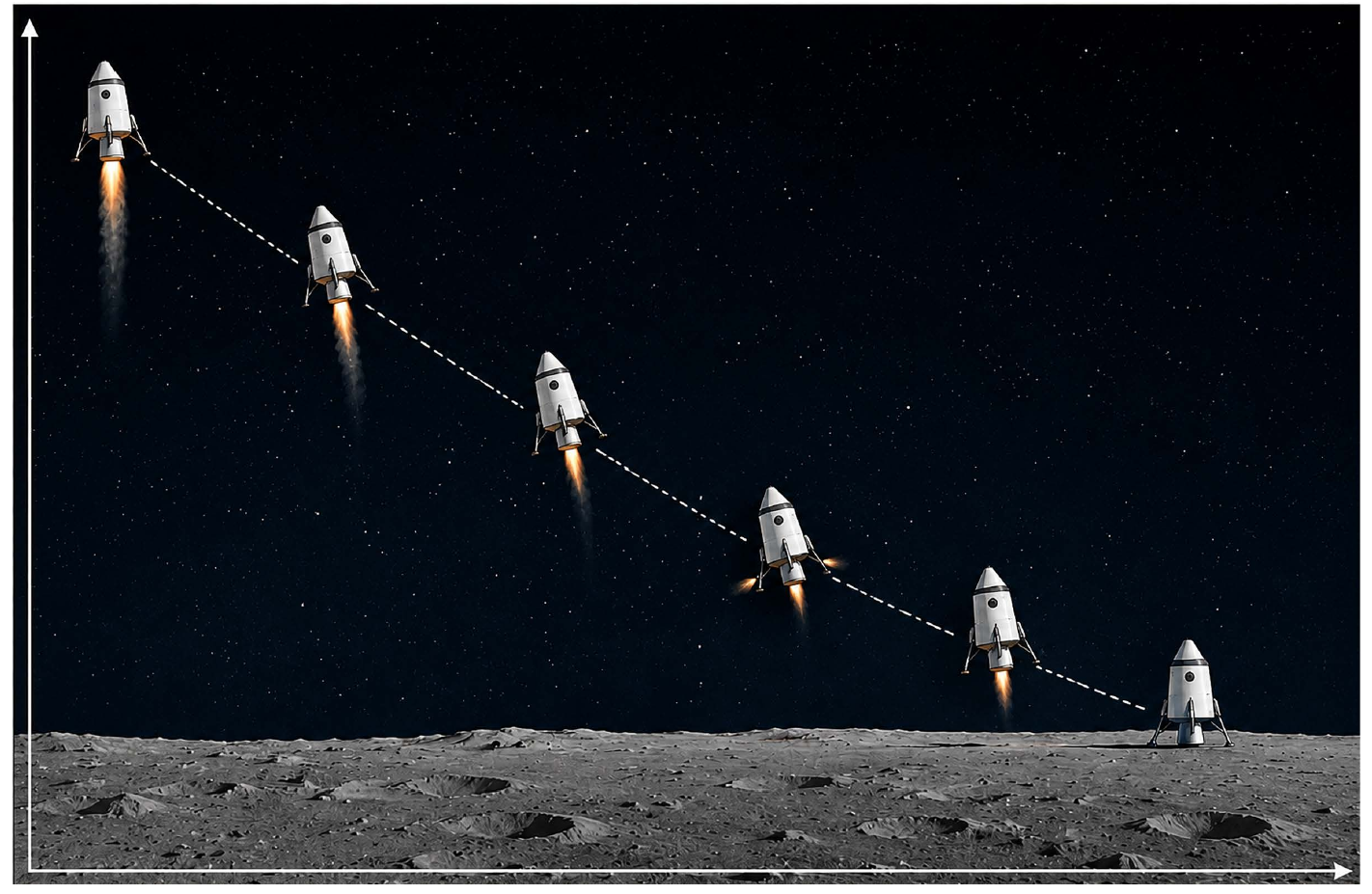}
    \caption{}
    \label{fig:lunarlander_setting}
  \end{subfigure}
  \caption{Simulation environments. (a) The $5\times 3$ Gridworld: the agent must detour around a wall to reach the goal $G$. (b) LunarLander-v3: a rocket learns to land safely on the ground.}
  \label{fig:setting}
\end{figure}

\subsection{Policy Evaluation in Gridworld}
\label{sec:gridworld_setup}
We consider a $5\times 3$ Gridworld task with $|\mathcal{S}|=15$ states, $|\mathcal{A}|=4$ actions, discount $\gamma=0.97$, and horizon $H=20$ (Figure~\ref{fig:setting_3d}). A wall blocks the direct horizontal path from the upper-left start to the upper-right goal, forcing a detour through the bottom row. The environment executes the selected action with probability $0.9$ and samples an action uniformly from the four available actions with probability $0.1$.    The goal is an absorbing state: after reaching it, the agent remains there and receives reward $+1$ at every remaining step of the horizon.    In all other states, it receives $-0.01$ per step.

The behavior policy $\pi_b$ selects an action uniformly at random with probability $0.7$ and selects left with probability $0.3$. It mainly visits the left columns, shown in blue in Figure~\ref{fig:setting}(a), and rarely reaches the bottom row or the region right of the wall. The target policy $\pi_e$ uses a softmax rule to favor actions toward the goal. It frequently visits regions poorly covered by the behavior policy, creating an \emph{audit blind spot} where the offline logs provide little information for evaluating $\pi_e$.

In the end-to-end experiment, the transition kernel $P$ and all occupancy measures are estimated from the data collected in each round (Algorithm~\ref{alg:probe_design}). The target-policy value $J_H(\pi_e)=3.016$ provides the reference for measuring estimation error.    We use $K=3$ audit rounds with the budget allocation in Section~\ref{sec:algorithm} and $\lambda=\varepsilon=10^{-6}$. The offline dataset size is $n_{\mathrm{off}}\in\{100,200,500\}$, the audit budget is $B\in\{5,10,20,50,100,200\}$, and all metrics are averaged over $50$ independent replications.

We compare \textbf{MIS} \citep{liu2018breaking}, \textbf{random probe} (uniform random audit actions combined with offline data), \textbf{target probe} (auditing with $\pi_e$ itself), and \textbf{CAD}.     Two additional offline baselines, per-decision importance sampling (\textbf{PDIS}) \citep{precup2000eligibility} and doubly robust estimation (\textbf{DR}) \citep{jiang2016doubly,dudik2014doubly}, use the same operational logs.

Table~\ref{tab:endtoend} reports the MSE relative to $J_H(\pi_e)$, estimated from $50$ replications in each of the eighteen $(n_{\mathrm{off}},B)$ settings.    Panel (a) reports the offline MIS, PDIS, and DR results, while panel (b) reports the probe methods across audit budgets. Neither PDIS nor DR consistently improves on MIS across the three offline sample sizes.        The target probe improves coverage, but CAD achieves lower MSE at small audit budgets. As the budget increases, both methods yield low estimation error. These results show that probe design can substantially reduce estimation error.    CAD steers probes toward poorly covered pairs while accounting for the fixed offline logs.

\begin{table}[htbp]
  \centering
  \caption{End-to-end MSE on the $5\times 3$ Gridworld ($50$ replications; $\lambda=\varepsilon=10^{-6}$), target $J_H(\pi_e)=3.016$. Panel (a) reports offline baselines using only the operational logs; panel (b) reports methods that add $B$ probe trajectories. Bold marks the lowest MSE among probe methods at each $(n_{\mathrm{off}}, B)$.}
  \label{tab:endtoend}
  \small
  \setlength{\tabcolsep}{4pt}
  \renewcommand{\arraystretch}{1.15}
  \begin{tabular}{@{}c @{\hspace{1.2em}} ccc @{\hspace{1.2em}} ccc @{\hspace{1.2em}} ccc@{}}
    \multicolumn{10}{@{}l}{\textit{(a) Offline evaluation}} \\[3pt]
    \toprule
    & \multicolumn{3}{c}{$n_{\mathrm{off}} = 100$} & \multicolumn{3}{c}{$n_{\mathrm{off}} = 200$} & \multicolumn{3}{c}{$n_{\mathrm{off}} = 500$} \\
    \cmidrule(lr){2-4} \cmidrule(lr){5-7} \cmidrule(lr){8-10}
    & MIS & PDIS & DR & MIS & PDIS & DR & MIS & PDIS & DR \\
    & 8.812 & 9.696 & 11.768 & 8.018 & 1661.480 & 7.065 & 5.668 & 293.858 & 6.880 \\
    \bottomrule
    \addlinespace[1.5ex]
    \multicolumn{10}{@{}l}{\textit{(b) Evaluation with probes}} \\[3pt]
    \toprule
    & \multicolumn{3}{c}{$n_{\mathrm{off}} = 100$} & \multicolumn{3}{c}{$n_{\mathrm{off}} = 200$} & \multicolumn{3}{c}{$n_{\mathrm{off}} = 500$} \\
    \cmidrule(lr){2-4} \cmidrule(lr){5-7} \cmidrule(lr){8-10}
    $B$ & Rand. & Targ. & CAD & Rand. & Targ. & CAD & Rand. & Targ. & CAD \\
    \midrule
      5 & 8.494 & 0.744 & \textbf{0.085} & 7.375 & 0.425 & \textbf{0.120} & 6.629 & 0.257 & \textbf{0.093} \\
     10 & 8.108 & 0.212 & \textbf{0.068} & 7.600 & 0.171 & \textbf{0.053} & 6.461 & 0.114 & \textbf{0.033} \\
     20 & 7.411 & 0.062 & \textbf{0.040} & 7.182 & 0.056 & \textbf{0.028} & 3.221 & 0.033 & \textbf{0.017} \\
     50 & 3.866 & 0.019 & \textbf{0.016} & 3.670 & 0.019 & \textbf{0.017} & 3.255 & 0.017 & \textbf{0.012} \\
    100 & 2.474 & 0.013 & \textbf{0.011} & 1.814 & \textbf{0.006} & 0.008 & 0.724 & 0.008 & \textbf{0.006} \\
    200 & 0.431 & 0.006 & \textbf{0.004} & 0.542 & 0.004 & \textbf{0.004} & 0.248 & 0.004 & \textbf{0.004} \\
    \bottomrule
  \end{tabular}
\end{table}

\subsection{Policy Evaluation in LunarLander}
\label{sec:lunarlander}

To evaluate CAD beyond the tabular setting, we apply it to \textbf{LunarLander-v3}~\citep{gymnasium}, a continuous state control task with an $8$-dimensional state space and $|\mathcal{A}|=4$ discrete actions (Figure~\ref{fig:lunarlander_setting}). The eight state components are horizontal and vertical position, horizontal and vertical velocity, orientation, angular velocity, and two indicators of ground contact for the landing legs. The four actions are to fire no engine, the left side engine, the main engine, or the right side engine.     We use the reward function built into LunarLander-v3, which favors approaching the landing pad slowly and upright, penalizes engine use, and assigns terminal rewards of $+100$ for a safe landing and $-100$ for a crash.    The behavior policy $\pi_b$ is a landing policy pretrained using proximal policy optimization (PPO).    The target policy follows the same action distribution, except that below altitude $1.0$ it replaces the action by firing the left side engine with probability $0.2$. This perturbation induces lateral drift and brings the target into low-altitude states poorly represented in the operational logs. We use $\gamma=0.99$ and $H=1000$. An independent pool of $20{,}000$ target-policy episodes gives a Monte Carlo reference for the expected discounted return, $J_H(\pi_e)\approx14$.
We implement Algorithm~\ref{alg:cad_continuous} with $K_\varphi=160$ cells, $m=4$ nearest cluster centers, and $K=3$ audit rounds. Features are standardized on this pool, and the abstraction remains fixed across rounds. We train $M=5$ transition models, using the multilayer perceptron (MLP) structure and Adam~\citep{kingma2015adam} algorithm. Each model generates $N_{\mathrm{sim}}=500$ target-policy episodes. All simulated episodes enter the target-occupancy estimate and the pooled transition counts.

We compare MIS, random probe, target probe, and CAD. Here, the random probe follows $\pi_b$ with probability $0.8$ and chooses a uniformly random action otherwise, while the target probe executes $\pi_e$. We use $n_{\mathrm{off}}\in\{1000,2000,5000\}$ and $B\in\{10,20,50,100,200\}$, with $30$ replications per setting. We set $\lambda=\varepsilon=10^{-6}/(1-\gamma^H)$.

Table~\ref{tab:lunarlander} shows that random probing provides little improvement over offline MIS. CAD consistently achieves lower MSE than random and target probing and improves on PDIS and DR in most settings.
\begin{table}[htbp]
  \centering
  \caption{MSE on LunarLander ($K_\varphi=160$, $K=3$ rounds, and $\lambda=\varepsilon=10^{-6}/(1-\gamma^H)$). Panel (a) reports offline baselines; panel (b) reports methods that add $B$ probe trajectories. Bold marks the lowest MSE among probe methods at each $(n_{\mathrm{off}}, B)$.}
  \label{tab:lunarlander}
  \small
  \setlength{\tabcolsep}{4pt}
  \renewcommand{\arraystretch}{1.15}
  \begin{tabular}{@{}c @{\hspace{1.2em}} ccc @{\hspace{1.2em}} ccc @{\hspace{1.2em}} ccc@{}}
    \multicolumn{10}{@{}l}{\textit{(a) Offline evaluation}} \\[3pt]
    \toprule
    & \multicolumn{3}{c}{$n_{\mathrm{off}} = 1000$} & \multicolumn{3}{c}{$n_{\mathrm{off}} = 2000$} & \multicolumn{3}{c}{$n_{\mathrm{off}} = 5000$} \\
    \cmidrule(lr){2-4} \cmidrule(lr){5-7} \cmidrule(lr){8-10}
    & MIS & PDIS & DR & MIS & PDIS & DR & MIS & PDIS & DR \\
    & 120.68 & 109.69 & 134.84 & 123.59 & 164.15 & 131.51 & 155.73 & 102.74 & 66.52 \\
    \bottomrule
    \addlinespace[1.5ex]
    \multicolumn{10}{@{}l}{\textit{(b) Evaluation with probes}} \\[3pt]
    \toprule
    & \multicolumn{3}{c}{$n_{\mathrm{off}} = 1000$} & \multicolumn{3}{c}{$n_{\mathrm{off}} = 2000$} & \multicolumn{3}{c}{$n_{\mathrm{off}} = 5000$} \\
    \cmidrule(lr){2-4} \cmidrule(lr){5-7} \cmidrule(lr){8-10}
    $B$ & Rand. & Targ. & CAD & Rand. & Targ. & CAD & Rand. & Targ. & CAD \\
    \midrule
     10 & 137.84 & 112.63 & \textbf{ 63.41} & 122.20 & 109.41 & \textbf{ 79.81} & 158.51 & 155.84 & 70.15 \\
     20 & 133.45 &  90.94 & \textbf{ 32.05} & 131.04 & 105.23 & \textbf{ 42.04} & 157.12 & 138.53 & \textbf{ 62.85} \\
     50 & 162.29 &  74.91 & \textbf{ 28.05} & 146.83 &  75.60 & \textbf{ 35.04} & 145.62 &  91.80 & \textbf{ 33.73} \\
    100 & 154.04 &  48.59 & \textbf{ 30.70} & 140.22 &  66.84 & \textbf{ 22.16} & 146.55 &  82.77 & \textbf{ 33.54} \\
    200 & 184.79 &  55.46 & \textbf{ 23.75} & 126.95 &  57.62 & \textbf{  6.43} & 147.54 &  58.25 & \textbf{ 19.75} \\
    \bottomrule
  \end{tabular}
\end{table}

\section{Real-World Case Study: Autonomous Driving}
\label{sec:driving}

Our real-world study audits an autonomous driving policy against real traffic (Figure~\ref{fig:driving_intro}). We work inside Waymax, a hardware-accelerated driving simulator, running the V-Max reinforcement learning pipeline over logs from the Waymo Open Motion Dataset \citep{ettinger2021waymo, gulino2023waymax}.
Each vehicle carries five LiDARs and five cameras. The dataset gives, for every tracked object, its position, heading, and velocity at each time step, together with the surrounding road geometry.    Positions and velocities are expressed in the world coordinate system. Headings are measured in radians, and object dimensions in meters.  Figure~\ref{fig:driving_intro} illustrates selected logged fields for an ego vehicle and a pedestrian. The dataset provides approximately $450{,}000$ scenarios for model training. A scenario spans $91$ frames taken a tenth of a second apart, about nine seconds of traffic.    We evaluate the policies on scenarios excluded from target-policy training. Each rollout runs for up to $H=80$ steps, with $\gamma=0.9$.

\begin{figure}[!htb]
\centering
\includegraphics[width=0.95\textwidth]{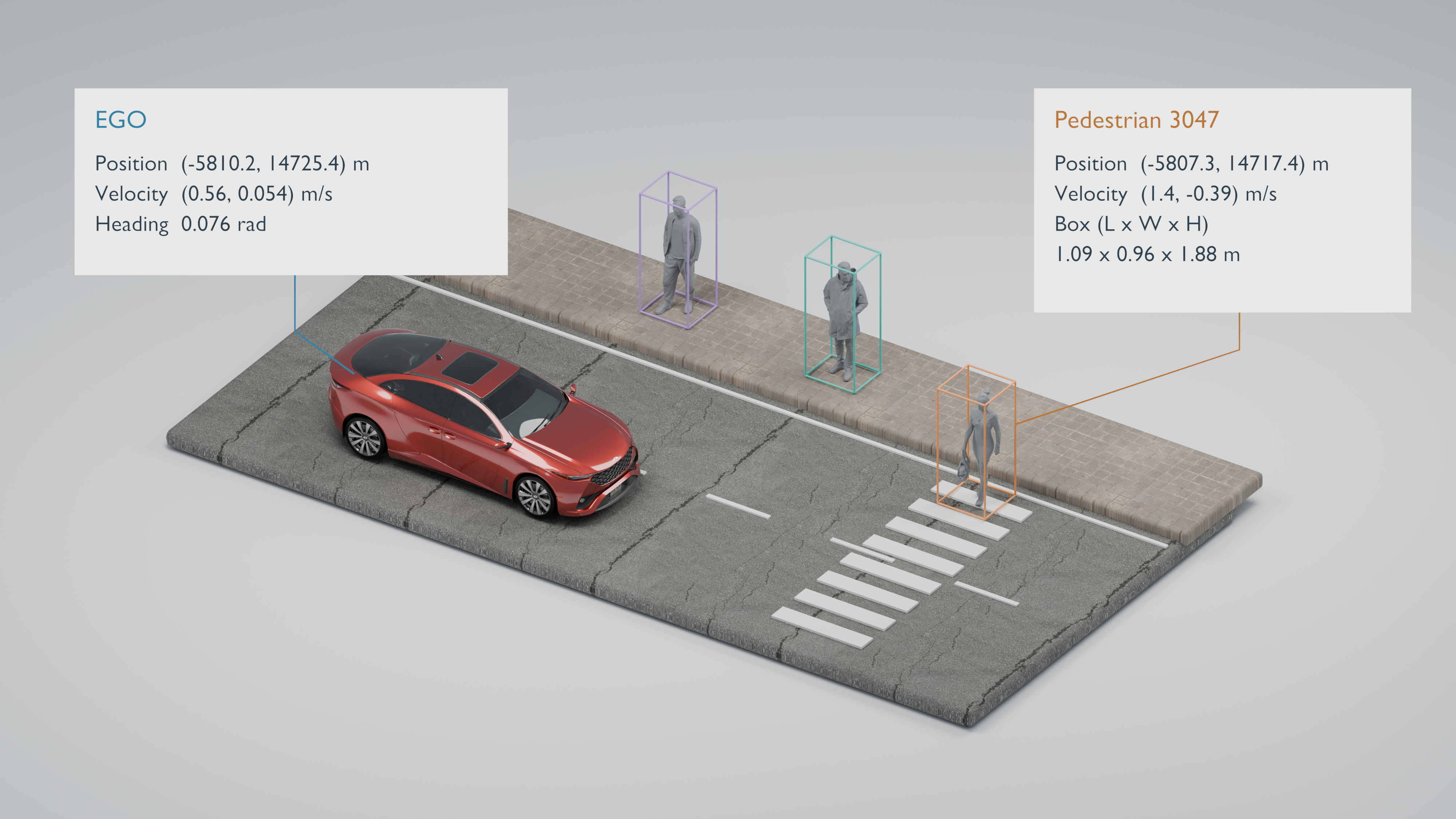}
\caption{Illustrative driving scene with selected WOMD annotations for the ego vehicle (left label) and a pedestrian (numbered 3047, right label, orange box). Positions and velocity components are in world coordinates; the ego vehicle's heading is in radians, and the pedestrian's box dimensions are length, width, and height in meters.    The road layout, object models, and displayed boxes are schematic.}
\label{fig:driving_intro}
\end{figure}

The simulator state includes the ego vehicle's motion and the surrounding traffic and road geometry. For clustering, we extract $12$ ego-relative features. An action consists of acceleration and steering curvature, the continuous controls of the bicycle model \citep{polack2017kinematic} used by Waymax. The reward is the simulator's per-step safety penalty for leaving the road or overlapping another object.

The behavior policy $\pi_b$ follows the logged human driver, with controls recovered by inverting the bicycle model. The target policy $\pi_e$ is trained using soft actor-critic \citep{haarnoja2018soft} in V-Max. Its value is negative because the reward is a safety penalty.

Soft $K$-means \citep{bezdek1981pattern, li2006towards} groups the standardized features into $K_{\mathrm{soft}}=40$ clusters with four weights per state. Following Algorithm~\ref{alg:cad_continuous}, the clustering pool combines offline states with model-simulated target states and is fixed across rounds. Actions are assigned to the nearest point on a $5\times5$ grid whose coordinates span the $1$st to $99$th percentiles of the offline controls, giving $25$ action cells.
In each round, we fit $M=5$ MLP as the transition models of the ego vehicle's dynamics to the accumulated real trajectories. Each model simulates $N_{\mathrm{sim}}=200$ target-policy episodes to estimate the target occupancy and the pooled cell transitions. We use $K=3$ rounds and $\lambda=\varepsilon=10^{-6}/(1-\gamma^H)$.

We compare MIS, PDIS, DR, random probe, target probe, and CAD. All methods in a replication share the offline logs, and all probe methods use the same $B$ scenarios following the offline block. The random probe follows the logged driver with probability $0.8$ and otherwise samples a control uniformly within a uniformly selected action cell. MIS and the probe methods estimate their target occupancies from models fitted to their own real trajectories.

\begin{table}[!ht]
\centering
\caption{RMSE ($\times 10^{2}$) on the WOMD autonomous-driving audit across seven $(n_{\mathrm{off}},B)$ settings ($50$ replications each), with bootstrap standard errors in parentheses. We use $K_{\mathrm{soft}}=40$, $K=3$ rounds, and $\lambda=\varepsilon=10^{-6}/(1-\gamma^H)$. Bold marks the lowest RMSE in each row. PDIS and DR are omitted because of numerical divergence.}
\label{tab:driving_scale}
\small
\setlength{\tabcolsep}{6pt}
\begin{tabular}{rr cccc}
  \toprule
  $n_{\mathrm{off}}$ & $B$ & MIS & Random probe & Target probe & CAD \\
  \midrule
  1000 &  10 & 3.82\,(0.05) & 3.57\,(0.08) & 3.11\,(0.11) & \textbf{2.48\,(0.14)} \\
  1000 &  20 & 3.80\,(0.05) & 3.41\,(0.10) & 2.87\,(0.11) & \textbf{1.88\,(0.20)} \\
  1000 &  50 & 3.82\,(0.05) & 3.14\,(0.11) & 2.34\,(0.13) & \textbf{1.13\,(0.09)} \\
  1000 & 100 & 3.83\,(0.05) & 2.85\,(0.11) & 1.88\,(0.13) & \textbf{1.43\,(0.16)} \\
  1000 & 200 & 3.86\,(0.03) & 2.51\,(0.11) & 1.44\,(0.10) & \textbf{1.39\,(0.13)} \\
  \addlinespace
  2000 & 100 & 3.77\,(0.04) & 3.00\,(0.10) & 2.07\,(0.11) & \textbf{1.13\,(0.09)} \\
  \addlinespace
   500 & 100 & 4.03\,(0.02) & 2.88\,(0.13) & \textbf{1.55\,(0.12)} & 1.72\,(0.18) \\
  \bottomrule
\end{tabular}
\end{table}

For each configuration, we approximate the target-policy value $J_H(\pi_e)$ by independently running the target policy once on each evaluation scenario and averaging the discounted returns. These additional trajectories are used only to evaluate estimation error. Table~\ref{tab:driving_scale} reports RMSE for seven $(n_{\mathrm{off}},B)$ configurations, each with $50$ replications; parentheses give bootstrap standard errors. PDIS and DR yield numerically unstable estimates, with RMSE greater than $10$ in every configuration, and are therefore omitted from the table.

Figure~\ref{fig:driving_heatmap} compares the occupancies of the offline logs, the target policy, and the collected CAD probes. The highlighted region illustrates how CAD directs additional visits toward state--action cells frequently visited by the target policy but poorly covered by the logs.

\begin{figure}[htbp]
\centering
\includegraphics[width=\textwidth]{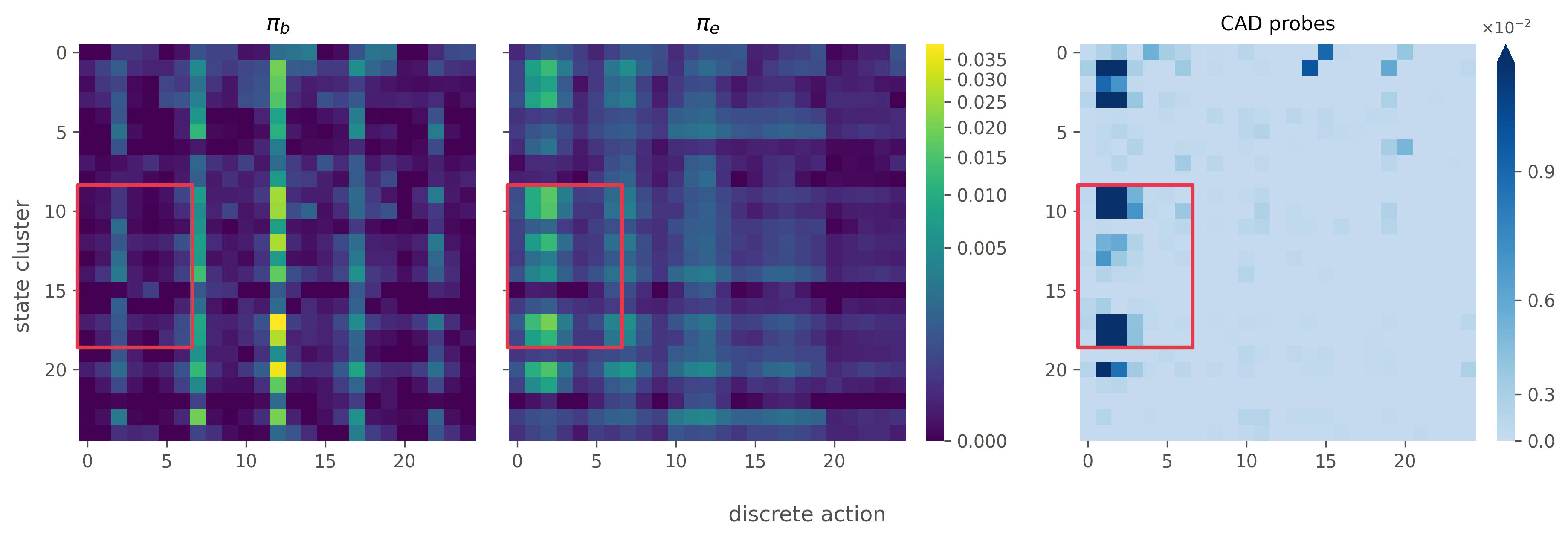}
\caption{Empirical state--action occupancies in the autonomous driving study ($n_{\mathrm{off}}=1000$, $B=100$). Left: offline logs. Center: independent target-policy rollouts. Right: collected CAD probe trajectories across all rounds. We show all $25$ action cells and the contiguous $25$-cluster window with the largest probe occupancy among the $40$ clusters. The red boxes highlight an audit blind spot: cells frequently visited by the target policy but poorly covered by the offline logs.
CAD probes improve coverage in this region.}
\label{fig:driving_heatmap}
\end{figure}

\FloatBarrier
\section{Conclusion}
\label{sec:discussion}

We study the problem of auditing AI agents using fixed operational logs and a limited budget of probe trajectories.    CAD selects dynamically feasible probe policies by minimizing an occupancy-mismatch criterion and updates its design as new data arrive. CAD directs probes toward state--action pairs that are frequently visited by the target policy but poorly represented in the operational logs. Our theoretical analysis links this criterion to estimation-error bounds and accounts for adaptive sampling, estimated occupancies, and state aggregation. Experiments in Gridworld, LunarLander, and autonomous driving demonstrate substantial gains in policy-value estimation accuracy when operational logs provide limited target-policy coverage, including orders-of-magnitude reductions in mean squared error relative to offline MIS in Gridworld. These results demonstrate that CAD can improve policy-value estimation with a limited number of additional audit trajectories. This is particularly useful when collecting new trajectories is costly and existing logs provide insufficient coverage of the target policy.        

Future work includes learning state abstractions for higher-dimensional environments and designing probes for lower-tail return estimation.     The latter would extend CAD to risk-sensitive auditing, motivated by quantile-based evaluation~\citep{wang2025quantile}, and requires probe objectives tailored to tail estimation.

\begingroup
\setstretch{1}
\setlength{\bibsep}{4pt}
\bibliography{reference}
\endgroup

\clearpage
\phantomsection
\addcontentsline{toc}{section}{Supplementary Material}
\begin{center}
  {\Large\bfseries Supplementary Material}
\end{center}
\appendix
\numberwithin{equation}{section}
\section{Additional Results for Probe Design}
\label{app:design_results}

\begingroup

\subsection{Limits of Probe Design}
\label{app:design_limits}
\label{sec:design_gap}

We consider the infinite-horizon design criterion of Section~\ref{sec:problem} and characterize when the smallest achievable mismatch attains its lower bound. Although the probe policy can be chosen freely, the effective occupancy must retain the contribution $(1-\alpha)d^{\pi_b}$ from the offline logs.

The Bellman flow-feasible set is
\[
\mathcal D_{\mathrm{flow}}
=\bigl\{x\ge0 :
\sum_{a\in\mathcal A}x(s,a)
=(1-\gamma)d_0(s)
+\gamma\sum_{s'\in\mathcal S,a'\in\mathcal A}x(s',a')P(s\mid s',a'),
\quad\forall s\in\mathcal S\bigr\}.
\]

\begin{theorem}[Limits under fixed offline logs]
\label{thm:flow_reach_gap}
For $0<\alpha<1$, define the feasible set of effective occupancies as
\[
\mathcal Q_\alpha
=\bigl\{(1-\alpha)d^{\pi_b}+\alpha x : x\in\mathcal D_{\mathrm{flow}}\bigr\},
\]
and let $\Psi^\star=\min_{q\in\mathcal Q_\alpha}\Psi(q)$. Then
$
1\le\Psi^\star\le\frac{1}{\alpha}.
$
Equality in the lower bound is characterized by
\[
\Psi^\star=1
\quad\Longleftrightarrow\quad
 d^{\pi_e}(s,a)\ge(1-\alpha)d^{\pi_b}(s,a)
\quad\text{for all }(s,a).
\]
When this condition holds, the probe occupancy $x^\star=\{d^{\pi_e}-(1-\alpha)d^{\pi_b}\}/\alpha$ is flow-feasible: it is nonnegative and satisfies the Bellman flow constraints defining $\mathcal D_{\mathrm{flow}}$. The corresponding probe policy $\mu^\star$ yields $q_\alpha^{\mu^\star}=d^{\pi_e}$. Otherwise, $\Psi^\star>1$.
\end{theorem}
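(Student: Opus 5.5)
The argument has four parts: the two bounds on $\Psi^\star$, then each direction of the equivalence.

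\emph{Lower bound.} For any probability vector $q$ that is positive on $\mathrm{supp}(d^{\pi_e})$, Cauchy--Schwarz gives
\[
1=\Bigl(\sum_{s,a} d^{\pi_e}(s,a)\Bigr)^2\le \Bigl(\sum_{s,a}\frac{d^{\pi_e}(s,a)^2}{q(s,a)}\Bigr)\Bigl(\sum_{s,a} q(s,a)\Bigr)=\Psi(q).
\]
Equivalently, $\Psi(q)=\chi^2(d^{\pi_e}\|q)+1\ge 1$. Two facts make this apply to every element of $\mathcal Q_\alpha$. Summing the flow constraints over $s$ gives $(1-\gamma)\sum x=(1-\gamma)$, so every $x\in\mathcal D_{\mathrm{flow}}$ has total mass one. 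Hence every element of $\mathcal Q_\alpha$ is a probability vector. Equality in Cauchy--Schwarz holds iff $q$ is proportional to $d^{\pi_e}$, and since both sum to one, iff $q=d^{\pi_e}$.

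\emph{Upper bound.} Choose $x=d^{\pi_e}$, which lies in $\mathcal D_{\mathrm{flow}}$ by the flow-feasibility characterization in Remark~\ref{prop:flow-feasibility}. Then $q=(1-\alpha)d^{\pi_b}+\alpha d^{\pi_e}\ge \alpha d^{\pi_e}$, so
\[
\Psi(q)\le \sum_{s,a}\frac{d^{\pi_e}(s,a)^2}{\alpha\, d^{\pi_e}(s,a)}=\frac1\alpha .
\]
Existence of the minimum, so that $\Psi^\star$ is attained, follows because $\mathcal D_{\mathrm{flow}}$ is a compact polytope (nonnegative with unit mass) and $\Psi$ is convex and lower semicontinuous with values in $[0,\infty]$.

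\emph{Characterization, ``if'' direction.} Assume $d^{\pi_e}\ge(1-\alpha)d^{\pi_b}$ pointwise and set $x^\star=\{d^{\pi_e}-(1-\alpha)d^{\pi_b}\}/\alpha$. It is nonnegative by the assumption. It satisfies the flow equations because the constraints are affine and both $d^{\pi_e}$ and $d^{\pi_b}$ satisfy them. Concretely, the right-hand side evaluated at $x^\star$ equals
\[
\frac{1}{\alpha}\bigl[(1-\gamma)d_0(s)\bigl(1-(1-\alpha)\bigr)+\gamma\,\textstyle\sum\bigl(d^{\pi_e}-(1-\alpha)d^{\pi_b}\bigr)P\bigr].
\]
Since $1-(1-\alpha)=\alpha$, this matches $\sum_a x^\star(s,a)$. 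Remark~\ref{prop:flow-feasibility} then yields a stationary policy $\mu^\star$ with $d^{\mu^\star}=x^\star$. It follows that $q_\alpha^{\mu^\star}=d^{\pi_e}$ and $\Psi=1$.

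\emph{Characterization, ``only if'' direction.} Suppose $\Psi^\star=1$. The minimizing $q=(1-\alpha)d^{\pi_b}+\alpha x$ is then equal to $d^{\pi_e}$ by the equality case of the lower bound. Because $x\ge0$, this gives $d^{\pi_e}\ge(1-\alpha)d^{\pi_b}$ pointwise. Contrapositively, if the condition fails at some $(s,a)$, every $q\in\mathcal Q_\alpha$ differs from $d^{\pi_e}$. Every such $q$ therefore has $\Psi(q)>1$, and since the minimum is attained, $\Psi^\star>1$.

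The main obstacle is the equality case of Cauchy--Schwarz when $q$ may vanish off the support of $d^{\pi_e}$. Equality forces $q=d^{\pi_e}$ on $\mathrm{supp}(d^{\pi_e})$, and $q$ already carries total mass one there, so $q$ must vanish elsewhere; hence $q=d^{\pi_e}$ globally. It also needs checking that the minimizer has finite $\Psi$; this holds because the bound $\Psi^\star\le 1/\alpha<\infty$ shows the minimum is finite.
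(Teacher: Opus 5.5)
Your proposal is correct and follows essentially the same route as the paper's proof. The lower bound uses $\Psi(q)=1+\chi^2(d^{\pi_e}\|q)$, which your Cauchy--Schwarz argument is equivalent to; the upper bound takes $x=d^{\pi_e}$; attainment follows from compactness and lower semicontinuity; and the equivalence rests on $q\ge(1-\alpha)d^{\pi_b}$ coordinatewise together with the affine flow equations for $x^\star$. You also check two details the paper leaves implicit: that flow-feasible vectors have unit mass, and that the flow-feasibility characterization actually produces the policy $\mu^\star$.
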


The proof is in Appendix~\ref{app:design_limits_proof}. The lower bound follows from $\Psi(q)=1+\chi^2(d^{\pi_e}\|q)$, and the upper bound follows by choosing the target policy as the probe. The equality condition identifies the obstruction to exact matching: any offline contribution exceeding the target occupancy cannot be removed by adding probe trajectories. If no coordinate has this excess, the displayed $x^\star$ is nonnegative and satisfies the Bellman flow equations because $d^{\pi_e}$ and $d^{\pi_b}$ share the same dynamics and initial distribution. Thus, the strict gap characterized here arises from retaining the offline contribution. For fixed $n_{\mathrm{off}}$, increasing $B$ raises $\alpha$ and brings the upper bound $1/\alpha$ toward one.

Flow constraints still determine the optimal allocation when exact matching is infeasible. Proposition~\ref{prop:kkt_flow} in Appendix~\ref{app:kkt} characterizes an optimal occupancy through the KKT conditions, which balance squared importance ratios against the dual variables associated with the Bellman flow equations. These conditions account for how allocating probe occupancy at one state--action pair affects coverage elsewhere through the transition dynamics. Without flow constraints, the optimal allocation equalizes importance weights across pairs receiving positive probe allocation, analogous to Neyman allocation in survey sampling \citep{neyman1992two}; see Remark~\ref{rem:gm_contrast} in Appendix~\ref{app:kkt}.

For a small audit fraction $\alpha$, the same design problem can be interpreted through a local expansion around the offline occupancy $d^{\pi_b}$. For this expansion, we restrict attention to $\mathrm{supp}(d^{\pi_e})\subseteq\mathrm{supp}(d^{\pi_b})$; otherwise, the baseline mismatch $\Psi(d^{\pi_b})$ is infinite. To first order in $\alpha$, minimizing $\Psi$ is equivalent to maximizing the expected discounted sum of an auxiliary reward equal to $(d^{\pi_e}/d^{\pi_b})^2$ on the target support and zero elsewhere. The probe therefore seeks regions with large squared importance ratios while respecting the dynamics needed to reach them. Appendix~\ref{app:probing_gain} gives the exact mismatch decomposition and the first-order expansion underlying this interpretation.
\endgroup

\subsection{KKT Characterization of the Optimal Probe Occupancy}
\label{app:kkt}

\begin{proposition}[KKT characterization of the flow-constrained optimal probe occupancy]
\label{prop:kkt_flow}
For $0<\alpha<1$, let
\[
x^*\in\argmin_{x\in\mathcal D_{\mathrm{flow}}}
\Psi\!\left((1-\alpha)d^{\pi_b}+\alpha x\right),
\]
and let $q^* = (1-\alpha)d^{\pi_b} + \alpha x^*$. Set $d^{\pi_e}(s,a)/q^*(s,a)=0$ whenever $d^{\pi_e}(s,a)=0$, including when $q^*(s,a)=0$. There exist dual variables $\{\lambda_s\}_{s\in\mathcal{S}}$ associated with the Bellman flow constraints such that, for each $(s,a)\in\mathcal{S}\times\mathcal{A}$,
\[
\alpha\left(\frac{d^{\pi_e}(s,a)}{q^*(s,a)}\right)^{\!2}
\;\leq\;
\lambda_s - \gamma\sum_{s'' \in \mathcal{S}}\lambda_{s''}\,P(s''\mid s,a),
\]
with equality when $x^*(s,a)>0$, and $x^*(s,a)=0$ whenever
\[
\alpha\left(\frac{d^{\pi_e}(s,a)}{q^*(s,a)}\right)^{\!2}
\;<\;
\lambda_s - \gamma\sum_{s'' \in \mathcal{S}}\lambda_{s''}\,P(s''\mid s,a).
\]
\end{proposition}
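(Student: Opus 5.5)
We treat the design problem as a convex program with linear equality constraints and nonnegativity, and read off its Karush--Kuhn--Tucker conditions. Since the constraints are affine, Slater-type regularity is automatic: any affine-constrained convex problem with a finite optimum admits Lagrange multipliers (linearity constraint qualification). The one difficulty is that $\Psi$ may be nondifferentiable or infinite where $q^*(s,a)=0$, and the plan below deals with this by working only on the target support.

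First, finiteness. By Theorem~\ref{thm:flow_reach_gap}, the optimal value satisfies $\Psi^\star\le 1/\alpha<\infty$. Hence at $x^*$ we have $q^*(s,a)>0$ on $\mathrm{supp}(d^{\pi_e})$. Write the objective as $f(x)=\sum_{(s,a)\in\mathrm{supp}(d^{\pi_e})} d^{\pi_e}(s,a)^2/\{(1-\alpha)d^{\pi_b}(s,a)+\alpha x(s,a)\}$. Terms off the target support vanish identically, consistent with the convention that $d^{\pi_e}/q^*=0$ there. Then $f$ is convex and finite on a neighbourhood of $x^*$ within the region where these denominators are positive, and differentiable at $x^*$, with
\[
\partial f/\partial x(s,a)=-\alpha\,(d^{\pi_e}(s,a)/q^*(s,a))^2,
\]
which equals $0$ off the support.

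Next, the Lagrangian. Write the flow constraint as $g_s(x)=\sum_a x(s,a)-\gamma\sum_{s',a'}x(s',a')P(s\mid s',a')-(1-\gamma)d_0(s)=0$. Introduce multipliers $\lambda_s$ for these equalities and $\nu(s,a)\ge0$ for $x\ge0$. The KKT stationarity condition then reads
\[
-\alpha(d^{\pi_e}/q^*)^2(s,a)+\lambda_s-\gamma\sum_{s''}\lambda_{s''}P(s''\mid s,a)-\nu(s,a)=0,
\]
using $\partial g_{s''}/\partial x(s,a)=\mathbf 1[s''=s]-\gamma P(s''\mid s,a)$. Complementary slackness gives $\nu(s,a)x^*(s,a)=0$. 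From $\nu\ge0$ we get the stated inequality, with equality when $x^*(s,a)>0$. The contrapositive of the equality case gives $x^*(s,a)=0$ whenever the inequality is strict. Some sign conventions on $\lambda$ may need flipping, which is harmless since $\lambda$ is free.

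The main obstacle is justifying existence of the multipliers when $f$ is only locally finite and smooth. I would argue via first-order optimality for convex problems. Since $x^*$ minimizes $f$ over the polyhedron $\mathcal D_{\mathrm{flow}}$ and $f$ is differentiable at $x^*$, we have $\nabla f(x^*)^\top(x-x^*)\ge0$ for all feasible $x$; this follows by convexity along segments, which stay in the finite region because $q$ only increases in directions where $x$ grows from $0$ and otherwise remains near $x^*$. So $x^*$ solves the linear program $\min_{x\in\mathcal D_{\mathrm{flow}}}\nabla f(x^*)^\top x$. LP strong duality, available because the program is feasible and bounded below at $x^*$, then produces $\lambda$ with reduced costs $\nu=\nabla f(x^*)+(\text{flow-constraint adjoint})\,\lambda\ge0$ and complementary slackness. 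These are exactly the conditions above.
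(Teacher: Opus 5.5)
Your proposal is correct and follows the paper's own route. Finiteness of the optimum (via the feasible choice $x=d^{\pi_e}$, equivalently $\Psi^\star\le 1/\alpha$) forces $q^*>0$ on the target support, so the objective is differentiable at $x^*$; stationarity of the Lagrangian, with free multipliers $\lambda_s$ and nonnegative multipliers for $x\ge 0$, together with complementary slackness then yields the stated inequality, the equality when $x^*(s,a)>0$, and $x^*(s,a)=0$ under strict inequality. Your reduction to the linear program $\min_{x\in\mathcal D_{\mathrm{flow}}}\nabla f(x^*)^\top x$ followed by LP duality, with the sign flip $\lambda=-y$, checks out and makes explicit the multiplier-existence step that the paper compresses into ``since the feasible set is a polytope, the KKT conditions apply.''
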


\begin{remark}[Contrast with the unconstrained generative model]
\label{rem:gm_contrast}
In the absence of flow constraints, where the probe occupancy can be freely chosen from the full simplex $\Delta(\mathcal{S}\times\mathcal{A})$, the optimal probe admits a closed form.
The unique minimizer $x^*_{\mathrm{gm}}\in\arg\min_{x\in\Delta(\mathcal{S}\times\mathcal{A})} \Psi\bigl((1-\alpha)d^{\pi_b}+\alpha x\bigr)$ satisfies
\[
x^*_{\mathrm{gm}}(s,a) \;=\; \frac{1}{\alpha}\bigl(\tau\,d^{\pi_e}(s,a)-(1-\alpha)\,d^{\pi_b}(s,a)\bigr)_+,
\]
where $(\cdot)_+:=\max\{\cdot,0\}$ and $\tau>0$ is the unique constant such that $\sum_{s \in \mathcal{S},\, a \in \mathcal{A}}x^*_{\mathrm{gm}}(s,a)=1$.
The corresponding importance weight is constant across all active pairs: $w^*(s,a)=d^{\pi_e}(s,a)/q^*(s,a)=1/\tau$ for all $(s,a)$ with $x^*_{\mathrm{gm}}(s,a)>0$, a direct analogue of Neyman allocation in survey sampling \citep{neyman1992two}.

In the flow-constrained setting, the dual variables $\{\lambda_s\}$ from Proposition~\ref{prop:kkt_flow} are coupled through the transition kernel $P$. The KKT conditions involve $\lambda_s-\gamma\sum_{s'}\lambda_{s'}P(s'\mid s,a)$, so the optimal allocation depends on both occupancy mismatch and the Bellman flow constraints. In the generative model, importance weights are equalized across pairs receiving positive probe allocation.
\end{remark}

The proof is in Appendix~\ref{app:kkt_proof}.

\subsection{Occupancy-Mismatch Reduction: Exact Decomposition and Small-Budget Expansion}
\label{app:probing_gain}

This subsection derives the local design interpretation in Appendix~\ref{app:design_limits}. Throughout, the MDP is finite and $\mathcal T=\mathrm{supp}(d^{\pi_e})\subseteq\mathrm{supp}(d^{\pi_b})$. We restrict ratio expressions to $\mathcal T$ so that their denominators are positive. The results concern changes in the mismatch criterion; an arbitrary probe need not reduce it.

\begin{proposition}[Exact decomposition of probing gain]
\label{thm:exact_decomp}
For any flow-feasible probe occupancy $x=d^\mu$ and $0\le\alpha<1$,
\[
\Psi(d^{\pi_b})-\Psi(q_\alpha^\mu)
=\alpha\sum_{(s,a)\in\mathcal T}
\frac{d^{\pi_e}(s,a)^2\bigl(x(s,a)-d^{\pi_b}(s,a)\bigr)}
{d^{\pi_b}(s,a)\,q_\alpha^\mu(s,a)}.
\]
\end{proposition}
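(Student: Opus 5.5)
The identity is purely algebraic, so I will verify it termwise on $\mathcal T$ without any appeal to the flow constraints. Flow-feasibility of $x=d^\mu$ is only needed so that $q_\alpha^\mu$ is the effective occupancy of an actual design.

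First I reduce both criteria to sums over $\mathcal T$. Outside $\mathcal T$ we have $d^{\pi_e}=0$, so those terms vanish in both $\Psi(d^{\pi_b})$ and $\Psi(q_\alpha^\mu)$. On $\mathcal T$ the standing assumption $\mathcal T\subseteq\mathrm{supp}(d^{\pi_b})$ gives $d^{\pi_b}(s,a)>0$. For $\alpha<1$ it then follows that $q_\alpha^\mu(s,a)\ge(1-\alpha)d^{\pi_b}(s,a)>0$. Every denominator is therefore positive and both sums are finite.

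Next, for each $(s,a)\in\mathcal T$, I write the difference of the two terms over a common denominator:
\[
\frac{d^{\pi_e}(s,a)^2}{d^{\pi_b}(s,a)}-\frac{d^{\pi_e}(s,a)^2}{q_\alpha^\mu(s,a)}
=d^{\pi_e}(s,a)^2\,\frac{q_\alpha^\mu(s,a)-d^{\pi_b}(s,a)}{d^{\pi_b}(s,a)\,q_\alpha^\mu(s,a)}.
\]
Substituting $q_\alpha^\mu=(1-\alpha)d^{\pi_b}+\alpha x$ gives $q_\alpha^\mu-d^{\pi_b}=\alpha\,(x-d^{\pi_b})$. Summing over $\mathcal T$ and factoring out $\alpha$ yields the stated formula.

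There is no real obstacle. The only point needing care is the bookkeeping of supports: pairs outside $\mathcal T$ contribute zero to both sides, and positivity on $\mathcal T$ justifies the common-denominator step. The case $\alpha=0$ also checks trivially, since then both sides vanish.
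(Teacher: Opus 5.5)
Your proof is correct and follows the paper's argument: both compute the difference termwise on $\mathcal T$ over a common denominator and substitute $q_\alpha^\mu-d^{\pi_b}=\alpha\,(x-d^{\pi_b})$. Your explicit check that $q_\alpha^\mu\ge(1-\alpha)d^{\pi_b}>0$ on $\mathcal T$, which makes every denominator positive, is a detail the paper leaves implicit.
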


\begin{proposition}[First-order expansion under a small audit fraction]
\label{thm:first_order}
As $\alpha\downarrow0$,
\[
\Psi(d^{\pi_b})-\Psi(q_\alpha^\mu)
=\alpha\sum_{(s,a)\in\mathcal T}
\left(\frac{d^{\pi_e}(s,a)}{d^{\pi_b}(s,a)}\right)^2
\bigl(x(s,a)-d^{\pi_b}(s,a)\bigr)
+O(\alpha^2),
\]
uniformly over $x\in\mathcal D_{\mathrm{flow}}$.
\end{proposition}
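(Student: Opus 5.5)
The plan is to start from the exact decomposition in Proposition~\ref{thm:exact_decomp} and expand its denominator in $\alpha$. For $(s,a)\in\mathcal T$, write $q_\alpha^\mu(s,a)=d^{\pi_b}(s,a)+\alpha\{x(s,a)-d^{\pi_b}(s,a)\}$. Using the identity $1/q=1/d-\alpha(x-d)/(d\,q)$ with $d=d^{\pi_b}(s,a)$, $q=q_\alpha^\mu(s,a)$, and $x=x(s,a)$, each summand of the exact decomposition becomes
\[
\frac{d^{\pi_e}(s,a)^2\bigl(x(s,a)-d^{\pi_b}(s,a)\bigr)}{d^{\pi_b}(s,a)^2}
-\alpha\,\frac{d^{\pi_e}(s,a)^2\bigl(x(s,a)-d^{\pi_b}(s,a)\bigr)^2}{d^{\pi_b}(s,a)^2\,q_\alpha^\mu(s,a)}.
\]
Multiplying by $\alpha$ and summing over $\mathcal T$ gives the claimed leading term exactly. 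The remainder is $-\alpha^2$ times a sum of nonnegative terms.

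The remaining step is to bound this remainder sum uniformly in $x\in\mathcal D_{\mathrm{flow}}$. First, every flow-feasible $x$ is a probability measure, since summing the flow constraints over $s$ gives total mass $1$. Hence $|x(s,a)-d^{\pi_b}(s,a)|\le1$. Second, $q_\alpha^\mu(s,a)\ge(1-\alpha)d^{\pi_b}(s,a)\ge d^{\pi_b}(s,a)/2$ for $\alpha\le1/2$. Let $c_b=\min_{\mathcal T}d^{\pi_b}>0$, which is positive because the MDP is finite and $\mathcal T\subseteq\mathrm{supp}(d^{\pi_b})$. The remainder sum is then at most $2\sum_{\mathcal T}d^{\pi_e}(s,a)^2/c_b^3\le 2/c_b^3$. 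This constant does not depend on $x$, so the $O(\alpha^2)$ is uniform.

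I do not expect a real obstacle here. The only point needing care is the uniformity claim. It rests on two facts: the mass bound on flow-feasible $x$ (checked from the constraints, or from Remark~\ref{prop:flow-feasibility}), and the lower bound on $q_\alpha^\mu$. That lower bound comes from the fixed offline contribution alone, not from the probe. If one prefers not to invoke Proposition~\ref{thm:exact_decomp}, the same argument works by expanding $\Psi(q_\alpha^\mu)=\sum_{\mathcal T}d^{\pi_e}(s,a)^2/q_\alpha^\mu(s,a)$ directly with the identity above.
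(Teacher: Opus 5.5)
Your proposal is correct and is essentially the paper's argument. The paper expands $1/\{d^{\pi_b}+\alpha(x-d^{\pi_b})\}$ directly to second order instead of going through Proposition~\ref{thm:exact_decomp}, but it obtains the same nonnegative $\alpha^2$ remainder and the same uniform bound $2\alpha^2/c_b^3$, using $q_\alpha^\mu\ge d^{\pi_b}/2$ for $\alpha\le1/2$ and $|x-d^{\pi_b}|\le1$. Your check that every flow-feasible $x$ has unit mass, by summing the flow constraints over $s$, supplies the justification for $|x-d^{\pi_b}|\le1$ that the paper leaves implicit.
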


The uniform remainder permits optimization over the feasible occupancies. With $\Psi^\star$ evaluated at audit fraction $\alpha$,
\[
\Psi(d^{\pi_b})-\Psi^\star
=\alpha\left[
\max_{x\in\mathcal D_{\mathrm{flow}}}
\sum_{(s,a)\in\mathcal T}
\left(\frac{d^{\pi_e}(s,a)}{d^{\pi_b}(s,a)}\right)^2x(s,a)
-\Psi(d^{\pi_b})
\right]+O(\alpha^2).
\]
The coefficient is nonnegative because $x=d^{\pi_b}$ is feasible, and it is positive precisely when a feasible occupancy has a larger average squared importance ratio than the baseline. Since occupancies are normalized discounted visitation measures, maximizing the displayed linear objective is equivalent to solving a discounted planning problem with auxiliary reward $(d^{\pi_e}/d^{\pi_b})^2$ on $\mathcal T$ and zero elsewhere. This is a local interpretation of the original design objective under offline coverage; the expansion does not apply when the offline logs leave part of the target support uncovered.

The proofs are in Appendices~\ref{app:exact_decomposition_proof} and~\ref{app:first_order_proof}.

\section{Soft State Aggregation}
\label{app:soft_state_aggregation}

\subsection{Construction of State Cells and Aggregation Weights}

We construct the state cells from the offline trajectories $\mathcal D_{\mathrm{off}}$ and the initial simulated target trajectories $\mathcal S_{\mathrm{init}}=\textsc{Sim}(\pi_e,\mathcal D_{\mathrm{off}})$. We standardize each state feature using its mean and standard deviation estimated from this pool, and write $z(s)$ for the resulting feature vector. We then fit $K$-means to obtain $K_\varphi$ cluster centers $c_1,\ldots,c_{K_\varphi}$.

Let $\mathcal N_m(s)$ index the $m$ cluster centers nearest to $z(s)$. For a bandwidth $\tau>0$, the weight of state $s$ in cell $j$ is
\[
\varphi_j(s)=
\frac{\mathbf 1\{j\in\mathcal N_m(s)\}\exp\{-\|z(s)-c_j\|^2/\tau\}}
{\sum_{\ell\in\mathcal N_m(s)}\exp\{-\|z(s)-c_\ell\|^2/\tau\}},
\qquad j=1,\ldots,K_\varphi.
\]
Thus, each state has positive weights only on its $m$ nearest cells, and the weights sum to one. The bandwidth controls how evenly the weight is distributed among these cells. The feature standardization, cluster centers, and weighting rule remain fixed across audit rounds. In both continuous-state experiments, we use $m=4$ and set $\tau$ to the median squared distance from a standardized state in the clustering pool to its nearest center, excluding distances at most $10^{-12}$. The features and numbers of cells are given in the corresponding numerical studies.

\subsection{Estimation of Cell Transitions and Treatment of Terminal States}
\label{app:cell_transitions}

We index the designated zero-reward terminal state $s_{\mathrm{term}}$ by cell $0$ and let $\mathcal J=\{0,1,\ldots,K_\varphi\}$. Extend the weights by $\varphi_0(s_{\mathrm{term}})=1$, $\varphi_j(s_{\mathrm{term}})=0$ for $j\ge1$, and $\varphi_0(s)=0$ for non-terminal states. Terminated trajectories are padded with $s_{\mathrm{term}}$, zero rewards, and a fixed dummy action when computing the normalized $H$-step empirical occupancy. In the final value estimator, the weighted sum over non-terminal cells is zero at $s_{\mathrm{term}}$, so extending the time sum to $H-1$ leaves the estimate unchanged.

Let $T_i\le H$ be the number of recorded steps in trajectory $i$ before padding. In the episodic evaluations, the abstract transition model sends the last recorded transition to cell $0$, including when the episode reaches $H$. For a trajectory batch $\mathcal D$ and $j,\ell\ge1$, the weighted transition counts are
\[
\begin{aligned}
C_{\mathcal D}(j,a,\ell)
&=\sum_{i\in\mathcal D}\sum_{t=0}^{T_i-2}
\varphi_j(s_t^{(i)})\mathbf 1\{a_t^{(i)}=a\}\varphi_\ell(s_{t+1}^{(i)}),\\
C_{\mathcal D}(j,a,0)
&=\sum_{i\in\mathcal D}\varphi_j(s_{T_i-1}^{(i)})\mathbf 1\{a_{T_i-1}^{(i)}=a\}.
\end{aligned}
\]
We normalize over all destination cells, allowing a pseudo-count $\beta\ge0$:
\[
\widehat P_{\mathcal D}(\ell\mid j,a)
=\frac{C_{\mathcal D}(j,a,\ell)+\beta}
{\sum_{u\in\mathcal J}C_{\mathcal D}(j,a,u)+(K_\varphi+1)\beta},
\qquad j\ge1,\quad \ell\in\mathcal J.
\]
We use $\beta=0$ in both continuous-state experiments. When the denominator is zero, we use the uniform distribution on $\mathcal J$. The terminal row is fixed at $\widehat P_{\mathcal D}(0\mid0,a)=1$, with all other entries zero. At round $k$, we set $\widehat P_k=\widehat P_{\mathcal D_k\cup\mathcal S_k}$. We estimate the initial cell distribution from the real trajectories by
\[
\widehat d_{0,k}(j)=\frac{1}{|\mathcal D_k|}\sum_{i\in\mathcal D_k}\varphi_j(s_0^{(i)}),
\qquad j\in\mathcal J.
\]

\section{Additional Statistical Results}
\label{app:statistical_results}

\subsection{Central Limit Theorem for the Plug-in MIS Estimator}
\label{app:clt_conditions}
\label{app:asymptotics}

We consider the fixed-policy sampling scheme of Section~\ref{sec:var_just}, with the additional assumption that the offline and probe samples are mutually independent. Observed rewards are bounded by $R_{\max}$. Fix $\alpha\in(0,1)$ and let $n\to\infty$ along sample sizes for which $n_{\mathrm{off}}=(1-\alpha)n$ and $B=\alpha n$ are integers. Assume $\mathrm{supp}(d^{\pi_e})\subseteq\mathrm{supp}(q_\alpha^\mu)$ and write $w=d^{\pi_e}/q_\alpha^\mu$, with $w=0$ outside the target support. For a trajectory generated by $\nu\in\{\pi_b,\mu\}$, define the normalized weighted return
\[
Z^{(\nu)}=(1-\gamma)\sum_{t\ge0}\gamma^t
w(s_t^{(\nu)},a_t^{(\nu)})R_t^{(\nu)},
\qquad
m_\nu=\mathbb E_\nu[Z^{(\nu)}].
\]
These variables have finite second moments because the state--action space is finite, $w$ is bounded on the target support, and observed rewards are bounded.

\begin{assumption}[Plug-in conditions for the fixed-policy CLT]
\label{ass:clt}
\leavevmode
\begin{enumerate}
\item The weights $\widehat w$ are fitted using data independent of the evaluation trajectories, as with sample splitting, and
\[
\|\widehat w-w\|_{L^2(q_\alpha^\mu)}\xrightarrow{p}0.
\]
\item The plug-in bias is negligible at the $\sqrt n$ scale:
\[
\sqrt n\left|\sum_{s,a}
\bigl(\widehat w(s,a)-w(s,a)\bigr)r(s,a)q_\alpha^\mu(s,a)\right|
\xrightarrow{p}0.
\]
\end{enumerate}
\end{assumption}

Here $n$ counts the evaluation trajectories; data used to fit the weights are separate. These independence conditions concern this CLT, whereas Theorem~\ref{thm:final-mse} allows the adaptive estimator to reuse its data.

\begin{theorem}[CLT under fixed-policy, fixed-split sampling]
\label{thm:plugin-mis-short}
Under the setup above and Assumption~\ref{ass:clt}, let
\[
\sigma_b^2=\Var(Z^{(b)}),\qquad
\sigma_\mu^2=\Var(Z^{(\mu)}),\qquad
\sigma^2=(1-\alpha)\sigma_b^2+\alpha\sigma_\mu^2.
\]
Then the estimator in~\eqref{eq:MIS} satisfies
\[
\sqrt n(1-\gamma)\bigl\{\widehat J(\widehat w)-J(\pi_e)\bigr\}
\xrightarrow{d}\mathcal N(0,\sigma^2).
\]
\end{theorem}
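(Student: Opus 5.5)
We decompose $\sqrt n(1-\gamma)\{\widehat J(\widehat w)-J(\pi_e)\}$ into an oracle term with the true weight $w$ and a plug-in remainder. Write
\[
(1-\gamma)\widehat J(\widehat w)=\frac1n\sum_{i=1}^n Z_i(\widehat w),\qquad Z_i(v)=(1-\gamma)\sum_{t\ge0}\gamma^t v(s_t^{(i)},a_t^{(i)})R_t^{(i)}.
\]
Then
\[
\sqrt n(1-\gamma)\{\widehat J(\widehat w)-J(\pi_e)\}
=\underbrace{\sqrt n\Bigl\{\tfrac1n\textstyle\sum_i Z_i(w)-(1-\gamma)J(\pi_e)\Bigr\}}_{T_1}
+\underbrace{\tfrac{1}{\sqrt n}\textstyle\sum_i\bigl[Z_i(\widehat w-w)-\mathbb E\{Z_i(\widehat w-w)\mid\widehat w\}\bigr]}_{T_2}
+\underbrace{\tfrac{1}{\sqrt n}\textstyle\sum_i\mathbb E\{Z_i(\widehat w-w)\mid\widehat w\}}_{T_3}.
\]
Because $Z_i(v)$ is linear in $v$, this decomposition is exact.

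For $T_1$, first centre it. Averaging over the two samples gives
\[
(1-\alpha)m_b+\alpha m_\mu=\sum_{s,a}w\,r\,\bigl[(1-\alpha)d^{\pi_b}+\alpha d^\mu\bigr]=\sum_{s,a}d^{\pi_e}r=(1-\gamma)J(\pi_e).
\]
This uses $\mathbb E_\nu\sum_t\gamma^t f(s_t,a_t)R_t=(1-\gamma)^{-1}\sum f\,r\,d^\nu$, which follows from the tower property. So
\[
T_1=\sqrt{1-\alpha}\,\sqrt{n_{\mathrm{off}}}\,(\bar Z^{(b)}-m_b)+\sqrt\alpha\,\sqrt B\,(\bar Z^{(\mu)}-m_\mu).
\]
Each piece is an i.i.d. average with finite variance, since $w$ is bounded on the finite target support and $|Z|\le\|w\|_\infty R_{\max}$. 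The classical CLT gives $\sqrt{n_{\mathrm{off}}}(\bar Z^{(b)}-m_b)\Rightarrow N(0,\sigma_b^2)$ and $\sqrt B(\bar Z^{(\mu)}-m_\mu)\Rightarrow N(0,\sigma_\mu^2)$. The two samples are mutually independent, so the pieces converge jointly, and $T_1\Rightarrow N(0,(1-\alpha)\sigma_b^2+\alpha\sigma_\mu^2)=N(0,\sigma^2)$.

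For $T_3$, the same identity shows
\[
\frac1n\sum_i\mathbb E\{Z_i(\widehat w-w)\mid\widehat w\}=\sum_{s,a}(\widehat w-w)\,r\,q_\alpha^\mu.
\]
This uses that $\widehat w$ is independent of the evaluation trajectories, so conditioning on it leaves their law unchanged. Hence $T_3=\sqrt n\sum_{s,a}(\widehat w-w)\,r\,q_\alpha^\mu$, which is $o_p(1)$ by Assumption~\ref{ass:clt}(2).

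The main step is $T_2$, which I expect to be the delicate one. Conditionally on $\widehat w$, the summands of $T_2$ are independent and mean zero, with two laws, $\pi_b$ and $\mu$. So
\[
\mathbb E[T_2^2\mid\widehat w]\le\frac1n\sum_i\mathbb E[Z_i(\widehat w-w)^2\mid\widehat w].
\]
For a trajectory from $\nu$, apply Cauchy--Schwarz with weights $(1-\gamma)\gamma^t$:
\[
Z(v)^2\le R_{\max}^2(1-\gamma)\sum_t\gamma^t v(s_t,a_t)^2.
\]
Its expectation is $R_{\max}^2\sum v^2d^\nu$. Averaging over the mixture of samples yields
\[
\mathbb E[T_2^2\mid\widehat w]\le R_{\max}^2\|\widehat w-w\|_{L^2(q_\alpha^\mu)}^2,
\]
which tends to zero in probability by Assumption~\ref{ass:clt}(1). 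A conditional Chebyshev inequality then gives $T_2=o_p(1)$. The points to check are:
\begin{itemize}
\item Weights off $\mathrm{supp}(q_\alpha^\mu)$ never appear, since such pairs are never visited.
\item Pairs in $\mathrm{supp}(q_\alpha^\mu)$ but outside the target support do appear; there $w=0$, and the $L^2(q_\alpha^\mu)$ norm controls $\widehat w$ directly.
\end{itemize}
Slutsky's lemma applied to $T_1+T_2+T_3$ concludes the proof.
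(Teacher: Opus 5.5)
Your proposal is correct and follows essentially the same route as the paper. The oracle term is handled by groupwise centering and the classical CLT for the two independent samples. The plug-in remainder is split into its conditional mean given $\widehat w$, which vanishes by Assumption~\ref{ass:clt}(2), and a conditionally centered fluctuation whose conditional second moment is at most $R_{\max}^2\|\widehat w-w\|_{L^2(q_\alpha^\mu)}^2$ by Jensen, followed by conditional Chebyshev and Slutsky; your $T_2+T_3$ split is the paper's $D_n=b_n+(D_n-b_n)$ written out explicitly.
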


Appendix~\ref{app:variance_char} bounds $\sigma^2$ above by $R_{\max}^2\Psi(q_\alpha^\mu)$. Thus, the limiting variance of $\sqrt n\{\widehat J(\widehat w)-J(\pi_e)\}$ is at most $R_{\max}^2\Psi(q_\alpha^\mu)/(1-\gamma)^2$.

The proof is in Appendix~\ref{app:clt_proof}.

\subsection{Asymptotic Variance Bound for the Plug-in MIS Estimator}
\label{app:variance_char}

\begin{theorem}[Variance upper bound under fixed-split sampling]
\label{thm:variance-psi}
Consider the fixed-policy sampling scheme of Section~\ref{sec:var_just}, with independent trajectories within each sample and arbitrary dependence between the two samples. Let $\alpha=B/n\in[0,1]$ and suppose that $q_\alpha^\mu$ is positive on the target support. Define $Z^{(\nu)}$ as in Appendix~\ref{app:clt_conditions} and set $\sigma^2=(1-\alpha)\Var(Z^{(b)})+\alpha\Var(Z^{(\mu)})$. If
\[
\mathbb E_\nu[R_t^2\mid s_t=s,a_t=a]\le\overline m_2
\]
for all $t\ge0$, both sampling policies, and all $(s,a)\in\mathrm{supp}(d^{\pi_e})$, then
\[
\sigma^2\le\overline m_2\,\Psi(q_\alpha^\mu),
\qquad
\Var\bigl(\widehat J(w)\bigr)
\le\frac{2\overline m_2}{n(1-\gamma)^2}\Psi(q_\alpha^\mu).
\]
In particular, bounded observed rewards allow $\overline m_2=R_{\max}^2$.
\end{theorem}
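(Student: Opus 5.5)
The plan is to bound the per-trajectory second moment of $Z^{(\nu)}$ by occupancy-weighted sums, combine the two samples using the mixture $q_\alpha^\mu$, and then handle the finite-sample variance of $\widehat J(w)$.

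\textbf{Step 1: bound each variance by a second moment.} Since variance is at most the second moment, $\Var(Z^{(\nu)})\le\mathbb E_\nu[(Z^{(\nu)})^2]$. Write $Z^{(\nu)}=\sum_{t\ge0}(1-\gamma)\gamma^t X_t$ with $X_t=w(s_t,a_t)R_t$. The weights $(1-\gamma)\gamma^t$ form a probability distribution over $t$, so Jensen (Cauchy--Schwarz) gives
\[
(Z^{(\nu)})^2\le\sum_{t\ge0}(1-\gamma)\gamma^t X_t^2 .
\]

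\textbf{Step 2: take expectations and pass to occupancies.} Condition on $(s_t,a_t)$ and use $\mathbb E_\nu[R_t^2\mid s_t,a_t]\le\overline m_2$ on the target support. Off the target support $w=0$, so those terms vanish. This yields
\[
\mathbb E_\nu[(Z^{(\nu)})^2]\le\overline m_2\sum_{s,a}w(s,a)^2\,d^{\nu}(s,a),
\]
by the definition of the discounted occupancy $d^\nu$.

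\textbf{Step 3: mix the two samples.} Weight the bounds for $\nu=\pi_b$ and $\nu=\mu$ by $1-\alpha$ and $\alpha$. The combined occupancy is $(1-\alpha)d^{\pi_b}+\alpha d^\mu=q_\alpha^\mu$, so
\[
\sigma^2\le\overline m_2\sum_{s,a}w^2q_\alpha^\mu=\overline m_2\sum_{s,a}\frac{(d^{\pi_e})^2}{q_\alpha^\mu}=\overline m_2\Psi(q_\alpha^\mu).
\]
This is the first claim.

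\textbf{Step 4: finite-sample variance with dependent samples.} Write
\[
\widehat J(w)=\frac{1}{n(1-\gamma)}\Bigl(\sum_{i\le n_{\mathrm{off}}}Z_i^{(b)}+\sum_{j\le B}Z_j^{(\mu)}\Bigr)=\frac{1}{n(1-\gamma)}(S_b+S_\mu).
\]
The two samples may be dependent, so use $\Var(S_b+S_\mu)\le2\Var(S_b)+2\Var(S_\mu)$. Within each sample the trajectories are independent, hence
\[
\Var(S_b)=n_{\mathrm{off}}\Var(Z^{(b)}),\qquad \Var(S_\mu)=B\Var(Z^{(\mu)}).
\]
Their sum is $n\sigma^2$, so $\Var(\widehat J(w))\le 2n\sigma^2/(n^2(1-\gamma)^2)$. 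Applying Step 3 gives the stated bound.

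\textbf{Edge cases and the bounded-reward remark.} When $\alpha\in\{0,1\}$, one of the sums is empty and the argument goes through unchanged. For the final remark, $|R_t|\le R_{\max}$ gives $\overline m_2=R_{\max}^2$.

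\textbf{Main obstacle.} The most delicate point is Step 2. Rewards at time $t$ must be handled through conditional second moments, not through independence, and the interchange of the infinite sum with the expectation must be justified. Both are routine here: all terms are nonnegative, so Tonelli applies, and $w$ is bounded on the finite target support.
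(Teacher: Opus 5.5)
Your proposal is correct and follows essentially the same route as the paper: Jensen with the weights $(1-\gamma)\gamma^t$, the conditional second-moment bound to pass to $\overline m_2\sum_{s,a}d^\nu(s,a)w(s,a)^2$, the mixture identity to obtain $\overline m_2\Psi(q_\alpha^\mu)$, and within-sample independence combined with the factor-two inequality to handle arbitrary dependence between the samples. The only cosmetic difference is that you use uncentered sums with $\Var(S_b+S_\mu)\le 2\Var(S_b)+2\Var(S_\mu)$, whereas the paper centers each group at its own mean and bounds $\mathbb E[(S_b+S_\mu)^2]\le 2\mathbb E[S_b^2]+2\mathbb E[S_\mu^2]$, which is the same inequality.
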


The proof is in Appendix~\ref{app:variance_proof}.

\subsection{Plug-in Estimation of Occupancy Ratios}
\label{app:ratio_est}
The simplest ratio estimator is the plug-in $\hat{w} = \hat{d}^{\pi_e}/\hat{q}$, where $\hat{d}^{\pi_e}$ and $\hat{q}$ are separate estimators of $d^{\pi_e}$ and $q$. The following proposition decomposes the resulting estimation error.

\begin{proposition}[Plug-in ratio error decomposition]
\label{prop:plugin_ratio}
Assume $\hat{q}(s,a) \ge c\,q(s,a)$ for all $(s,a)\in\mathrm{supp}(d^{\pi_e})\cup\mathrm{supp}(\hat{d}^{\pi_e})$ and some constant $c\in(0,1]$. Then
\[
\|\hat w - w\|_{L^2(q)}^2
\le
\frac{2}{c^2}
\left[
\int \frac{(\hat{d}^{\pi_e} - d^{\pi_e})^2}{q}
+
\int w^2 \frac{(\hat q - q)^2}{q}
\right].
\]
\end{proposition}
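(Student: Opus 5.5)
The plan is a pointwise algebraic decomposition of $\hat w-w$, followed by squaring and integrating against $q$. Throughout, write $d=d^{\pi_e}$ and work on $\mathcal T:=\mathrm{supp}(d)\cup\mathrm{supp}(\hat d^{\pi_e})$. Outside $\mathcal T$ both $\hat w$ and $w$ vanish, so only $\mathcal T$ contributes to $\|\hat w-w\|_{L^2(q)}^2$. On $\mathcal T$ the assumption $\hat q\ge c\,q$ makes $\hat q>0$ wherever $q>0$, so $\hat w$ is well defined.

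At each $(s,a)\in\mathcal T$ I would add and subtract $d/\hat q$:
\[
\hat w-w=\frac{\hat d^{\pi_e}-d}{\hat q}+d\Bigl(\frac1{\hat q}-\frac1q\Bigr)
=\frac{\hat d^{\pi_e}-d}{\hat q}-\frac{w\,(\hat q-q)}{\hat q}.
\]
Applying $(u+v)^2\le 2u^2+2v^2$ and then $1/\hat q^2\le 1/(c^2q^2)$ gives
\[
(\hat w-w)^2\le\frac{2}{c^2}\Bigl[\frac{(\hat d^{\pi_e}-d)^2}{q^2}+\frac{w^2(\hat q-q)^2}{q^2}\Bigr].
\]
Multiplying by $q$ and summing, or integrating, over $\mathcal T$ yields exactly the stated bound.

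The main obstacle is the degenerate case: points where $q=0$ but $\hat d^{\pi_e}>0$. There the hypothesis $\hat q\ge c\,q$ gives no positivity, and the right-hand integrand $(\hat d-d)^2/q$ is $+\infty$. Those points contribute zero to the left side, since the $L^2(q)$ norm has weight $q=0$, so the inequality holds trivially under the convention $t/0=+\infty$ for $t>0$. Points with $q=0=\hat d$ contribute zero to both sides. With these conventions stated, the pointwise argument covers every point and the proof is complete.
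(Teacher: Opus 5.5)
Your proof is correct and is essentially the paper's argument: the same add-and-subtract of $d^{\pi_e}/\hat q$, the inequality $(a+b)^2\le 2a^2+2b^2$, the bound $\hat q\ge c\,q$, and integration against $q$; applying $\hat q\ge c\,q$ before rather than after multiplying by $q$ is only a reordering. Your explicit handling of points with $q=0<\hat d^{\pi_e}$ adds care that the paper leaves implicit when it divides by $q$.
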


\begin{corollary}[Relative error control]
\label{cor:relative_error}
Under the assumptions of Proposition~\ref{prop:plugin_ratio}, suppose further that $\mathrm{supp}(\hat d^{\pi_e})\subseteq\mathrm{supp}(d^{\pi_e})$ and
\[
|\hat{d}^{\pi_e} - d^{\pi_e}| \le \rho_p \, d^{\pi_e},
\qquad
|\hat q - q| \le \rho_q \, q
\quad \text{on } \mathrm{supp}(d^{\pi_e}).
\]
Then
\[
\|\hat w - w\|_{L^2(q)}^2
\le
\frac{2}{c^2}(\rho_p^2 + \rho_q^2)\, \mathbb E_q[w^2].
\]
\end{corollary}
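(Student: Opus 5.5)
The statement to prove is Corollary~\ref{cor:relative_error}. The plan is to start from Proposition~\ref{prop:plugin_ratio} and bound each of its two integrals pointwise using the relative error conditions. Since $\mathrm{supp}(\hat d^{\pi_e})\subseteq\mathrm{supp}(d^{\pi_e})$, the union $\mathrm{supp}(d^{\pi_e})\cup\mathrm{supp}(\hat d^{\pi_e})$ equals the target support. So the hypothesis $\hat q\ge c\,q$ is only needed there, and Proposition~\ref{prop:plugin_ratio} applies directly. Both integrands vanish off $\mathrm{supp}(d^{\pi_e})$:
\begin{itemize}
\item the first because $\hat d^{\pi_e}=d^{\pi_e}=0$ there;
\item the second because $w=0$ there.
\end{itemize}
Hence both integrals may be restricted to the target support. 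On that support $q>0$, since $d^{\pi_e}=qw$ with $d^{\pi_e}>0$ forces $q>0$.

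For the first term, I apply $|\hat d^{\pi_e}-d^{\pi_e}|\le\rho_p d^{\pi_e}$ pointwise:
\[
\int\frac{(\hat d^{\pi_e}-d^{\pi_e})^2}{q}\le\rho_p^2\int\frac{(d^{\pi_e})^2}{q}=\rho_p^2\int q\,w^2=\rho_p^2\,\mathbb E_q[w^2],
\]
using $d^{\pi_e}=qw$ on the support.

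For the second term, $|\hat q-q|\le\rho_q q$ gives $(\hat q-q)^2/q\le\rho_q^2 q$. Therefore
\[
\int w^2(\hat q-q)^2/q\le\rho_q^2\int w^2 q=\rho_q^2\,\mathbb E_q[w^2].
\]

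Adding the two bounds and multiplying by $2/c^2$ gives the claim.

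The only real point of care is the support bookkeeping: the assumptions hold only on the target support, so I need to check that no contribution arises outside it. The support inclusion for $\hat d^{\pi_e}$ and the convention $w=0$ off the target support handle this. No other step poses difficulty.
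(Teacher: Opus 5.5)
Your proposal is correct and follows the paper's proof essentially step for step. You square the relative-error bounds on the target support, use the support inclusion and the convention $w=0$ to remove any contribution off $\mathrm{supp}(d^{\pi_e})$, and bound the two integrals of Proposition~\ref{prop:plugin_ratio} by $\rho_p^2\,\mathbb E_q[w^2]$ and $\rho_q^2\,\mathbb E_q[w^2]$.

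One small quibble, which does not affect the argument: positivity of $q$ on $\mathrm{supp}(d^{\pi_e})$ is the standing coverage assumption under which $w=d^{\pi_e}/q$ is defined at all. It is not something you can deduce from $d^{\pi_e}=qw$.
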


Under the relative-error conditions of Corollary~\ref{cor:relative_error}, $\Psi(q)=\mathbb E_q[w^2]$ controls an upper bound on ratio-estimation error. With common error controls across designs, reducing $\Psi(q)$ reduces this bound and the oracle variance upper bound in Section~\ref{sec:var_just}.

The proofs are in Appendices~\ref{app:ratio_proof} and~\ref{app:relative_error_proof}.

\begingroup

\subsection{Coordinate Concentration under Sequential Sampling}
\label{app:coordinate_concentration}

The following lemma is used in the proof of Theorem~\ref{thm:final-mse}. Let $\mathcal T=\mathrm{supp}(d_H^{\pi_e})$ and $m=|\mathcal T|$.

\begin{lemma}[Coordinate concentration under sequential sampling]
\label{lem:adaptive_coordinate}
Let $\mathcal H_i$ be a filtration indexed by trajectories. For each $z\in\mathcal T$, suppose $Y_i(z)$ is $\mathcal H_i$-measurable, $|Y_i(z)|\le1$, and
\[
\mathbb E[Y_i(z)^2\mid\mathcal H_{i-1}]\le p_i(z),
\qquad 0\le p_i(z)\le1,
\]
where $p_i(z)$ is $\mathcal H_{i-1}$-measurable. Write
\[
S_n(z)=\sum_{i=1}^n\{Y_i(z)-\mathbb E[Y_i(z)\mid\mathcal H_{i-1}]\},
\qquad q(z)=\frac1n\sum_{i=1}^n p_i(z).
\]
For $\zeta\in(0,1)$ and $\ell=\log\{2\mathrm e\,m[1+\lceil\log_2 n\rceil]/\zeta\}$, with probability at least $1-\zeta$,
\[
\frac{|S_n(z)|}{n}\le 2\sqrt{\frac{q(z)\ell}{n}}+\frac{4\ell}{n}
\qquad\text{for all }z\in\mathcal T.
\]
\end{lemma}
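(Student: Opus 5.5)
The plan is a martingale Freedman-type inequality with variance proxy $\sum_i p_i(z)=nq(z)$. The proxy is random, so we peel over dyadic shells of $nq(z)$ and then take a union bound over $z\in\mathcal T$.

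\textbf{Increments and variance.} Fix $z$ and set $X_i=Y_i(z)-\mathbb E[Y_i(z)\mid\mathcal H_{i-1}]$. This is a martingale difference sequence with $|X_i|\le 2$. Its conditional variance satisfies $\mathbb E[X_i^2\mid\mathcal H_{i-1}]\le \mathbb E[Y_i(z)^2\mid\mathcal H_{i-1}]\le p_i(z)$. Writing $V_n=\sum_i p_i(z)=nq(z)\in[0,n]$, Freedman's inequality gives, for any fixed $v>0$ and $x>0$,
\[
\Pr\bigl(|S_n(z)|\ge x,\ V_n\le v\bigr)\le 2\exp\!\Bigl(-\frac{x^2}{2(v+2x/3)}\Bigr).
\]
In the form used below: for fixed $v$, with probability at least $1-2e^{-u}$, either $V_n>v$ or $|S_n(z)|\le \sqrt{2vu}+\tfrac{4}{3}u$.

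\textbf{Peeling.} Take dyadic levels $v_j=2^{j}$ for $j=0,1,\ldots,\lceil\log_2 n\rceil$, so the top level is $v_J\ge n$. This gives $1+\lceil\log_2 n\rceil$ levels, matching the factor inside $\ell$.

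\emph{Small variance.} On the event $V_n\le 1$, the level-$0$ bound gives $|S_n|\le\sqrt{2u}+\tfrac43 u$.

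\emph{Larger variance.} If $2^{j-1}<V_n\le 2^j$, then $v_j\le 2V_n$. The level-$j$ bound therefore gives
\[
|S_n|\le\sqrt{4V_n u}+\tfrac43u=2\sqrt{nq(z)u}+\tfrac43 u .
\]

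\emph{Absorbing the small case.} The case $V_n\le1$ is covered because $\sqrt{2u}\le 2u$ whenever $u\ge 1/2$. Hence $|S_n|\le 2\sqrt{nq(z)u}+4u$ holds in every case.

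\textbf{Union bound.} Take a union over the $1+\lceil\log_2 n\rceil$ levels and the $m$ coordinates $z$. Choosing $u$ so that $2m(1+\lceil\log_2 n\rceil)e^{-u}\le\zeta$ gives the total failure probability. The stated $\ell=\log\{2e\,m[1+\lceil\log_2 n\rceil]/\zeta\}$ equals that $u$ plus $1$, so using $u=\ell$ is valid with room to spare. The extra $+1$, coming from the factor $e$, also ensures $\ell\ge1$, which justifies the absorption of the small-variance case. Dividing by $n$ yields
\[
\frac{|S_n(z)|}{n}\le 2\sqrt{q(z)\ell/n}+4\ell/n
\]
simultaneously for all $z\in\mathcal T$.

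\textbf{Main obstacle.} The delicate step is the peeling. The variance proxy $V_n$ is random and predictable, so Freedman's inequality can only be applied at deterministic levels. The constants $2$ and $4$ must also come out exactly. The doubling loses a factor $2$ inside the square root, turning $\sqrt{2vu}$ into $\sqrt{4V_nu}=2\sqrt{V_nu}$. The range bound $|X_i|\le2$ fixes the linear term at $\tfrac{2\cdot 2}{3}u\le 4u$, which leaves enough slack to absorb the $V_n\le1$ case. I would state the one-sided Freedman bound for increments bounded by $2$ and check these constants explicitly.
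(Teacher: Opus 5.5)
Your proposal is correct and follows the paper's proof essentially step for step. Both arguments use Freedman's inequality (Tropp's form) with increments bounded by $2$ and predictable variance dominated by $D_n(z)=nq(z)$, peel over $[0,1]$ and the shells $(2^{j-1},2^j]$ for $j\le\lceil\log_2 n\rceil$, take a union bound over shells and the $m$ coordinates, and use $\ell\ge1$ to absorb lower-order terms into $4\ell$. The only cosmetic difference is that you treat the case $V_n\le1$ separately, whereas the paper handles every shell at once via $v\le 2[D_n(z)+1]$.
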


The proof is in Appendix~\ref{app:coordinate_concentration_proof}.
\endgroup

\section{Proofs}
\label{app:sec5}
\label{app:proofs}

\subsection{Proof of Theorem~\ref{thm:flow_reach_gap}: Limits of Probe Design}
\label{app:design_limits_proof}

\begin{proof}
Write $p=d^{\pi_e}$ and $b=d^{\pi_b}$. We interpret each summand $p(s,a)^2/q(s,a)$ as zero when $p(s,a)=0$, and as $+\infty$ when $p(s,a)>0$ and $q(s,a)=0$.

\textit{Bounds and attainment.}
Since $\Psi(q)=1+\chi^2(p\|q)$, every probability distribution $q$ satisfies $\Psi(q)\ge1$, with equality if and only if $q=p$. Choosing $x=p\in\mathcal D_{\mathrm{flow}}$ gives
\[
\Psi\bigl((1-\alpha)b+\alpha p\bigr)
\le\sum_{(s,a):p(s,a)>0}\frac{p(s,a)^2}{\alpha p(s,a)}
=\frac1\alpha.
\]
The set $\mathcal Q_\alpha$ is compact, and the extended-valued objective $\Psi$ is lower semicontinuous. The finite upper bound therefore ensures that its minimum is attained. In particular, $1\le\Psi^\star\le1/\alpha$.

\textit{Characterization of equality.}
If $\Psi^\star=1$, a minimizer must equal $p$. Every $q\in\mathcal Q_\alpha$ satisfies $q\ge(1-\alpha)b$ coordinatewise, so $p\ge(1-\alpha)b$ is necessary.

Conversely, suppose $p\ge(1-\alpha)b$ and set
\[
x^\star=\frac{p-(1-\alpha)b}{\alpha}.
\]
This vector is nonnegative and sums to one. Subtracting $(1-\alpha)$ times the Bellman flow equation for $b$ from the equation for $p$, then dividing by $\alpha$, gives
\[
\sum_a x^\star(s,a)
=(1-\gamma)d_0(s)
+\gamma\sum_{s',a'}x^\star(s',a')P(s\mid s',a')
\quad\text{for every }s.
\]
Hence $x^\star\in\mathcal D_{\mathrm{flow}}$, and $(1-\alpha)b+\alpha x^\star=p$ achieves $\Psi^\star=1$. If the coordinatewise condition fails, no feasible mixture equals $p$. Since the minimum is attained, it is then strictly larger than one.
\end{proof}

\subsection{Proof of Proposition~\ref{prop:kkt_flow}: Optimal Probe Occupancy}
\label{app:kkt_proof}

\begin{proof}
Write $p=d^{\pi_e}$, $b=d^{\pi_b}$, and $q_x=(1-\alpha)b+\alpha x$. Define
\[
F(x)=\Psi(q_x)=\sum_{s,a}\frac{p(s,a)^2}{q_x(s,a)},
\qquad x\in\mathcal D_{\mathrm{flow}},
\]
with zero contribution whenever $p(s,a)=0$. Choosing $x=p$ gives a finite objective, so every minimizer $x^*$ has $q_{x^*}>0$ on the target support. The objective is convex and differentiable near $x^*$, with derivative zero on coordinates where $p=0$. Since the feasible set is a polytope, the KKT conditions apply.

Associate multipliers $\lambda_s$ with the flow constraints and $\eta(s,a)\ge0$ with $x(s,a)\ge0$. Use the Lagrangian
\[
\begin{aligned}
\mathcal L(x,\lambda,\eta)
={}&F(x)+\sum_s\lambda_s\left[
\sum_a x(s,a)-(1-\gamma)d_0(s)
-\gamma\sum_{s',a'}x(s',a')P(s\mid s',a')\right]\\
&-\sum_{s,a}\eta(s,a)x(s,a).
\end{aligned}
\]
Writing $q^*=q_{x^*}$, stationarity at $x^*$ gives
\[
-\alpha\left(\frac{p(s,a)}{q^*(s,a)}\right)^2
+\lambda_s-\gamma\sum_{s'}\lambda_{s'}P(s'\mid s,a)-\eta(s,a)=0,
\]
where the ratio is zero when $p(s,a)=0$. Hence
\begin{equation}
\label{eq:kkt_raw}
\alpha\left(\frac{p(s,a)}{q^*(s,a)}\right)^2
=\lambda_s-\gamma\sum_{s'}\lambda_{s'}P(s'\mid s,a)-\eta(s,a).
\end{equation}

The remaining conditions are $x^*(s,a)\ge0$, $\eta(s,a)\ge0$, and $\eta(s,a)x^*(s,a)=0$, together with flow feasibility. Since $\eta(s,a)\ge0$, equation~\eqref{eq:kkt_raw} gives the stated inequality. If $x^*(s,a)>0$, complementary slackness forces $\eta(s,a)=0$, giving equality. If the inequality is strict, then $\eta(s,a)>0$ and therefore $x^*(s,a)=0$.
\end{proof}

\subsection{Proof of Proposition~\ref{thm:exact_decomp}: Exact Decomposition of Occupancy-Mismatch Reduction}
\label{app:exact_decomposition_proof}

\begin{proof}
We use the notation of Appendix~\ref{app:probing_gain}.
Write $p=d^{\pi_e}$ and $b=d^{\pi_b}$. Since $q_\alpha^\mu=b+\alpha(x-b)$,
\[
\begin{aligned}
\Psi(b)-\Psi(q_\alpha^\mu)
&=\sum_{(s,a)\in\mathcal T}p(s,a)^2
\left(\frac{1}{b(s,a)}-\frac{1}{q_\alpha^\mu(s,a)}\right)\\
&=\alpha\sum_{(s,a)\in\mathcal T}
\frac{p(s,a)^2\bigl(x(s,a)-b(s,a)\bigr)}
{b(s,a)\,q_\alpha^\mu(s,a)}.
\end{aligned}
\]
\end{proof}

\subsection{Proof of Proposition~\ref{thm:first_order}: Small-Budget Expansion of Occupancy-Mismatch Reduction}
\label{app:first_order_proof}

\begin{proof}
We use the notation of Appendix~\ref{app:probing_gain}.
Write $p=d^{\pi_e}$ and $b=d^{\pi_b}$, and let $h=x-b$ and $b_{\min}=\min_{(s,a)\in\mathcal T}b(s,a)>0$. Coordinatewise on $\mathcal T$,
\[
\frac{1}{b+\alpha h}
=\frac1b-\frac{\alpha h}{b^2}
+\frac{\alpha^2h^2}{b^2(b+\alpha h)}.
\]
Multiplying by $p^2$ and summing gives
\[
\Psi(b)-\Psi(b+\alpha h)
=\alpha\sum_{\mathcal T}\frac{p^2h}{b^2}
-\alpha^2\sum_{\mathcal T}\frac{p^2h^2}{b^2(b+\alpha h)}.
\]
For $0\le\alpha\le1/2$, we have $b+\alpha h=(1-\alpha)b+\alpha x\ge b/2$. Since $|h|\le1$ and $\sum_{\mathcal T}p^2\le1$, the absolute remainder is at most $2\alpha^2/b_{\min}^3$, independently of $x$. This proves the uniform expansion.
\end{proof}

\subsection{Proof of Theorem~\ref{thm:mse_decomp}: MSE Bound for the Plug-in MIS Estimator}
\label{app:fixed_mse_proof}

\begin{proof}
Write $q=q_\alpha^\mu$, $w=d^{\pi_e}/q$, and let $\widehat J(w)$ denote the oracle version of~\eqref{eq:MIS}. For a function $g$ on the state--action space, define the empirical discounted average over the two fixed sampling groups by
\[
\mathbb E_n[g]=\frac1n\sum_{i=1}^n\sum_{t\ge0}(1-\gamma)\gamma^t g(s_t^{(i)},a_t^{(i)}).
\]
For deterministic $g$, $\mathbb E[\mathbb E_n[g]]=\mathbb E_q[g]$. The observed-reward estimator is
\[
\widehat J(v)=\frac1n\sum_{i=1}^n\sum_{t\ge0}\gamma^t v(s_t^{(i)},a_t^{(i)})R_t^{(i)}.
\]
Decompose its error as
\[
\widehat J(\widehat w)-J(\pi_e)
=\underbrace{\widehat J(w)-J(\pi_e)}_{A}
+\underbrace{\widehat J(\widehat w)-\widehat J(w)}_{D}.
\]

\textit{Oracle error.}
The mixture identity and the MDP mean reward law give
\[
\mathbb E[\widehat J(w)]
=\frac1{1-\gamma}\sum_{s,a}q(s,a)w(s,a)r(s,a)
=J(\pi_e).
\]
By Theorem~\ref{thm:variance-psi}, whose oracle variance bound requires independence only within each sampling group,
\[
\mathbb E[A^2]=\Var\bigl(\widehat J(w)\bigr)
\le\frac{2R_{\max}^2}{n(1-\gamma)^2}\Psi(q).
\]

\textit{Weight-estimation error.}
On the target support, write $\hat d^{\pi_e}=d^{\pi_e}(1+u)$ and $\hat q=q(1+v)$. The assumptions imply $|u|\le\rho_p$, $|v|\le\rho_q$, and $1+v\ge c$, almost surely. Hence
\[
|\widehat w-w|=w\left|\frac{u-v}{1+v}\right|
\le\frac{\rho_p+\rho_q}{c}\,w.
\]
Both weights vanish outside the target support, so the inequality holds everywhere. Bounded observed rewards give the pathwise bound
\[
|D|\le\frac{R_{\max}}{1-\gamma}\frac{\rho_p+\rho_q}{c}\,\mathbb E_n[w].
\]
The empirical weights defining $\mathbb E_n$ sum to one. Jensen's inequality and $\mathbb E[\mathbb E_n[w^2]]=\Psi(q)$ therefore yield
\[
\mathbb E[D^2]
\le\frac{R_{\max}^2(\rho_p+\rho_q)^2}{c^2(1-\gamma)^2}\Psi(q)
\le\frac{2C_0R_{\max}^2}{c^2n(1-\gamma)^2}\Psi(q).
\]
No independence between $\widehat w$ and the evaluation data is needed for this pointwise argument. Finally, $(A+D)^2\le2A^2+2D^2$ gives
\[
\mathbb E\!\left[\bigl(\widehat J(\widehat w)-J(\pi_e)\bigr)^2\right]
\le\frac{4R_{\max}^2}{n(1-\gamma)^2}
\left(1+\frac{C_0}{c^2}\right)\Psi(q),
\]
which proves the claimed bound.
If a designated terminal state has identically zero subsequent rewards, set $w^{-}(s,a)=w(s,a)\mathbf1\{s\in\mathcal S^{-}\}$ and $\widehat w^{-}(s,a)=\widehat w(s,a)\mathbf1\{s\in\mathcal S^{-}\}$. This leaves both estimators unchanged. The oracle variance argument and the weight-estimation argument above apply to these restricted weights, with $\mathbb E_q[(w^{-})^2]=\sum_{s\in\mathcal S^{-},\,a\in\mathcal A}d^{\pi_e}(s,a)^2/q(s,a)$ in place of $\Psi(q)$. Thus the same constant gives the stated extension to zero-reward terminal states.
\end{proof}

\subsection{Proof of Theorem~\ref{thm:plugin-mis-short}: Central Limit Theorem for the Plug-in MIS Estimator}
\label{app:clt_proof}

\begin{proof}
Write $q=q_\alpha^\mu$ and let $\widehat J(w)$ be the oracle estimator.

\textit{Oracle fluctuation.}
The two groups have means $m_b=\sum_{s,a}d^{\pi_b}(s,a)w(s,a)r(s,a)$ and $m_\mu=\sum_{s,a}d^\mu(s,a)w(s,a)r(s,a)$. Their weighted average satisfies
\[
(1-\alpha)m_b+\alpha m_\mu
=\sum_{s,a}q(s,a)w(s,a)r(s,a)
=(1-\gamma)J(\pi_e).
\]
Consequently,
\[
\begin{aligned}
\sqrt n(1-\gamma)\bigl\{\widehat J(w)-J(\pi_e)\bigr\}
=\frac{1}{\sqrt n}\left[
\sum_{i=1}^{n_{\mathrm{off}}}(Z_i^{(b)}-m_b)
+\sum_{j=1}^{B}(Z_j^{(\mu)}-m_\mu)
\right].
\end{aligned}
\]
Each sum is centered at its own group's mean. The classical CLT applied to the two independent groups gives convergence to $\mathcal N(0,(1-\alpha)\sigma_b^2+\alpha\sigma_\mu^2)$, including a degenerate limit when $\sigma^2=0$.

\textit{Plug-in remainder.}
Let
\[
D_n=(1-\gamma)\bigl\{\widehat J(\widehat w)-\widehat J(w)\bigr\},
\qquad
U^{(\nu)}=(1-\gamma)\sum_{t\ge0}\gamma^t
(\widehat w-w)(s_t^{(\nu)},a_t^{(\nu)})R_t^{(\nu)}.
\]
Conditional on $\widehat w$, the evaluation trajectories are independent with their original policy laws. Therefore,
\[
b_n:=\mathbb E[D_n\mid\widehat w]
=\sum_{s,a}(\widehat w-w)(s,a)r(s,a)q(s,a),
\qquad
\sqrt n\,b_n\xrightarrow{p}0.
\]
Jensen's inequality with weights $(1-\gamma)\gamma^t$ and bounded observed rewards gives
\[
\mathbb E[(U^{(\nu)})^2\mid\widehat w]
\le R_{\max}^2\|\widehat w-w\|_{L^2(d^\nu)}^2.
\]
Combining the two groups yields
\[
\begin{aligned}
n\Var(D_n\mid\widehat w)
&\le R_{\max}^2\left[(1-\alpha)\|\widehat w-w\|_{L^2(d^{\pi_b})}^2
+\alpha\|\widehat w-w\|_{L^2(d^\mu)}^2\right]\\
&=R_{\max}^2\|\widehat w-w\|_{L^2(q)}^2\xrightarrow{p}0.
\end{aligned}
\]
For any $u>0$, conditional Chebyshev gives
\[
\Pr\{\sqrt n|D_n-b_n|>u\}
\le\mathbb E\!\left[\min\left\{1,
\frac{R_{\max}^2\|\widehat w-w\|_{L^2(q)}^2}{u^2}\right\}\right]
\longrightarrow0,
\]
since the integrand is bounded and converges to zero in probability. Thus $\sqrt nD_n\xrightarrow{p}0$. Combining this with the oracle CLT and applying Slutsky's theorem proves the result.
\end{proof}

\subsection{Proof of Theorem~\ref{thm:variance-psi}: Variance Upper Bound}
\label{app:variance_proof}

\begin{proof}
Let $\beta_t=(1-\gamma)\gamma^t$, so $\sum_{t\ge0}\beta_t=1$. Jensen's inequality gives the pathwise bound
\[
\bigl(Z^{(\nu)}\bigr)^2
=\left(\sum_{t\ge0}\beta_t w(s_t,a_t)R_t\right)^2
\le\sum_{t\ge0}\beta_t w(s_t,a_t)^2R_t^2.
\]
Taking expectations and using the conditional second-moment bound,
\[
\mathbb E_\nu[(Z^{(\nu)})^2]
\le\overline m_2\sum_{s,a}d^\nu(s,a)w(s,a)^2.
\]
Since $\Var(Z^{(\nu)})\le\mathbb E_\nu[(Z^{(\nu)})^2]$, the mixture identity yields
\[
\begin{aligned}
\sigma^2
&\le\overline m_2\sum_{s,a}
\bigl[(1-\alpha)d^{\pi_b}(s,a)+\alpha d^\mu(s,a)\bigr]w(s,a)^2\\
&=\overline m_2\sum_{s,a}q_\alpha^\mu(s,a)w(s,a)^2
=\overline m_2\Psi(q_\alpha^\mu).
\end{aligned}
\]
This argument allows dependence across time within a trajectory. Write
\[
S_b=\sum_{i=1}^{n_{\mathrm{off}}}(Z_i^{(b)}-m_b),
\qquad
S_\mu=\sum_{j=1}^{B}(Z_j^{(\mu)}-m_\mu).
\]
Independence within each sample gives $\Var(S_b)=n_{\mathrm{off}}\sigma_b^2$ and $\Var(S_\mu)=B\sigma_\mu^2$. For arbitrary dependence between the samples,
\[
\begin{aligned}
\Var\bigl(\widehat J(w)\bigr)
&=\frac{\mathbb E[(S_b+S_\mu)^2]}{n^2(1-\gamma)^2}\\
&\le\frac{2\mathbb E[S_b^2]+2\mathbb E[S_\mu^2]}{n^2(1-\gamma)^2}
=\frac{2\sigma^2}{n(1-\gamma)^2},
\end{aligned}
\]
which proves the second bound. Under the sampling conditions of Appendix~\ref{app:clt_conditions} and Assumption~\ref{ass:clt}, $\sigma^2$ is the limiting variance of the normalized plug-in estimator.
\end{proof}

\subsection{Proof of Proposition~\ref{prop:plugin_ratio}: Occupancy-Ratio Error Decomposition}
\label{app:ratio_proof}

\begin{proof}
We decompose
\[
\hat w - w
=
\frac{\hat{d}^{\pi_e}}{\hat q} - \frac{d^{\pi_e}}{q}
=
\frac{\hat{d}^{\pi_e} - d^{\pi_e}}{\hat q}
+
d^{\pi_e}\left(\frac{1}{\hat q} - \frac{1}{q}\right)
=
\frac{\hat{d}^{\pi_e} - d^{\pi_e}}{\hat q}
+
\frac{d^{\pi_e}(q - \hat q)}{q \hat q}.
\]
Using $(a+b)^2 \le 2a^2 + 2b^2$, we obtain
\[
(\hat w - w)^2
\le
2 \frac{(\hat{d}^{\pi_e} - d^{\pi_e})^2}{\hat q^2}
+
2 (d^{\pi_e})^2 \frac{(\hat q - q)^2}{q^2 \hat q^2}.
\]
Multiplying by $q$ and integrating yields
\[
\|\hat w - w\|_{L^2(q)}^2
\le
2 \int \frac{(\hat{d}^{\pi_e} - d^{\pi_e})^2 q}{\hat q^2}
+
2 \int (d^{\pi_e})^2 \frac{(\hat q - q)^2}{q \hat q^2}.
\]
Using $\hat q \ge c q$, we have
\[
\frac{q}{\hat q^2} \le \frac{1}{c^2 q},
\qquad
\frac{1}{q \hat q^2} \le \frac{1}{c^2 q^3},
\]
which gives
\[
\|\hat w - w\|_{L^2(q)}^2
\le
\frac{2}{c^2}
\left[
\int \frac{(\hat{d}^{\pi_e} - d^{\pi_e})^2}{q}
+
\int \frac{(d^{\pi_e})^2}{q^3}(\hat q - q)^2
\right].
\]
Finally, noting that $(d^{\pi_e})^2/q^2 = w^2$ yields the result.
\end{proof}

\subsection{Proof of Corollary~\ref{cor:relative_error}: Relative Error Control for Occupancy Ratios}
\label{app:relative_error_proof}

\begin{proof}
The assumptions $|\hat{d}^{\pi_e}-d^{\pi_e}|\le\rho_p d^{\pi_e}$ and $|\hat{q}-q|\le\rho_q q$ pointwise on $\mathrm{supp}(d^{\pi_e})$ imply $(\hat{d}^{\pi_e}-d^{\pi_e})^2\le\rho_p^2 (d^{\pi_e})^2$ and $(\hat{q}-q)^2\le\rho_q^2 q^2$ respectively.
The support condition ensures that both $\hat d^{\pi_e}$ and $d^{\pi_e}$ vanish outside $\mathrm{supp}(d^{\pi_e})$, so the first inequality also holds there. We therefore bound each term in Proposition~\ref{prop:plugin_ratio}:
\[
\int \frac{(\hat{d}^{\pi_e} - d^{\pi_e})^2}{q}
\le
\rho_p^2 \int \frac{(d^{\pi_e})^2}{q}
=
\rho_p^2 \mathbb E_q[w^2],
\]
and
\[
\int w^2 \frac{(\hat q - q)^2}{q}
\le
\rho_q^2 \int w^2 q
=
\rho_q^2 \mathbb E_q[w^2].
\]
Combining the bounds gives the result.
\end{proof}

\begingroup

\subsection{Proof of Lemma~\ref{lem:adaptive_coordinate}: Coordinate Concentration under Sequential Sampling}
\label{app:coordinate_concentration_proof}

\begin{proof}
We use the notation of Appendix~\ref{app:coordinate_concentration}.
For a fixed $z$, the centered increments have absolute value at most $2$, and their predictable quadratic variation $V_n(z)$ is at most $D_n(z):=\sum_i p_i(z)=nq(z)$. The scalar Freedman inequality, in the form stated in Theorem~1.1 of \citet{tropp2011freedman}, gives
\[
\Pr\left\{|S_n(z)|\ge\sqrt{2v\ell}+\frac{4\ell}{3},\ V_n(z)\le v\right\}\le2e^{-\ell}
\]
for deterministic $v>0$. Indeed, the displayed threshold is at least the positive solution of $t^2=2\ell(v+2t/3)$.

Partition the possible values of $D_n(z)\in[0,n]$ into $[0,1]$ and $(2^{j-1},2^j]$, $j=1,\ldots,\lceil\log_2n\rceil$. Apply the inequality using the upper endpoint $v$ of each interval. In the interval containing $D_n(z)$, $V_n(z)\le D_n(z)\le v\le2[D_n(z)+1]$. A union bound over intervals and the $m$ coordinates has failure probability at most $2m[1+\lceil\log_2n\rceil]e^{-\ell}\le\zeta$. On its complement,
\[
|S_n(z)|\le2\sqrt{[D_n(z)+1]\ell}+4\ell/3
\le2\sqrt{D_n(z)\ell}+4\ell,
\]
where the last inequality uses $\ell\ge1$. Division by $n$ proves the result.
\end{proof}
\endgroup

\begingroup

\subsection{Proof of Theorem~\ref{thm:final-mse}: Adaptive Policy-Evaluation Error Bound}
\label{app:adaptive_mse_proof}

\begin{proof}
Let $\mathcal T=\mathrm{supp}(d_H^{\pi_e})$ and $m=|\mathcal T|$. Write $d=d_H^{\pi_e}$ and $L_H=(1-\gamma^H)/(1-\gamma)$ for this proof. We may assume $R_{\max}>0$, since the claim is immediate otherwise. All trajectories below contain $H$ steps under the padding convention of Section~\ref{sec:background}.

\textit{Sequential trajectory contributions.}
Reveal the trajectories in collection order. Let $\mathcal H_{i-1}$ contain the trajectories already revealed and the randomization used to select the policy for trajectory $i$. The sampling conditions of Theorem~\ref{thm:final-mse} ensure that, conditional on this information, trajectory $i$ follows its selected policy. This also holds when trajectories within a round are revealed one at a time. For $z=(s,a)$, define
\[
\begin{aligned}
Z_i(z)&=\frac1{L_H}\sum_{t=0}^{H-1}\gamma^t
\mathbf1\{(s_t^{(i)},a_t^{(i)})=z\},\\
Y_i(z)&=\frac1{R_{\max}L_H}\sum_{t=0}^{H-1}\gamma^t
\mathbf1\{(s_t^{(i)},a_t^{(i)})=z\}R_t^{(i)}.
\end{aligned}
\]
Each $Z_i$ is a probability vector, $\|Y_i\|_1\le1$, and $Y_i(z)^2\le Z_i(z)\le1$. Let $p_i$ be the population $H$-step occupancy of the policy for trajectory $i$. The MDP reward law gives
\[
\begin{gathered}
\mathbb E[Z_i(z)\mid\mathcal H_{i-1}]=p_i(z),\qquad
\mathbb E[Y_i(z)\mid\mathcal H_{i-1}]=p_i(z)r(z)/R_{\max},\\
\mathbb E[Y_i(z)^2\mid\mathcal H_{i-1}]\le p_i(z).
\end{gathered}
\]
Here $p_i=d_H^{\pi_b}$ for offline trajectories and $p_i=d_H^{\mu_k}$ in round $k$, so $n^{-1}\sum_i p_i=q_{H,K}$ pathwise. Lemma~\ref{lem:adaptive_coordinate} therefore applies.

\textit{Final random weights.}
Define $w^\lambda(z)=d(z)/[q_{H,K}(z)+\lambda]$ and
\[
\mathcal J_H(v)=\frac{R_{\max}L_H}{n}\sum_{i=1}^n\sum_z v(z)Y_i(z).
\]
The estimator in~\eqref{eq:mis_final} is exactly $\mathcal J_H(\widehat w_K)$. Its oracle sampling fluctuation is
\[
\begin{aligned}
V&:=\mathcal J_H(w^\lambda)
-L_H\sum_z\frac{d(z)q_{H,K}(z)}{q_{H,K}(z)+\lambda}r(z)\\
&=R_{\max}L_H\sum_{z\in\mathcal T}
\frac{d(z)}{q_{H,K}(z)+\lambda}\frac{S_n(z)}n.
\end{aligned}
\]
The subtracted quantity is a random population functional, not a conditional expectation given all selected policies. On the simultaneous event $E$ of Lemma~\ref{lem:adaptive_coordinate},
\[
|V|\le R_{\max}L_H\left[
2\sqrt{\frac\ell n}\sum_{z\in\mathcal T}
\frac{d(z)\sqrt{q_{H,K}(z)}}{q_{H,K}(z)+\lambda}
+\frac{4\ell}{n}\sum_{z\in\mathcal T}\frac{d(z)}{q_{H,K}(z)+\lambda}
\right].
\]
Cauchy--Schwarz and $\sum_zd(z)=1$ give
\[
\begin{aligned}
\left(\sum_{z\in\mathcal T}
\frac{d(z)\sqrt{q_{H,K}(z)}}{q_{H,K}(z)+\lambda}\right)^2
&\le m\Psi_{H,\lambda}(q_{H,K}),\\
\sum_{z\in\mathcal T}\frac{d(z)}{q_{H,K}(z)+\lambda}&\le1/\lambda.
\end{aligned}
\]
Hence, on $E$,
\[
V^2\le R_{\max}^2L_H^2\left[
\frac{8m\ell}{n}\Psi_{H,\lambda}(q_{H,K})
+\frac{32\ell^2}{n^2\lambda^2}\right].
\]
Concentration is established simultaneously for all target-support coordinates before applying the final data-dependent weights.

\textit{Estimated weights.}
Let $a_K=\delta_{H,K}/\lambda+\tau_{H,K}/\lambda^2$. Since both denominators are at least $\lambda$,
\[
\begin{aligned}
\widehat w_K-w^\lambda
&=\frac{\widehat d^{\pi_e}_{H,K}-d}{\widehat q_K+\lambda}
+\frac{d(q_{H,K}-\widehat q_K)}{(\widehat q_K+\lambda)(q_{H,K}+\lambda)},\\
\|\widehat w_K-w^\lambda\|_\infty&\le a_K.
\end{aligned}
\]
Therefore, using $\|Y_i\|_1\le1$,
\[
|\mathcal J_H(\widehat w_K)-\mathcal J_H(w^\lambda)|
\le R_{\max}L_H a_K.
\]
This pathwise bound requires no independence between the occupancy estimates and the evaluation data. Both estimates and their population counterparts use the same $H$-step normalization.

\textit{Regularization.}
The identity $J_H(\pi_e)=L_H\sum_zd(z)r(z)$ implies
\[
L_H\sum_z\frac{d(z)q_{H,K}(z)}{q_{H,K}(z)+\lambda}r(z)-J_H(\pi_e)
=\mathrm{Bias}_{H,\lambda}(q_{H,K}).
\]
This is a random regularization offset and need not equal the unconditional bias of the estimator.

\textit{Mean squared error.}
On $E$, the preceding decomposition and $(u+v+w)^2\le3(u^2+v^2+w^2)$ give
\[
\begin{aligned}
\bigl(\widehat J(\widehat w_K)-J_H(\pi_e)\bigr)^2
\le{}&R_{\max}^2L_H^2\left[
\frac{24m\ell}{n}\Psi_{H,\lambda}(q_{H,K})
+\frac{96\ell^2}{n^2\lambda^2}+3a_K^2\right]\\
&+3\mathrm{Bias}_{H,\lambda}(q_{H,K})^2.
\end{aligned}
\]
On $E^c$, $|\widehat J(\widehat w_K)|\le R_{\max}L_H/\lambda$ and $|J_H(\pi_e)|\le R_{\max}L_H$. Since $\Pr(E^c)\le\zeta$, this event contributes at most $\zeta R_{\max}^2L_H^2(1+1/\lambda)^2$ to the MSE. Integrating the bound on $E$ and dropping indicators from nonnegative terms proves the theorem.
\end{proof}
\endgroup

\begingroup

\subsection{Proof of Corollary~\ref{cor:adaptive_consistency}: Mean-Square Consistency of Adaptive Policy Evaluation}
\label{app:adaptive_consistency_proof}

\begin{proof}
Take $\zeta=n^{-3}$ for $n\ge2$. Since the MDP and $H$ are fixed, $m$ is fixed and $\ell=O(\log n)$. Because $d_H^{\pi_e}$ is a probability distribution,
\[
\Psi_{H,\lambda_n}(q_{H,K})
\le\frac1{\lambda_n}\sum_{s,a}d_H^{\pi_e}(s,a)^2
\le\frac1{\lambda_n}.
\]
Consequently,
\[
\frac{m\ell}{n}\mathbb E[\Psi_{H,\lambda_n}(q_{H,K})]
\le\frac{m\ell}{n\lambda_n}\longrightarrow0.
\]
Writing $a_K:=\delta_{H,K}/\lambda_n+\tau_{H,K}/\lambda_n^2$, the occupancy-estimation assumptions give
\[
\mathbb E[a_K^2]
\le2\frac{\mathbb E[\delta_{H,K}^2]}{\lambda_n^2}
+2\frac{\mathbb E[\tau_{H,K}^2]}{\lambda_n^4}
\longrightarrow0.
\]
The condition $n\lambda_n/\log n\to\infty$ implies $\ell^2/(n^2\lambda_n^2)\to0$ and $n^{-3}(1+1/\lambda_n)^2\to0$.

It remains to control the regularization offset. On $\{c_{H,K}\ge c\}$,
\[
|\mathrm{Bias}_{H,\lambda_n}(q_{H,K})|
\le R_{\max}\frac{1-\gamma^H}{1-\gamma}\frac{\lambda_n}{c+\lambda_n},
\]
and its absolute value is always at most $R_{\max}(1-\gamma^H)/(1-\gamma)$. Therefore,
\[
\mathbb E[\mathrm{Bias}_{H,\lambda_n}(q_{H,K})^2]
\le R_{\max}^2\left(\frac{1-\gamma^H}{1-\gamma}\right)^2
\left[\left(\frac{\lambda_n}{c+\lambda_n}\right)^2+\Pr(c_{H,K}<c)\right]
\longrightarrow0.
\]
Every term in Theorem~\ref{thm:final-mse} now tends to zero, proving mean-square consistency.
\end{proof}
\endgroup

\begingroup

\subsection{Proof of Corollary~\ref{cor:continuous_extension}: Continuous-State Policy-Evaluation Error Bound}
\label{app:continuous_extension_proof}

\begin{proof}
Fix $\varphi$ independently of the trajectories used for estimation, as required by Corollary~\ref{cor:continuous_extension}. Let $L_H=(1-\gamma^H)/(1-\gamma)$, let $\mathcal T=\{(j,a):1\le j\le K_\varphi,\ d_H^{\pi_e,\varphi}(j,a)>0\}$, and suppose $m=|\mathcal T|\ge1$.

\textit{Sequential trajectory contributions.}
For trajectory $i$ and a non-terminal cell--action pair $z=(j,a)$, define
\[
\begin{aligned}
Z_i(z)&=\frac1{L_H}\sum_{t=0}^{H-1}\gamma^t
\varphi_j(s_t^{(i)})\mathbf1\{a_t^{(i)}=a\},\\
Y_i(z)&=\frac1{R_{\max}L_H}\sum_{t=0}^{H-1}\gamma^t
\varphi_j(s_t^{(i)})\mathbf1\{a_t^{(i)}=a\}R_t^{(i)}.
\end{aligned}
\]
The case $R_{\max}=0$ is immediate. Nonnegative aggregation weights and the zero-reward terminal convention give $\sum_zZ_i(z)\le1$, $\|Y_i\|_1\le1$, and $Y_i(z)^2\le Z_i(z)\le1$. Reveal trajectories in collection order using the filtration from the proof of Theorem~\ref{thm:final-mse}. If $\pi_i$ is the policy for trajectory $i$, write $p_i(z)=d_H^{\pi_i,\varphi}(j,a)$. The conditional moments satisfy
\[
\begin{aligned}
\mathbb E[Z_i(z)\mid\mathcal H_{i-1}]&=p_i(z),\\
\mathbb E[Y_i(z)\mid\mathcal H_{i-1}]
&=\frac{p_i(z)\bar r_H^{\pi_i,\varphi}(j,a)}{R_{\max}},\\
\mathbb E[Y_i(z)^2\mid\mathcal H_{i-1}]&\le p_i(z).
\end{aligned}
\]
When $p_i(z)=0$, the second identity holds with both sides zero. Also, $n^{-1}\sum_i p_i=q_{H,K}^{\varphi}$ pathwise. Lemma~\ref{lem:adaptive_coordinate} therefore yields the same coordinate concentration bound on $\mathcal T$.

\textit{Sampling and estimated-weight errors.}
The target and estimated non-terminal cell occupancies each have mass at most one. Thus the bounds for the final random weights and estimated weights in the proof of Theorem~\ref{thm:final-mse} remain valid. The estimator in~\eqref{eq:continuous_mis_final} is exactly
\[
\widehat J(\widehat w_K^{\varphi})
=\frac{R_{\max}L_H}{n}\sum_{i=1}^n\sum_z\widehat w_K^{\varphi}(z)Y_i(z).
\]
\textit{Aggregation error.}
Because non-terminal aggregation weights sum to one and terminal rewards are zero,
\[
J_H(\pi_e)
=L_H\sum_{j=1}^{K_\varphi}\sum_{a\in\mathcal A}
d_H^{\pi_e,\varphi}(j,a)\bar r_H^{\pi_e,\varphi}(j,a).
\]
Write $d(z)=d_H^{\pi_e,\varphi}(j,a)$ and $w_\lambda(z)=d(z)/(q_{H,K}^{\varphi}(z)+\lambda)$. The difference between the predictable reward contribution and its counterpart using the target cell means is
\[
\Delta_{H,K}
=\frac{L_H}{n}\sum_{i=1}^n\sum_{z\in\mathcal T}
w_\lambda(z)p_i(z)
\bigl\{\bar r_H^{\pi_i,\varphi}(j,a)-\bar r_H^{\pi_e,\varphi}(j,a)\bigr\}.
\]
The assumed reward bound and $n^{-1}\sum_i p_i=q_{H,K}^{\varphi}$ imply, pathwise,
\[
|\Delta_{H,K}|
\le L_H\eta\sum_{z\in\mathcal T}
\frac{d(z)q_{H,K}^{\varphi}(z)}{q_{H,K}^{\varphi}(z)+\lambda}
\le L_H\eta.
\]
\textit{Mean squared error.}
Centering $Y_i$ at its own conditional mean gives the same concentration and estimated-weight bounds as in the proof of Theorem~\ref{thm:final-mse}. The predictable reward contribution minus $J_H(\pi_e)$ is now $\mathrm{Bias}_{H,\lambda}^{\varphi}(q_{H,K}^{\varphi})+\Delta_{H,K}$. Applying the same three-term squared-error inequality on the concentration event gives the bound with its last term replaced by $3\{\mathrm{Bias}_{H,\lambda}^{\varphi}(q_{H,K}^{\varphi})+\Delta_{H,K}\}^2$. Using
\[
3\{\mathrm{Bias}_{H,\lambda}^{\varphi}(q_{H,K}^{\varphi})+\Delta_{H,K}\}^2
\le6\mathrm{Bias}_{H,\lambda}^{\varphi}(q_{H,K}^{\varphi})^2+6L_H^2\eta^2
\]
proves the stated bound. The contribution from the complement of the concentration event is unchanged because it uses only bounded rewards and bounded estimated weights. When $\eta=0$, $\Delta_{H,K}=0$, so the original constants apply. The argument uses the original process and does not require the cell process to be Markov.

\textit{Consistency.}
The fixed aggregation and finite action space give a fixed finite $m$. The inequality $\Psi_{H,\lambda_n}^{\varphi}(q_{H,K}^{\varphi})\le1/\lambda_n$ continues to hold because the target cell occupancies have total mass at most one. Also $|\bar r_H^{\pi_e,\varphi}(j,a)|\le R_{\max}$, so the final-coverage condition controls the regularization term exactly as in the proof of Corollary~\ref{cor:adaptive_consistency}. The sampling and occupancy-estimation terms vanish under the stated rate conditions, leaving a limiting upper bound of $6L_H^2\eta^2$. If $\eta=\eta_n\to0$, this term also vanishes, proving mean-square consistency.
\end{proof}
\endgroup

\end{document}